\def \a {{{a}}}
\def \bu {{{b}}}
\def \A {{{\sf{A}}}}
\def \B{{{\sf{B}}}}
\def\res{{{\text{\sf {R}}}}}
\renewcommand{\comment}[1]{}
\definecolor{mygreen}{RGB}{20,120,60}
\title{Beating Greedy For Approximating Reserve Prices\\ in Multi-Unit VCG Auctions}
\author{Mahsa Derakhshan\thanks{Majority of the research was conducted while the author was an Intern at Microsoft Research NYC.}
 \\ University of Maryland  
\and David M. Pennock  \thanks{Majority of the research was conducted while the author was a Principal Researcher at Microsoft Research NYC.}\\ Rutgers University \and
Aleksandrs Slivkins \\ 
Microsoft Research }
\date{}
\definecolor{mygreen}{RGB}{20,140,80}
\definecolor{mylightgray}{RGB}{230,230,230}
\newcounter{myalgctr}
\newcounter{totalfootnotes}
\newcounter{totalfootnotetexts}
\xpatchcmd{\footnotemark}{\stepcounter}{\refstepcounter}{}{}
\xapptocmd{\footnotemark}{\label{fnmark-\number\value{totalfootnotes}}}{}{}
\xpretocmd{\footnote}{\stepcounter{totalfootnotetexts}}{}{}
\NewDocumentCommand{\morefootnotetext}{o+m}{%
  \IfValueTF{#1}{%
    \footnotetext[#1]{#2}%
  }{%
    \stepcounter{totalfootnotetexts}%
    \footnotetext[\getrefnumber{fnmark-\number\value{totalfootnotetexts}}]{#2}
  }%
}
\newenvironment{tbox}{
\vspace{0.2cm}
\begin{tcolorbox}[width=\textwidth,
                  enhanced,
                  boxsep=2pt,
                  left=1pt,
                  right=1pt,
                  top=4pt,
                  boxrule=1pt,
                  arc=0pt,
                  colback=white,
                  colframe=black
                  ]
}{
\end{tcolorbox}
}
\newenvironment{graytbox}{
\vspace{0.2cm}
\begin{tcolorbox}[width=\textwidth,
                  enhanced,
                  frame hidden,
                  boxsep=6pt,
                  left=1pt,
                  right=1pt,
                  top=4pt,
                  boxrule=1pt,
                  arc=0pt,
                  colback=mylightgray,
                  colframe=black,
                  breakable
                  ]
}{
\end{tcolorbox}
}
\newcommand{\tboxhrule}[0]{\vspace{0.1cm} \hrule \vspace{0.2cm}}
\renewcommand{\b}[1]{\ensuremath{\bm{\mathrm{#1}}}}
\DeclareMathOperator*{\E}{\mathbb{E}}
\newcounter{eqcounter}
\newtheorem{theorem}{Theorem}
\newtheorem*{theorem*}{Theorem}
\newtheorem{lemma}{Lemma}[section]
\newtheorem{definition}[lemma]{Definition}
\newtheorem{claim}[lemma]{Claim}
\newtheorem{fact}[lemma]{Fact}
\newtheorem{remark}[lemma]{Remark}
\newtheorem{assumption}[lemma]{Assumption}
\newtheorem*{result*}{Main Result}
\definecolor{mygreen}{RGB}{20,100,60}
\newcommand{\evcg}{\sf{EVCG}}
\newcommand{\opt}[0]{\ensuremath{\textsc{OPT}}}
\newcommand{\boost}[0]{\ensuremath{\beta}}
\newcommand{\rev}[2]{\ensuremath{\text{\sf{Rev}}}_{#1}(#2)}
\newcommand{\lps}[0]{\ensuremath{s^{\star}}}
\newcommand{\opts}[0]{\ensuremath{s}^{\text{o}}}
\newenvironment{titledtbox}[1]{\begin{tbox}#1
\tboxhrule
}{\end{tbox}}
\newcommand{\abs}[1]{\left\vert{#1}\right\vert}
\renewcommand{\Pr}{\operatorname{Pr}}
\newcommand{\winners}[3]{\ensuremath{W_{#1}(#2, #3)}}
\newcommand{\bid}[2]{\ensuremath{\beta_{#1, #2}}}
\begin{document}
\maketitle

\begin{abstract}
We study the problem of finding personalized reserve prices for unit-demand buyers in multi-unit \emph{eager} VCG auctions with correlated buyers. The input to this problem is a dataset of submitted bids of $n$ buyers in a set of auctions. The goal is to find a vector of reserve prices, one for each buyer, that maximizes the total revenue across all auctions.  

\smallskip
Roughgarden and Wang (2016) showed that this problem is APX-hard but admits a greedy $\sfrac{1}{2}$-approximation algorithm. 
Later, Derakhshan, Golrezai, and Paes Leme (2019) gave an LP-based algorithm achieving a $0.68$-approximation for the (important) special case of the problem with a single-item, thereby beating greedy. We show in this paper that the algorithm of Derakhshan {\em et al.} in fact {\em does not} beat greedy for the general multi-item problem. This raises the question of whether or not the general problem admits a better-than-$\sfrac{1}{2}$ approximation.

\smallskip
In this paper, we answer this question in the affirmative and provide a polynomial-time algorithm with a significantly better approximation-factor of $0.63$. Our solution is based on a novel linear programming formulation, for which we propose two different rounding schemes. We prove that the best of these two and the no-reserve case (all-zero vector) is a $0.63$-approximation.

\end{abstract}

\newpage

\section{Introduction}
The Vickrey-Clarke-Groves (VCG) mechanism, while   \emph{``lovely''} in theory~\cite{ausubel2006lovely}, is often criticized for not having a good performance guarantee when it comes to revenue maximization.  Hartline and Roughgarden~\cite{HR09} approach this shortcoming by optimizing \emph{reserve prices}, the minimum prices that the seller is willing to sell each item to each buyer. According to a number of theoretical and empirical studies \cite{HR09, ostrovsky2011reserve, beyhaghi2018improved, edelman2007internet}, {\em personalized} reserve prices (i.e., one custom reserve price for each buyer) can significantly improve revenue. In particular, personalized reserves can lead to approximately optimal revenue in quite general settings~\cite{HR09}. This has led to several studies computing an optimal vector of reserve prices \cite{beyhaghi2018improved, roughgarden2016minimizing, paes2016field, derakhshan2019lp}.
The study of VCG-like mechanisms for revenue optimization is closely aligned with a broader agenda of simple vs. optimal mechanisms ~\cite{golrezaei2017boosted, celis2014buy, paes2016field, allouah2018prior, bhalgat2012online, beyhaghi2018improved, DRY15}, and in fact has been one of the starting points for this agenda~\cite{HR09}.

\subsection{Our Scope and Main Result}

We focus on the \emph{eager} version of VCG auctions.
\footnote{An alternative, called \emph{lazy} VCG~\cite{DRY15}, first forms a set of potential winners using a VCG auction, then removes buyers whose bids don't clear their reserve. The eager version is often superior both in theory and in practice~\cite{paes2016field}.}
Consider a {\em multi-unit} auction with {\em unit-demand} buyers: that is, $k$ identical units are available, and each buyer is interested in obtaining only one unit. The auctioneer announces a reserve price $r_\bu$ for each buyer $\bu$ and then buyers place their bids. We run a VCG auction among the buyers that clear their reserve price (i.e., their bid equals or exceeds their reserve price), and the winners pay the maximum of their own reserve price and their VCG payment. That is, let $S$ be the set of buyers who clear their reserve prices. The first $k$ buyers in $S$ with the highest bids win. Each winning buyer pays the maximum of his/her reserve price and the $(k+1)$-th highest bid.

We adopt a standard data-driven model for computing reserve prices \cite{roughgarden2016minimizing, paes2016field}: given a history of buyers' bids over multiple runs of the auction, we compute a reserve price for each buyer to maximize the total revenue attained over the same dataset. 
An important property of this model is that it does not impose essentially any restrictions on the bid distributions. In particular, buyers' private values can be correlated, in contrast with other models \cite{beyhaghi2018improved, celis2014buy, DRY15, HR09} that assume independence. Moreover,
 any approximation for the data-driven model can also be used in a black-box reduction of Morgenstern and Roughgarden~\cite{morgenstern2015pseudo} to approximate the {\em Bayesian Optimization} and {\em Batch Learning} versions of the problem with (almost) the same approximation-factor.\footnote{In Bayesian Optimization, the buyers' private value distributions are independent and known to the algorithm. In Batch Learning, these distributions are unknown, and we only have access to samples drawn from them.}

In the data driven model, this problem was first studied by Roughgarden and Wang~\cite{roughgarden2016minimizing}, who showed that it is APX-hard\footnote{NP-hard to approximate better a fixed constant factor.} and gave a $1/2$ approximate greedy solution. Later, for the (important) special case of $k=1$ items, Derakhshan et~al.~\cite{derakhshan2019lp} showed that the approximation factor can be improved to $0.68$ thereby beating the greedy solution. In Section~\ref{example}, we  provide an example proving that the algorithm of Derakhshan~et~al.~\cite{derakhshan2019lp} in fact {\em does not} beat $1/2$-approximation for the general multi-item problem. This naturally raises the question:
\begin{quotation}
	{\it Is there a polynomial time algorithm beating $1/2$-approximation in the general case?}
\end{quotation}

In this work, we answer this question in the affirmative and in fact achieve a significantly better approximation ratio of  $0.63$ for any $k$ (not necessarily constant).

\newcommand{\thmmain}[0]{There exists an algorithm with running time polynomial in the input size that outputs a vector of reserve prices $\b r^o$ such that $\rev{}{\b r^o}$ is at least a 0.63 fraction of the revenue achieved from the optimal vector of reserve prices.}
\newcommand{\thmmaininformal}[0]{
Consider the data-driven model for eager VCG auctions.
There exists a polynomial-time algorithm for computing personalized reserve prices which achieves a $0.63$-approximation in expected revenue. That is, given a dataset of bids in a set of auctions $\A$, the algorithm outputs a vector of reserve prices $\b r^o$ such that the total revenue obtained from running the eager VCG mechanism on $\A$ using reserve prices $\b r^o$ is at least a $0.63$ fraction of that of any other vector of reserve prices.}
 
\begin{graytbox}
	\begin{result*}[Formalized in Theorem~\ref{theorem:mainformal}]\label{thm:main}
\thmmaininformal{}
\end{result*}
\end{graytbox}

\subsection{Our Techniques}
\label{sec:results}

 Our algorithm consists of two main parts. First, we design a polynomial-size Integer Linear Program (ILP) to describe the optimal solution. By removing the integrality constraints, we obtain a polynomial-size LP, which gives us a fractional solution. The second part is LP-rounding. Using the optimal solution of the LP, we construct two different integral solutions. We then show that the best of three vectors: these two solutions and the all-zero vector of reserve prices, is a 0.63-approximation.

To write the LP, we first need a polynomial-size representation of the solution space in which the revenue can be computed using a linear function. The natural representations (e.g., a vector of reserve prices, or a reserve price per buyer)  fail as they result in either an exponential-size solution space or a nonlinear revenue function. We come up with an alternative concise representation, based on the following  observation: to compute the revenue of an auction, we do not need to know the reserve prices of all the buyers. Rather, it suffices to only know the reserve prices of the winners and the VCG payment, which is the $(k+1)$-th highest cleared bid. The buyer who has this bid is called the \emph{supporting buyer}. We represent a solution based on its outcomes in different auctions, where the ``outcome" specifies who are the winners and who is the supporting buyer, and what are their respective reserve prices.  The revenue from each auction can be computed using a linear function based on its outcome; therefore, the overall revenue can also be written as a linear function which is sum of the revenue across all of these auctions. 

Derakhshan et al.~\cite{derakhshan2019lp} use a similar  approach for the single item case, which falls short in the general case. They capture all pertinent information about an auction in a single ``profile" which is then used to compute the revenue of the auction, and use these profiles to write an ILP. However, we need exponentially many profiles to extend this approach to the general case, essentially because we need information about all the winners in order to compute the auction's revenue. 

We proceed as follows. Instead of capturing all information about the winners in a single profile, we partition this information into several sub-profiles, each containing only a single winner and the supporting buyer. One complication is that these sub-profiles should not contradict each other (e.g., having different supporting buyers).  This issue gets even more complicated when we relax the integrality constraint of the ILP. We resolve this by introducing new variables and constraints to our LP.

Next, we use the optimal solution to the LP to construct two vectors of reserve prices which we refer to as \emph{inflated reserves} and \emph{discounted reserves}. For each buyer $\bu$, we use the LP solution to choose a threshold $t_\bu$  to determine if a reserve price is too high or too low. We construct two probability distributions: one over reserve prices above $t_\bu$, and another over reserve prices below $t_\bu$. We use these distributions to draw, resp., the inflated and discounted reserve prices. 

Let us provide some intuition for why we choose these two different vectors of reserve prices. Recall that each buyer pays the maximum of its reserve price and the VCG payment. Let us partition the winners' payments into two types: one is from winners who pay their reserve prices, and another is from those paying the VCG payment. Note that setting smaller reserve prices results in clearing more bids, and thus a larger VCG payment. 
Roughly speaking, second-type revenue from the \emph{discounted reserves}  should be larger than that of the \emph{inflated reserves},
while the opposite might hold for the first-type revenue. Intuitively, if most of the optimal revenue is from type-one payments \emph{and} high reserve prices (at least $t_\bu$ for each buyer $\bu$), then we expect the \emph{inflated reserves} to give us a high revenue. Otherwise, if the optimal revenue is a combination of type-one and type-two payments from 
small reserve prices, we expect a high revenue from the \emph{discounted reserves}. But what if the type-two payments form a substantial portion of the revenue? This is where the vector of all-zero reserve prices comes into play.  The all-zero reserve prices obtain the maximum possible revenue one can get from the type-two payments as all the bids are cleared in this case.

We analyze these three solutions simultaneously to lower bound the revenue of each solution as a function of the other two. By exploiting structural properties of our problem and the LP, we reduce the problem of finding the approximation factor to a complex non-linear optimization problem. This reduction, and solving this optimization problem, are the most technically challenging parts of the paper (see Section~\ref{section:lemma}).

To further highlight the significance of rounding our fractional solutions in two different ways, we investigate the performance of the rounding procedure by Derakhshan et al. which only outputs a single integral solution. Roughly speaking, they directly use the fractional solution of the LP as a probability distribution over the reserve prices, and for each buyer independently draw a reserve price from that. We show that for a large number of items, this approach fails to beat the greedy algorithm. More precisely, for any given constant $0<\epsilon$, we construct an example for which  the solution obtained using this rounding procedure gets at most $0.5+\epsilon$ fractional of the optimal revenue. We describe this example in Section~\ref{example}.

\subsection{Further Related Work}
Revenue maximization in multi-unit auctions is a fundamental problem in algorithmic game theory and has received  a lot of attention over the years.  In a seminal work, 
Myerson describes the game-theoretically optimal mechanism, equivalent to VCG with optimal reserve, for one item when buyers’ valuations are drawn from known independent distributions. Myerson’s mechanism is hard to directly put into practice~\cite{roughgarden2016ironing, celis2014buy, golrezaei2017boosted} as these assumptions often fail to hold in real-life applications. Further, even if they do hold, the optimal auction is still quite complex. Therefore, the need  to design simple yet efficient mechanisms has resulted in a significant body of work~\cite{golrezaei2017boosted, celis2014buy, paes2016field, allouah2018prior, bhalgat2012online, beyhaghi2018improved, derakhshan2019lp}.

 One approach taken in design of simple and efficient mechanisms is adding reserve prices to already known simple auctions such as VCG  or the second price auction (which is the special case of VCG in single-item environments.) ~\cite{paes2016field, roughgarden2016minimizing, chawla2010multi}. Moreover, reserve prices are widely used in practice and have been shown to be very important for obtaining a high revenue \cite{ostrovsky2011reserve, beyhaghi2018improved, edelman2007internet}.

The posted-price mechanism is another well-studied auction in multi-unit environments~\cite{chawla2010multi, azar2017prophet}, related to a line of work on prophet inequalities~\cite{correa2019prophet, correa2017back}. For the case of independent buyer valuations, this mechanism has a particularly strong performance guarantee for large numbers of items~\cite{yan2011mechanism, beyhaghi2018improved}.
To the best of our knowledge, the multi-unit posted-price mechanism has not been studied for correlated buyers.

\section{Preliminaries and Problem Statement} \label{sec:pre}
There are $k$ identical items and $n$ unit-demand buyers participating in a set of eager VCG auctions. Let $\A$ and $\B$ respectively be the set of auctions and buyers. We are given a dataset of bids $\b \beta$ where for any auction $\a\in \A$ and buyer $\bu\in \B$,  $\beta_{\a, \bu}$ represents bid of buyer $\bu$ in auction $\a$. 
Let $r_{\bu}$ be the personalized reserve price of buyer $\bu$. Then, given
the bid vector $\b \beta_{\a}$ for auction $\a$ and  reserve
price vector $\b r$, the eager VCG auction ($\evcg$) 
works as follows. 
\begin{enumerate}
\item Any buyer $\bu$ with $\beta_{\a, \bu} < r_{\bu}$ is
  eliminated. Let $S_{\a}=\{\bu: \beta_{\a, \bu}\ge r_{\bu}\}$ be the set of buyers who clear their reserve prices in auction $\a$.
\item An item is allocated to a buyer $\bu$ if there are at most $k-1$ buyers in $S_{\a}$ whose bid is greater than $\beta_{\a, \bu}$.
 \item Pick the \emph{supporting buyer} $\bu_s$ to be a buyer in set $S_{\a}$ such that there are exactly $k$ buyers in $S_{\a}$ whose bid in auction $\a$ is greater than $\beta_{a, \bu_s}$.\footnote{As usual, we assume that no two buyers have the same bids, or we can break ties based on their IDs.} ($\beta_{\a, \bu_s}$ is the VCG payment of any winner.) 
\item Any buyer $\bu$ who receives an item is charged $\max{(r_{\bu}, \beta_{\a, \bu_s})}$, otherwise they are not charged.

\end{enumerate}

Given the dataset of bids $\b \beta$, our goal is to find a vector of personalized reserve prices that maximize revenue of the auctioneer. Note that the reserve prices are the same across all the auctions $\a \in \A$. However, each buyer $\bu$ is assigned a personalized reserve price $r_{\bu}$. 
We  assume, w.l.o.g. that the optimal reserve price for any buyer is equal to one of their submitted bids. Let $\res = \{\beta_{\a, \bu}: \a\in \A, \bu\in \B\}$. 
 Formally, we would like to solve the following optimization problem:
 \begin{align}\nonumber
   &\evcg^\star~=~\max_{\b r \in \res^n }~ \rev{}{\b r},  
   \quad\text{where}\quad
    \rev{}{\b r} := 
    \textstyle \sum_{\a \in\A}\;\rev{\a}{\b r}
     \nonumber
 \end{align}
 and $\rev{\a}{\b r}$ is the total payment in action $\a$ given the vector of reserve prices $\b{r}$. Note that to solve this problem we face a search space of  size {$\abs{\res}^{n}$} which is exponential in the input size.

\section{The Algorithm}
In this section we provide an LP-based algorithm for finding a vector of personalized reserve prices given a dataset of bids. We observe that to be able to describe the optimal solution of the problem using polynomially many linear constraints, we need a concise representation of the solution space. In Section~\ref{section:solution-space}, we explain this representation. In Section~\ref{section:lp}, we use this representation to design an LP and prove that its  objective function in its optimal solution is an upper bound for the revenue of the optimal solution of the problem. Finally, in Section~\ref{section:rounding}, we provide our rounding procedure that uses the optimal solution of the LP and outputs a vector of reserve prices. 
\subsection{An Alternative Solution Space}
\label{section:solution-space}
In this section, our goal is to give a concise representation of the solution space which will help us to write our linear program. We need this representation to have a polynomial size and it should be possible to compute its revenue using a linear function. As mentioned before, we base our design on the observation that to compute the revenue of an auction we do not need to have the reserve price of all the buyers. Rather, it only suffices to know 
the reserve prices of the winners and bid of the supporting buyer. The main idea here is to have variables that capture the outcome of auctions (i.e., who the winners and the supporting buyer are and their reserve prices.) instead of just having variables for reserve price of buyers. We will define \emph{valid profiles} and \emph{valid sub-profiles} of an auction to capture its outcome. Roughly speaking, the revenue obtained from each auction can be computed in a linear way based on its outcome; thus the overall revenue can also be written as a linear function which is sum of the revenue across all these auction. 

\begin{definition}[Valid Profiles]\label{def:profile}
  {We define the set of valid profiles of an auction $\a$ as the set
  $\mathcal P_\a$ consisting of all tuples $(\bu_1, \dots, \bu_{k+1}, r_1, \dots, r_{k+1}) \in \B^{k+1} \times \res^{k+1}$}
   that satisfy the following conditions:
\begin{enumerate}
\item For any $i, j\in [k+1]$ where $i < j$, bid of buyer $\bu_i$ is greater than or equal to that of buyer $\bu_j$ in auction $\a$;
  that is, $\beta_{\a, \bu_i} \ge \beta_{\a,\bu_j}$. 
\item For any $i\in [k+1]$ buyer $\bu_i$ clears his reserve, $r_i$, in auction $\a$; that is, $\beta_{\a,\bu_i} \geq r_i$.
\end{enumerate}
\end{definition}

Valid profiles are defined to capture the outcome of an auction given a set of reserve prices. However, note that each profile consists of at least $k+1$ buyers; thus, to be able to capture all the possible scenarios (for example when fewer than $k+1$ buyers clear their reserves in an auction), we add $k+1$ auxiliary buyers $\hat{\bu}_1 \dots \hat{\bu}_{k+1}$ to $\B$ who bid zero in all the auctions. Also, w.l.o.g., we  assume that their reserves are always set to zero as well.

As mentioned previously, Derakhshan et al.~\cite{derakhshan2019lp}, use a concept similar to valid profiles to write their linear program for $k=1$. In their LP,  there is a variable for any pair of auction and valid profile. However, it does not work for us since it results in having exponentially many variables. To overcome this, we define \emph{valid sub-profiles} of an auction that only contains information about a single winner and the supporting buyer as defined below.

\begin{definition}[Sub-profiles]\label{def:profile}
  {We define the set of valid Sub-profiles of an auction $\a$ as the set
  $\mathcal S_\a$ consisting of all tuples $(\bu_1, \bu_2, r_1, r_2) \in \B^2 \times \res^2$}
   that satisfy the following conditions:
\begin{enumerate}
\item Bid of buyer $\bu_1$ is greater than or equal to that of buyer $\bu_2$ in auction $a$;
  that is, $\beta_{\a, \bu_1} \ge \beta_{\a,\bu_2}$. 

\item Buyers $\bu_1$ and $\bu_2$ clear their reserves in auction $\a$ if reserve prices $r_1$ and $r_2$ are set for them respectively; that is, $\beta_{\a,\bu_1} \geq r_1$ and $\beta_{\a,\bu_2} \geq r_2$.
\end{enumerate}
For any given $p=(\bu_1, \bu_2, r_1, r_2) \in \mathcal S_\a$, we have  $\rev{\a}{p}
  := \max(\beta_{\a, \bu_2}, r_1)$.
\end{definition}
Given a vector of reserve prices $\b r'$, we say a sub-profile $(\bu_1, \bu_2, r_1, r_2) \in \mathcal S_\a$ happens in auction $\a$ after applying $\b{r}'$, iff $r'_{\bu_1} = r_1$, $r'_{\bu_2} = r_2$, buyer $\bu_1$ is a winner in auction $\a$ and buyer $\bu_2$ is the supporting buyer in this auction. Moreover, we say two sub-profiles $(\bu_1, \bu_2, r_1, r_2)$ and $(\bu'_1, \bu'_2, r'_1, r'_2)$ are compatible iff $\bu'_2 = \bu_2$ and $r'_2 = r_2$ which means that they have the same information about the supporting buyer. Moreover, we say a set $P$ of valid sub-profiles are compatible iff they are pairwise compatible and $|P|=k$.
Since we have added $k+1$ auxiliary buyers whose bid is always cleared in all the auction, we can assume that we always have exactly $k$ winners and a supporting buyer. 

To explain how a solution is represented using these sub-profiles, we consider a vector of reserve prices $\b r$ and construct its representation in this new solution space. For any auction $\a$ and any sub-profile $p\in \mathcal S_\a$, we have a variable $s_{\a, p}$ which is equal to one iff sub-profile $p$ happens in auction $\a$ after applying vector of reserve prices $\b r$. Otherwise we have $s_{\a, p}=0$. We say vector $\b s$ constructed in this way is the representation of $\b r$ in the profile space. As mentioned above,  this representation allows us to compute the revenue of each auction using a linear function.  Recall that for any sub-profile $p= (\bu_1, \bu_2, r_1, r_2)$ we have $\rev{\a}{p}
  := \max(\beta_{\a, \bu_2}, r_1)$, thus we can write
$$\rev{\a}{\b r} = 
\textstyle \sum_{p\in \mathcal{S}_{\a}}\; s_{\a,p} \cdot \rev{\a}{p}.$$
This function is linear since we have polynomially many valid sub-profiles; thus, for any valid sup-profile $p$ we can simply compute $\rev{\a}{p}$ in advance  and treat it as a constant in the LP.

\subsection{The Linear Program}
\label{section:lp}

In this section we first design an integer linear program (ILP) then remove its integrality constraints to get an LP. We start by introducing the variables of our ILP. We have four vectors of random variables $\b s$, $\b x$, $\b y$ and $\b y'$ as defined below. 
\begin{enumerate}
\item 
For any auction $\a\in \A$ and any sub-profile $p\in \mathcal S_{\a}$, we have a variable $s_{\a, p} \in \{0, 1\}$ which is equal  to one iff sub-profile $p$ happens in auction $\a$. This set of variables should satisfy constraint $\sum_{p \in \mathcal S_{\a}} s_{\a,
    p} \leq k$ as at most $k$ sub-profiles can happen in an auction. 
\item  For any buyer $\bu\in \B$ and any reserve price $r\in \res$ we have a variable $x_{\bu, r} \in \{0, 1\}$. Reserve price $r$ is assigned to buyer $\bu$ iff $x_{\bu, r}=1$. For this type of variables we enforce the necessary constraint  $\sum_{r \in \res} x_{\bu, r} = 1$ in our LP since each buyer has exactly one reserve price. 
\item  For any buyer $\bu\in \B$, any auction $\a$ and any reserve price $r\in \res$, we have a  variable
  $y_{\bu, r, a} \in \{0, 1\}$ that is
     equal to one iff buyer $\bu$ is assigned a
    reserve price of $r$ as a winner in auction $\a$.    
 \item  For any buyer $\bu\in \B$, any auction $\a$ and any reserve price $r\in \res$, we have a variable
  $y'_{\bu, r, a} \in \{0, 1\}$ that is equal to one iff buyer $\bu$ is assigned a reserve price of $r$ as the supporting buyer in auction $\a$.    
\end{enumerate}

Roughly speaking, the vector of variables $\b x$ is used in the LP to ensure that in different auctions, solution $\b s$ does not assign different reserve prices to the same buyer. Moreover variables  $\b y$ and $\b y'$ are to ensure that the sub-profiles in an auction are compatible with each other.

To be able to write the constraints of our LP, we first need the following definitions. Let $\mathcal{Q}_{\bu, \a} := \{(\bu, \bu_2, r_1, r_2)\in \mathcal{S}_\a\ | \bu_2 \in \B, r_1\in \res , r_2\in \res \}$ denote the set of valid sup-profiles of auction $\a$ in which buyer $\bu$ is the winner and $\mathcal{Q}'_{\bu, \a}= \{(\bu_1, \bu, r_1, r_2)\in \mathcal{S}_\a\ | \bu_1 \in \B, r_1\in \res , r_2\in \res \}$ is the set of valid sub-profiles of auction $\a$ in which buyer $\bu$ is the supporting buyer. 
Moreover, let us define $\mathcal{Q}_{\bu,r,\a} := \{ (\bu, \bu_2, r, r_2) \in \mathcal{Q}_{\bu, \a} | \bu_2 \in \B, r_2\in \res \}$ and $\mathcal{Q'}_{\bu,r,\a} := \{ (\bu_1, \bu, r_1, r) \in\mathcal{Q}'_{\bu, \a} | \bu_1 \in \B, r_1\in \res\}.$ We are now ready to write our LP (ILP without the integrality constraints) which we present in Figure~\ref{fig:LP}.

\begin{figure}[hbt]
    \begin{align} \label{fig:LP}
		&\max_{\b x, \b s} &&\ \sum_{\a\in \A}\sum_{p\in \mathcal{S}_{\a}} s_{\a,p} \cdot \rev{\a}{p}\nonumber	 &&\\\tag{1}
		&\,\text{s.t.} \qquad 
		& & y_{\bu, r, \a} = \sum_{\mathclap{p\in \mathcal{Q}_{\bu ,r{,\a}}}} s_{\a,p} && \forall{\a, \bu, r}: \bu \in \B, \a \in \A,  r \in \res \\\tag{2} 
		& \qquad 
		& & y'_{\bu, r, \a} = \sum_{\mathclap{p\in \mathcal{Q'}_{\bu ,r,\a}}} \frac{s_{\a,p}}{k} && \forall{\a, \bu, r}: \bu \in \B, \a \in \A,  r \in \res \\\tag{3}
				& \qquad 
				& &  y_{\bu, r, \a} + y'_{\bu, r, \a} \leq x_{\bu, r} && \forall{\a, \bu, r}: \bu \in \B, \a\in \A, r \in \res ,  \\\tag{4}  
		    & &&  \sum_{\mathclap{\substack{p\in \mathcal{Q}_{\bu_2, {\a}} \\ \cap \mathcal{Q'}_{\bu_1, {\a}} }}} s_{\a,p} \leq \sum_{\mathclap{r\in \res}} y'_{\bu_1,r, \a} \qquad && \forall{\a, \bu_1, \bu_2}: \bu, \bu_2 \in \B, \a\in \A\\ \tag{5}
		& \qquad 				
		& & \sum_{p \in \mathcal{S}_{\a}} s_{\a,p} \leq k && \forall{\a}: \a \in \A\\\tag{6}
		& && \sum_{r\in \res} x_{\bu,r} = 1 && \forall{\bu}: \bu \in \B	\\
		&&& s_{\a,p} \geq 0  &&  \forall{\a, p} :  \a \in \A, \ p \in \mathcal{S}_{\a} \tag{7} 
  \end{align}
  
  \caption{The Linear Program}
\end{figure}

In the rest of the paper we use $\b \lps$ to refer to $\b s$ from an optimal solution of the LP. To be able to use the optimal solution of the LP as a benchmark in analyzing  the approximation-factor of our algorithm, we need to show that it is indeed an upper bound for the optimal integral solution.
\newcommand{\lemmaLPOpt}{The optimal revenue is upper bounded by $\sum_{\a\in \A}\sum_{p\in \mathcal{S}_{\a}} \lps_{\a,p} \cdot \rev{\a}{p}$.
}

\begin{lemma}\label{lemma:LP-upper}
\lemmaLPOpt
\end{lemma}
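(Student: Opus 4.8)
The plan is to show that the LP in Figure~\ref{fig:LP} is a valid relaxation by exhibiting, for any vector of reserve prices $\b r$ (in particular the optimal one $\b r^{\star}$), a feasible integral point of the LP whose objective value equals $\rev{}{\b r}$. Since the LP is a maximization and we will have removed the integrality constraints, its optimum is then at least $\rev{}{\b r^{\star}} = \evcg^\star$, which is exactly the claimed bound.

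First I would fix $\b r$ and run the eager VCG mechanism on each auction $\a\in\A$ with the auxiliary zero-bidding buyers included, so that each auction has exactly $k$ winners and a well-defined supporting buyer $\bu_s(\a)$ with reserve $r_{\bu_s(\a)}$. For each winner $\bu$ of auction $\a$, form the sub-profile $p=(\bu,\bu_s(\a),r_\bu,r_{\bu_s(\a)})$; this is a valid sub-profile because the winner's bid is at least the supporting bid and both clear their reserves. Set $s_{\a,p}=1$ for exactly these $k$ sub-profiles in each auction and $s_{\a,p}=0$ otherwise; set $x_{\bu,r}=1$ iff $r=r_\bu$; set $y_{\bu,r,\a}=1$ iff $\bu$ is a winner of $\a$ with reserve $r=r_\bu$; and $y'_{\bu,r,\a}=1$ iff $\bu$ is the supporting buyer of $\a$ with reserve $r=r_\bu$. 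I would then verify each constraint (1)--(7) in turn: (1) and (2) are definitional bookkeeping — note the $1/k$ in (2) is absorbed because a single supporting buyer appears in all $k$ winners' sub-profiles; (5) holds since there are exactly $k$ winners; (6) since each buyer has one reserve; (3) because for a fixed $(\bu,r)$ at most one of ``$\bu$ is a winner with reserve $r$'' and ``$\bu$ is the supporting buyer with reserve $r$'' can hold in a given auction, and both force $r=r_\bu$; (4) because the left side counts whether $\bu_1$ supports and $\bu_2$ wins in $\a$, which is at most the indicator that $\bu_1$ supports $\a$, i.e. $\sum_r y'_{\bu_1,r,\a}$; (7) is immediate.

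Finally I would check that the objective matches: $\sum_{\a}\sum_{p\in\mathcal S_\a} s_{\a,p}\cdot\rev{\a}{p} = \sum_\a \sum_{\bu\text{ wins }\a}\max(\beta_{\a,\bu_s(\a)},r_\bu)$, and by definition of the eager VCG payment rule each winner $\bu$ pays exactly $\max(r_\bu,\beta_{\a,\bu_s(\a)})$, so this sum is $\sum_\a \rev{\a}{\b r}=\rev{}{\b r}$. Since this integral point is LP-feasible, $\sum_{\a}\sum_{p} \lps_{\a,p}\cdot\rev{\a}{p}\ge \rev{}{\b r}$ for every $\b r$, and taking $\b r=\b r^{\star}$ proves the lemma.

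The main obstacle is constraint (4): one must argue carefully that the $\b y'$ variables, which are tied to the $\b s$ variables through (2) with the factor $1/k$, still correctly serve as an indicator that a given buyer is the supporting buyer of a given auction, and that the left-hand side of (4) — a sum of $s_{\a,p}$ over sub-profiles with a fixed (winner, supporter) pair — is genuinely bounded by that indicator in the integral solution. Getting the combinatorial accounting right here (and making sure the auxiliary-buyer convention makes ``exactly $k$ winners and one supporter'' literally true in every auction) is the delicate part; the rest is routine verification.
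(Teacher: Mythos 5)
Your proposal is correct and follows essentially the same route as the paper's proof: construct an integral feasible point of the LP from the optimal vector of reserve prices by running eager VCG in each auction (with the auxiliary zero-bidding buyers, so there are exactly $k$ winners and one supporting buyer), set $\b s$, $\b x$, $\b y$, $\b y'$ as the natural indicators, verify constraints (1)--(7) one by one, and check that the objective equals $\rev{}{\b r}$. The paper's Section~\ref{proof:lemma-lp-upper} carries out exactly this construction and verification, including your key observation for constraint (2) that the $1/k$ factor is absorbed because the single supporting buyer appears in all $k$ winners' sub-profiles.
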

We have Section~\ref{proof:lemma-lp-upper} designated to the formal proof of this lemma, and also give an informal overview of that here. Roughly speaking, to prove this lemma, it suffices to show that the constraints of the LP are all necessary for the consistency of the variables that we have defined. Below we give some intuition about  each constraint and why it is necessary.

 Constraint (1) is due to the fact that for any auction $\a$, buyer $\bu$, and reserve price $r$, variable $y_{\bu, r, \a}$ indicates whether or not buyer $\bu$ has a reserve $r$ and is a winner in auction $\a$. This constraint ensures that  value of $y_{\bu, r, \a}$ is consistent with whether or not there is a profile happening in auction $\a$ in which buyer $\bu$ is a winner as is assigned a reserve price of $r$. Constraint (2)  is similar to the previous one but for the supporting buyers. Constraint (3) is because the reserve prices assigned to a buyer in different auctions should be consistent.  Moreover, imposing constraint (4), on the sub-profiles is to make sure that the sub-profiles that happen in an auction can form valid profiles. 
  Consider an auction $\a$ and buyers $\bu_1$ and $\bu_2$. The right-hand-side of this constraint is the probability with which buyer $\bu_1$ is the supporting buyer in auction $\a$, and the left-hand-side is the probability with which buyer $\bu_2$ is a winner while buyer $\bu_1$ is the supporting buyer which should obviously be smaller than the probability that buyer $\bu_2$ is a supporting buyer. Finally, constraints (5), (6), and (7) are by definition of variables. 
This LP upper bounds the optimal solution (the proof is deferred to the appendix).

\subsection{The LP-Rounding Procedure}
\label{section:rounding}
In this section, given an optimal solution of the LP we generate an integral solution for the problem. Input of the algorithm is  the vector $\b x$ from an optimal solution  of the LP and a parameter $ \boost{} \in [0, 1]$, which we fix later.

\begin{enumerate}[leftmargin=*]
\item For any buyer $\bu$ let $t_\bu$ be the maximum number in $\res$ that satisfies $\sum_{r < t_{\bu}} x_{b, r} \leq \boost$. 
\item Define vectors $\b f$ and $\b f'$ as follows: For any $r\in \res$ where $r < t_{\bu}$ set $f_{b, r} := x_{b, r}/\boost$ and $f'_{b, r} := 0$. For any $r > t_{\bu}$, set $f_{b, r} := 0$ and $f'_{b, r} := x_{b, r}/(1-\boost)$. Finally for $r=t_b$, set $f_{b, r} := 1 - \sum_{r' < t_b} f_{b, r'}$ and $f'_{b, r} := 1 - \sum_{r' > t_b} f'_{b, r'}$.

\item Construct $\b r$ the vector of \emph{discounted} reserve prices as follows: For any buyer $\bu$ independently choose a random reserve price $r_\bu\in \res$ such that for any $\rho\in \res$, we have $\Pr[r_\bu= \rho] =f_{\bu, \rho}$.
\item Construct $\b r'$ the vector of \emph{inflated} reserve prices as follows: For any buyer $\bu$ independently choose a random reserve price $r'_\bu\in \res$ such that for any $\rho\in \res$ we have $\Pr[r'_\bu= \rho]=f'_{\bu, \rho}$.
\item Let $\b z$ be the vector of all zero reserve prices.
\item Between $\b z$, $\b r$ and $\b r'$ return the one with higher revenue which is as follows: $$\arg \max_{\mathclap{\b \nu \in \{\b z, \b r, \b r'\}}} \; \rev{}{\nu}. $$
\end{enumerate}
In the rest of the paper we use $\b r$ and $\b r'$ to respectively refer to the vector of discounted and inflated reserve prices constructed in this algorithm. 
\begin{remark}
For the sake of simplicity in the analysis, we assume w.l.o.g., that for any auction $\a$, $t_\bu$ satisfies $\sum_{r < t_{\bu}} x_{b, r} = \boost$. This implies that for any $r\in \res$, if $r<t_b$ we have $\Pr[r_b = r] = x_{\bu, r}/\boost$ and $\Pr[r'_b = r] = 0$. Otherwise, if $r \geq t_b$, we have $\Pr[r_b = r] = 0$ and $\Pr[r'_b = r] = x_{\bu, r}/(1-\boost)$.  
\end{remark}
 
In the next section we show how we can use specific features of the three solutions $\b z$, $\b r$ and $\b r'$ to get our desired approximation-factor.

\section{Approximation Factor}
 
 In this section, we prove our main theorem by giving a lower bound for the revenue obtained from the vector of reserve prices outputted by the rounding algorithm. 
 
 Let us start by giving some definitions that will be used throughout this section. Define $\beta^{(k+1)}_{\a}$ to be the $(k+1)$-th highest bid in any auction $\a$. Note that this is different from the bid of the supporting buyer in auction $\a$ as the supporting buyer has the $(k+1)$-th highest bid after removing the buyers whose bid is not cleared. Moreover, given a threshold $\tau$, let us denote by $\winners{\a}{\b r}{\tau} $ the number of winners in auction $\a$ whose payment is greater than or equal to $\tau$ using the vector of reserve prices $\b r$. Similarly, $\winners{\a}{\b r'}{\tau}$ is the number of winners who pay at least $\tau$ using vector of reserve prices $\b r'$ in auction $\a$. Note that $\winners{\a}{\b r}{\tau}$ and $\winners{\a}{\b r'}{\tau}$ are both random variables.

The following lemma establishes sufficient conditions for the algorithm to output an approximate solution. For any given auction $\a\in \A$ and a real number $\tau \geq 0$, we define
\begin{align*} 
\Phi_a(\tau) :=   
\;\;\sum_{\mathclap{\substack{
p : p\in \mathcal{S}_{\a}, \\ \;\; \rev{\a}{p} \geq \tau}}} \;
\lps_{\a, p} - (1-\beta)\E[\winners{a}{\b r'}{\tau}]   - \beta \E[\winners{\a}{\b r}{\tau}]. 
  \end{align*}
  
  \begin{lemma}\label{lem:twocond} 
Suppose there exist absolute constants $\boost\in (0, 1)$ and $c\in (0, 1)$ such that, 
\begin{align} \label{eq:lemma1} 
\Phi_a(\tau)\leq 0
    &\qquad\text{if}\quad 
    \tau > \beta^{(k+1)}_{\a} \text{, and }\\
    \label{eq:lemma2}
\Phi_a(\tau)\leq kc
    &\qquad\text{if}\quad 
    \tau \le  \beta^{(k+1)}_{\a}
\end{align}
for any auction $\a\in \A$.
Then, the algorithm with parameter $\boost$ outputs a $\frac{1}{1+c}$-approximate solution.
\end{lemma}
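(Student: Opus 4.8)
The plan is to relate the revenue of the three candidate solutions $\b z$, $\b r$, $\b r'$ to the LP objective $\OPT_{\text{LP}} := \sum_{\a}\sum_{p\in\mathcal S_\a}\lps_{\a,p}\rev{\a}{p}$, which by Lemma~\ref{lemma:LP-upper} upper bounds the true optimum, and then show that if both candidate solutions $\b r, \b r'$ are simultaneously weak then $\b z$ must be strong, so that the best of the three recovers a $\frac{1}{1+c}$ fraction of $\OPT_{\text{LP}}$. The first step is a ``layer-cake'' rewriting: for any vector of reserve prices $\b\nu$ and any auction $\a$, the revenue $\rev{\a}{\b\nu}$ equals $\int_0^\infty (\text{number of winners in }\a\text{ paying at least }\tau)\,d\tau$, and similarly the per-auction LP contribution equals $\int_0^\infty \big(\sum_{p:\rev{\a}{p}\ge\tau}\lps_{\a,p}\big)\,d\tau$. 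This lets me integrate the defining inequalities on $\Phi_a(\tau)$ over $\tau$.

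Next I would split the integral at the threshold $\beta^{(k+1)}_\a$. For $\tau>\beta^{(k+1)}_\a$, condition~\eqref{eq:lemma1} gives $\sum_{p:\rev{\a}{p}\ge\tau}\lps_{\a,p}\le (1-\beta)\E[\winners{a}{\b r'}{\tau}]+\beta\E[\winners{\a}{\b r}{\tau}]$ pointwise. For $\tau\le\beta^{(k+1)}_\a$, condition~\eqref{eq:lemma2} gives the same inequality up to an additive slack of $kc$; but here I also use that $\b z$ (all-zero reserves) clears every bid, so in auction $\a$ there are exactly $k$ winners each paying the $(k+1)$-st highest bid $\beta^{(k+1)}_\a$, hence $\winners{\a}{\b z}{\tau}=k$ for every $\tau\le\beta^{(k+1)}_\a$. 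Therefore on this low range the slack $kc$ is exactly $c\cdot\winners{\a}{\b z}{\tau}$. Integrating over $\tau\ge0$ and summing over $\a\in\A$, and writing $\E[\rev{}{\b r}]$ and $\E[\rev{}{\b r'}]$ for the expected revenues of the two randomized solutions, I get
\[
\OPT_{\text{LP}} \;\le\; (1-\beta)\,\E[\rev{}{\b r'}] \;+\; \beta\,\E[\rev{}{\b r}] \;+\; c\,\rev{}{\b z}.
\]
Since the algorithm returns $\arg\max_{\b\nu\in\{\b z,\b r,\b r'\}}\rev{}{\b\nu}$, and since a convex combination with weights $(1-\beta),\beta$ of two quantities is at most their max, the right-hand side is at most $(1-\beta+\beta+c)\cdot\max(\rev{}{\b z},\E[\rev{}{\b r}],\E[\rev{}{\b r'}]) = (1+c)\,\E[\rev{}{\text{output}}]$. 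Combining with Lemma~\ref{lemma:LP-upper} yields $\E[\rev{}{\text{output}}]\ge\frac{1}{1+c}\OPT_{\text{LP}}\ge\frac{1}{1+c}\evcg^\star$, as claimed.

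The step I expect to need the most care is the layer-cake identity for the LP term together with the bookkeeping of ``winners'': I must verify that $\sum_{p\in\mathcal S_\a}\lps_{\a,p}\rev{\a}{p}=\int_0^\infty\!\big(\sum_{p:\rev{\a}{p}\ge\tau}\lps_{\a,p}\big)d\tau$ holds for the (possibly fractional) optimal LP solution, and that $\E[\winners{\a}{\b r}{\tau}]$ and $\E[\winners{\a}{\b r'}{\tau}]$ correctly integrate to the expected revenues $\E[\rev{}{\b r}]$, $\E[\rev{}{\b r'}]$ — in particular that each winning buyer's payment is $\max(r_\bu,\beta_{\a,\bu_s})$ and that the count $\winners{\a}{\cdot}{\tau}$ is defined with respect to exactly these payments. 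A secondary subtlety is making sure the auxiliary zero-bidding buyers do not distort the claim $\winners{\a}{\b z}{\tau}=k$ for $\tau\le\beta^{(k+1)}_\a$; this should be immediate since those buyers bid zero and so never contribute payment above any positive $\tau$, and for $\tau=0$ the bound is not needed. Everything else is routine integration and the elementary ``convex combination $\le$ max'' estimate.
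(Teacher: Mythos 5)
Your proof is correct and follows essentially the same route as the paper's: integrate $\Phi_a(\tau)$ over $\tau$, split the integral at $\beta^{(k+1)}_\a$, identify the resulting integrals with the per-auction LP objective and the expected revenues of $\b r$, $\b r'$, observe that $\b z$ yields exactly $k\beta^{(k+1)}_\a$ per auction, and finish with the ``convex combination $\le \max$'' estimate plus Lemma~\ref{lemma:LP-upper}. The only difference is presentational — you make the layer-cake identities and the reading of the slack $kc$ as $c\cdot\winners{\a}{\b z}{\tau}$ explicit, whereas the paper states them more tersely.
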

\begin{proof}
   Consider an arbitrary auction $\a$. By integrating over $\tau$ in $\Phi_a(\tau)$, we obtain:
     \begin{align} 
    \nonumber &\int
    \limits_{(\beta^{(k+1)}_{\a}, \infty)}
    \Phi_a(\tau)\;
    d\tau \; +\; 
     \int\limits_{(0, \beta^{(k+1)}_{\a}]}
     \Phi_a(\tau)\;
     d\tau \leq k c \beta^{(k+1)}_{\a}.
      \end{align}
By simplifying this we get 
\begin{equation}\label{eq:revenuelem} \sum_{p\in \mathcal{S}_{\a}} s_{\a,p} \cdot \rev{\a}{p} - (1-\beta) \E[\rev{\a}{\b r'}] - \beta \E[\rev{\a}{\b r}] \leq k c \beta^{(k+1)}_{\a}.\end{equation}
Recall that the output of our algorithm is the best of $\b r$, $\b r'$, and $\b z$ where $\b z$ is the vector of all-zero reserve prices and its revenue is $k\beta^{(k+1)}_{\a}$ since by applying that, the players in the $k$ first positions win the $k$ items and pay bid of the buyer in the $(k+1)$-th position.  Therefore, the expected revenue achieved from the output of our algorithm is at least $$\mu := \max\left(\E[\rev{}{\b r}], \E[\rev{}{\b r'}], k\cdot \sum_{\a\in \A}\beta^{(k+1)}_{\a}\right).$$  
 Based on Equation~\ref{eq:revenuelem} we have 
 $$\rev{}{\b \lps} - (1-\beta) \mu - \beta \mu - c \mu \leq 0.$$
 where $\rev{}{\b \lps} := \sum_{\a\in \A}\sum_{p\in \mathcal{S}_{\a}} \lps_{\a,p} \cdot \rev{\a}{p}.$ 
 This implies  $\rev{}{\b \lps} - (1+c)\mu \leq 0$, and as a result $$\frac{\rev{}{\lps}}{1+c} \leq  \max\left(\E[\rev{}{\b r}], \E[\rev{}{\b r'}], k\beta^{(k+1)}_{\a}\right).$$ Further by Lemma~\ref{lemma:LP-upper}, we know that $\rev{}{\lps}$ is an upper bound for the revenue of the optimal solution; thus by setting $\boost = \boost'$ in the rounding algorithm its output is at least a $\frac{1}{1+c}$-approximate solution.
 \end{proof}
 
  Having, Lemma~\ref{lem:twocond}, it suffices to  prove that Equation~\ref{eq:lemma1} always holds and find values for parameters $\beta$ and $c$ that satisfy Equation~\ref{eq:lemma2}. We address the former in the following lemma, and the latter in Lemma~\ref{lem:second-type}.
  
 \begin{lemma}\label{lemma:first-type}
For auction $\a$ and any $\tau > \beta_{\a}^{(k+1)} $ we have 

$$\sum_{\mathclap{\substack{p : p\in \mathcal{S}_{\a}\\ \rev{\a}{p} \geq \tau}}} \lps_{\a, p} - (1-\beta)\E[\winners{a}{\b r'}{\tau}]   - \beta \E[\winners{\a}{\b r}{\tau}] \leq 0. $$ 
\end{lemma}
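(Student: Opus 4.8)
The plan is to fix an auction $\a$ and a threshold $\tau > \beta^{(k+1)}_\a$, and bound each of the two negative terms from below by showing that enough of the probability mass in $\b r'$ and $\b r$ produces winners paying at least $\tau$. The crucial consequence of $\tau > \beta^{(k+1)}_\a$ is that a winner can pay at least $\tau$ \emph{only} by clearing a personal reserve price $\ge \tau$ — the VCG payment (the bid of the supporting buyer) is always at most $\beta^{(k+1)}_\a < \tau$, since clearing reserves only removes buyers and hence the supporting bid never exceeds the $(k+1)$-th highest raw bid. So for either reserve vector, a buyer $\bu$ contributes to $\winners{\a}{\cdot}{\tau}$ iff $\bu$ is among the $k$ winners and has been assigned a reserve price $r_\bu \ge \tau$ with $\beta_{\a,\bu} \ge r_\bu$. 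I would first record this observation cleanly, as it reduces everything to counting buyers with high cleared reserves who survive into the top $k$.

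**Lower bounding the expected winner counts.** Next I would argue: if buyer $\bu$ is assigned (under $\b r'$) a reserve $r_\bu \ge \tau$ that it clears in auction $\a$, then $\bu$ \emph{is} a winner. Indeed, the buyers who beat $\bu$ in auction $\a$ among those clearing reserves all have bids $\ge \beta_{\a,\bu} \ge \tau > \beta^{(k+1)}_\a$, and there are at most $k-1$ bids in the whole auction exceeding $\tau$... wait — more carefully: the number of buyers with raw bid $\ge \tau$ is at most $k$ (since the $(k+1)$-th highest is $<\tau$), so the number with raw bid $>\beta_{\a,\bu}\ge\tau$ is at most $k-1$, hence at most $k-1$ cleared bids exceed $\bu$'s, so $\bu$ wins. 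Therefore $\E[\winners{\a}{\b r'}{\tau}] = \sum_{\bu}\sum_{r\ge\tau,\ \beta_{\a,\bu}\ge r}\Pr[r'_\bu = r] = \sum_{\bu}\sum_{r\ge\tau,\ \beta_{\a,\bu}\ge r} \frac{x_{\bu,r}}{1-\beta}$ using the remark's clean form of the rounding (and that $r\ge\tau$ forces $r\ge t_\bu$ when $\tau$ is large, which needs a small check when $\tau$ falls below some $t_\bu$ — I'd handle that by noting $f'_{\bu,r}\ge x_{\bu,r}/(1-\beta)$ termwise for all $r$ in the relevant range, or just invoke the Remark's equality). Similarly $\beta\,\E[\winners{\a}{\b r}{\tau}] = \beta\sum_{\bu}\sum_{r\ge\tau,\ \beta_{\a,\bu}\ge r}\Pr[r_\bu=r]$; by the Remark, $\Pr[r_\bu = r]$ for $r\ge t_\bu$ is $0$ except possibly at $r=t_\bu$, but in any case $\beta\Pr[r_\bu=r] \le x_{\bu,r}$ termwise. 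Adding, the two negative terms together are at least $\sum_\bu \sum_{r \ge \tau,\ \beta_{\a,\bu}\ge r} x_{\bu,r}$ — and in fact I should aim for the cleaner identity $(1-\beta)\E[\winners{\a}{\b r'}{\tau}] + \beta\E[\winners{\a}{\b r}{\tau}] \ge \sum_{\bu}\sum_{r\ge\tau:\ \beta_{\a,\bu}\ge r} x_{\bu,r}$.

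**Upper bounding the LP term and concluding.** It remains to show $\sum_{p\in\mathcal S_\a:\ \rev{\a}{p}\ge\tau} \lps_{\a,p} \le \sum_{\bu}\sum_{r\ge\tau:\ \beta_{\a,\bu}\ge r} x_{\bu,r}$. For a sub-profile $p = (\bu_1,\bu_2,r_1,r_2)$, $\rev{\a}{p} = \max(\beta_{\a,\bu_2}, r_1)$; since $\beta_{\a,\bu_2} \le \beta_{\a,\bu_1}$ and $\bu_2$ is in the "supporting" slot, $\beta_{\a,\bu_2}\le\beta^{(k+1)}_\a < \tau$ (here I use that at most $k$ buyers clear, plus slot structure — this is the same observation as above applied to the sub-profile), so $\rev{\a}{p}\ge\tau$ forces $r_1 \ge \tau$. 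Hence the sum over such $p$ is $\sum_{\bu}\sum_{r\ge\tau}\sum_{p\in\mathcal Q_{\bu,r,\a}} \lps_{\a,p} = \sum_\bu\sum_{r\ge\tau} y_{\bu,r,\a}$ by LP constraint (1), and $y_{\bu,r,\a}\le x_{\bu,r}$ by constraint (3) (dropping the nonnegative $y'$ term); moreover $y_{\bu,r,\a}$ can only be positive when $p\in\mathcal Q_{\bu,r,\a}$ is nonempty, which by validity of sub-profiles requires $\beta_{\a,\bu}\ge r$. Combining the two bounds gives $\Phi_a(\tau)\le 0$.

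**Main obstacle.** The routine parts are the LP-constraint bookkeeping; the one genuinely delicate point is the argument that a cleared high reserve automatically makes a buyer a winner — i.e., that $\tau>\beta^{(k+1)}_\a$ really does cap the number of competing high bids at $k-1$ — and matching this "$<k$ raw bids above $\tau$" fact consistently across the three places it's used (winner count under $\b r'$, under $\b r$, and inside the sub-profile revenue bound). I also need to be careful about boundary reserves $r = t_\bu$ in the Remark's simplification, and about whether $\tau$ could lie strictly between two reserve values for different buyers; I expect these to dissolve once I phrase everything with the termwise inequalities $\beta f'_{\bu,r}\ge x_{\bu,r}$ is wrong — rather $(1-\beta)f'_{\bu,r}\ge x_{\bu,r}$ and $\beta f_{\bu,r}\le x_{\bu,r}$ for $r\ge\tau$ — so that no case analysis on $t_\bu$ versus $\tau$ is needed.
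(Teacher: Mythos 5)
Your argument matches the paper's proof in all essentials: both hinge on the observation that $\tau > \beta^{(k+1)}_\a$ forces any payment $\ge\tau$ to come from a personal reserve (since the supporting bid is at most $\beta^{(k+1)}_\a < \tau$), both convert the expected winner counts to sums of the form $\sum_\bu\sum_{r\ge\tau,\ \bid{\a}{\bu}\ge r} x_{\bu,r}$ via the rounding distributions, and both then bound the LP term with constraints (1) and (3) after noting $\rev{\a}{p}\ge\tau$ forces the winner's reserve $r_1\ge\tau$. Your step 2 is actually slightly more explicit than the paper's: you verify that a buyer clearing a reserve $\ge\tau$ automatically wins (so $\winners{\a}{\cdot}{\tau}$ is exactly the count of such buyers and the $\min(\cdot,k)$ truncation is vacuous), whereas the paper reaches the same conclusion by noting only that at most $k$ buyers can have nonzero $\Pr[r_\bu>\tau]$.

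One correction to your ``main obstacle'' paragraph: the termwise inequality $(1-\beta)f'_{\bu,r}\ge x_{\bu,r}$ does \emph{not} hold for $r$ with $\tau\le r<t_\bu$ (there $f'_{\bu,r}=0$), so it cannot by itself dissolve the case split on $t_\bu$ versus $\tau$. What actually works — and what you already correctly target in your step 2 — is the combined identity $(1-\beta)f'_{\bu,r}+\beta f_{\bu,r}=x_{\bu,r}$ (an equality under the paper's Remark), which holds for every $r$ and is exactly what makes $(1-\beta)\E[\winners{\a}{\b r'}{\tau}]+\beta\E[\winners{\a}{\b r}{\tau}] = \sum_\bu\sum_{r\ge\tau,\ \bid{\a}{\bu}\ge r} x_{\bu,r}$ come out as an equality with no case analysis. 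Your step-by-step derivation is correct; only the diagnostic comment at the end misidentifies the needed inequality.
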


\begin{proof}
By definition of $\winners{\a}{\b r'}{\tau}$ and $\winners{\a}{\b r}{\tau}$ for any $\tau > \beta_{\a}^{(k+1)} $, we have $$(1-\beta)\E\left[\winners{a}{\b r'}{\tau}\right]   +\beta \E\left[\winners{\a}{\b r}{\tau}\right] =  \left(1-\beta\right) \E\left[ \min\left(\sum_{\bu \in \B} 1_{r'_b > \tau}, k \right)\right] + \beta \E\left[ \min\left(\sum_{\bu \in \B} 1_{r_b > \tau}, k \right)\right].$$
Recall that $\beta^{(k+1)}_{\a}$ is defined in a way that there are exactly $k$ buyers with bids greater than $\beta^{(k+1)}_{\a}$ in auction $\a$. As a result, there are at most $k$ buyers for whom $\Pr[r'_b > \tau]$ or $\Pr[r_b > \tau]$ is nonzero. This means that we can rewrite the inequality as $$(1-\beta)\E\left[\winners{a}{\b r'}{\tau}\right]   +\beta \E\left[\winners{\a}{\b r}{\tau}\right] =  \left(1-\beta\right) \E\left[ \sum_{\bu \in \B} 1_{r'_b > \tau}\right] + \beta \E\left[\sum_{\bu \in \B} 1_{r_b > \tau}\right] = \E\left[\sum_{\bu \in \B} 1_{r_b > \tau}\right].$$
Observe that, by construction of \b{r} and \b{r'} we have \begin{equation} \label{eq:hahah}\E[1_{r_b>\tau}]=\Pr[r_\bu > \tau ] = \frac{1}{\beta}\sum_{r\in (\tau, t_\bu)} x_{\bu, r} \;\; \text{ and } \;\; \E[1_{r'_b>\tau}]=\Pr[r'_\bu > \tau ] = \frac{1}{1-\beta}\sum_{r: r \geq  t_\bu, r>\tau} x_{\bu, r}.\end{equation}
Moreover, by putting the first and  third constraints of the LP together, for any buyer $\bu$ and any reserve price $r > \tau$, we get $\sum_{p\in Q_{\bu, r, \a}} \lps_{\a, p} \leq x_{\bu, r}.$  Note that for any $p=(\bu_1, \bu_2, r_1, r_2) \in \mathcal{S}_\a$, by definition, we have  $ \beta_{\a}^{(k+1)} \geq \bid{\a}{\bu_2}$, which means that if we have $\rev{\a}{p} >\tau$,  then $\rev{\a}{p} = r_1$ and $p \in Q_{\bu_1, r_1, \a}$. This results in the following equations.  $$\sum_{\bu\in \B}\;\;\; \sum_{\mathclap{r\in (\tau, t_\bu)}} x_{\bu, r} \geq \sum_{\mathclap{\substack{p: p\in \mathcal{S}_{\a},\\ \rev{\a}{p} <  t_\bu, \\ \rev{\a}{p} > \tau}}} \lps_{\a, p},\;\; \text{ and } \;\;\; \sum_{\bu\in \B}\,\;\; \sum_{\mathclap{\substack{r: r\geq t_\bu, \\ r>\tau}}} x_{\bu, r} \geq \sum_{\mathclap{\substack{p: p\in \mathcal{S}_{\a},\\ \rev{\a}{p} \geq t_\bu,\\ \rev{\a}{p} > \tau}}} \lps_{\a, p}.$$
Combining these with the equations in (\ref{eq:hahah}) gives us the following equation and concludes the proof.
  $$(1-\beta)\E\left[\winners{a}{\b r'}{\tau}\right]   +\beta \E\left[\winners{\a}{\b r}{\tau}\right] \geq \sum_{\substack{p: p\in \mathcal{S}_\a\\ \rev{\a}{p}>\tau}} \lps_{\a, p}.$$ 
\end{proof}
The following lemma is the most technically challenging part of the paper. Therefore, we have Section~\ref{sec:proof} (almost the rest of the paper) assigned to its proof. 
 \begin{lemma}\label{lem:second-type}
  
Setting $\beta = 0.55$ and $c=0.58$, the following inequality holds for any auction $\a \in \A$ and any  $0<\tau \le  \beta^{(k+1)}_{\a}$.
  \begin{align}   \sum_{\mathclap{\substack{p : p\in \mathcal{S}_{\a},\\ \rev{\a}{p} \geq \tau}}} \lps_{\a, p}  - (1-\beta)\E[\winners{\a}{\b r'}{\tau}]   - \beta \E[\winners{\a}{\b r}{\tau}] ~ &\leq~
  kc.\label{eq:condition2}\end{align}  \end{lemma}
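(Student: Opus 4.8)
\textbf{Proof plan for Lemma~\ref{lem:second-type}.}

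The plan is to fix an auction $\a$ and a threshold $\tau \le \beta^{(k+1)}_{\a}$, and to reduce the inequality to a self-contained optimization problem over the relevant per-buyer quantities. First I would restrict attention to the buyers that matter: writing $q_\bu := \sum_{r \ge \tau} x_{\bu,r}$ for the LP-mass that buyer $\bu$ places on reserves at least $\tau$, and splitting $q_\bu = a_\bu + b_\bu$ according to whether $r < t_\bu$ (the ``discounted'' regime, contributing $a_\bu/\boost$ to $\Pr[r_\bu > \tau]$ after the Remark's normalization) or $r \ge t_\bu$ (the ``inflated'' regime, contributing $b_\bu/(1-\boost)$ to $\Pr[r'_\bu > \tau]$). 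Using Equation~(\ref{eq:hahah}) together with constraints (1) and (3) of the LP (which give $\sum_{p \in \mathcal{Q}_{\bu,r,\a}} \lps_{\a,p} \le x_{\bu,r}$), I would show that the first sum $\sum_{p: \rev{\a}{p}\ge\tau} \lps_{\a,p}$ is controlled by $\sum_\bu q_\bu$ plus a contribution coming from sub-profiles whose revenue reaches $\tau$ only through the supporting buyer's bid $\beta_{\a,\bu_2}$ rather than through $r_1$ — this latter term is where the case $\tau \le \beta^{(k+1)}_{\a}$ genuinely differs from Lemma~\ref{lemma:first-type}, and it is bounded by the total LP-mass on sub-profiles, i.e.\ essentially $k$.

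Next I would set up the quantity to be bounded as
\[
\Psi := \sum_{\bu} q_\bu + (\text{supporting-bid term}) - (1-\boost)\,\E\!\left[\min\Big(\textstyle\sum_\bu 1_{r'_\bu > \tau},\, k\Big)\right] - \boost\,\E\!\left[\min\Big(\textstyle\sum_\bu 1_{r_\bu > \tau},\, k\Big)\right],
\]
and observe that, because $\Pr[r'_\bu>\tau]$ and $\Pr[r_\bu>\tau]$ are now being \emph{clipped at $k$} inside the expectation (unlike in Lemma~\ref{lemma:first-type}, where at most $k$ buyers had nonzero probability), the negative terms are genuinely concave functions of the profile of marginal probabilities. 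The independence across buyers in the construction of $\b r$ and $\b r'$ lets me pass to these marginals cleanly. The crux is then a pure optimization: over all feasible choices of $\{a_\bu, b_\bu\}$ subject to $0 \le a_\bu + b_\bu \le 1$, the per-buyer bounds forced by $t_\bu$ (namely $a_\bu \le \boost$ and $b_\bu \le 1-\boost$ coming from $\sum_{r<t_\bu}x_{\bu,r}=\boost$), and the aggregate constraints linking $\sum_\bu q_\bu$ and the supporting-bid term to $k$, maximize $\Psi$ and show the maximum is at most $kc$ when $\boost = 0.55$ and $c = 0.58$.

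I would handle this optimization by first arguing the worst case is ``scale-free'' — by a symmetrization/averaging argument the extremal instance has all active buyers identical, so it reduces to a one-parameter (or few-parameter) calculation in the common values $a, b$ and an effective count, where the $\min(\cdot,k)$ clipping becomes a function like $k(1 - (1-b/(1-\boost))^{m})$ that I can bound using concavity (Jensen) and elementary inequalities such as $1-(1-x)^m \ge$ a linear lower bound on the relevant range. The approximation factor $\tfrac{1}{1+c} = \tfrac{1}{1.58} \approx 0.63$ then follows from Lemma~\ref{lem:twocond}. The main obstacle I anticipate is precisely this last optimization: correctly accounting for the clipping at $k$ in \emph{both} rounded solutions simultaneously, together with the extra supporting-bid mass that appears only in the $\tau \le \beta^{(k+1)}_\a$ regime, and verifying that no corner of the feasible region (e.g.\ mass concentrated just below $t_\bu$, or split adversarially around $t_\bu$ across many buyers) pushes $\Psi$ above $kc$ — this is where the specific numerical choice of $\boost$ has to be tuned, and where I expect the bulk of the technical work (Section~\ref{sec:proof}) to lie.
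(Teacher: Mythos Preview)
Your setup has a real gap in how you lower-bound $\E[\winners{\a}{\b r}{\tau}]$. Writing it as $\E[\min(\sum_\bu 1_{r_\bu>\tau},k)]$ only credits a winner who pays $\geq\tau$ \emph{through her own reserve}; it ignores the VCG-payment channel, namely the event that more than $k$ buyers in $\B_{\a,\tau}$ clear their (possibly small) reserves, which forces the supporting bid itself to be $\geq\tau$ and makes all $k$ winners pay $\geq\tau$. The paper's Claim~\ref{claim:exprevenue-r} captures exactly this with the bound $\E[\winners{\a}{\b r}{\tau}]\geq\max\bigl(k\Pr[Q_\tau>k],\ \E[\min(P_\tau,k)]\bigr)$, and the $Q_\tau$ branch is indispensable: when the LP puts its mass on sub-profiles whose revenue reaches $\tau$ via $\beta_{\a,\bu_2}$ (your ``supporting-bid term''), the own-reserve count $P_\tau$ can be tiny while $Q_\tau$ is large. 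Relatedly, bounding the supporting-bid term ``by the total LP-mass on sub-profiles, i.e.\ essentially $k$'' is fatal: that alone gives $\Phi_\a(\tau)\le k$, hence $c=1$ and only a $1/2$-approximation. The paper instead isolates this mass as a parameter $\delta_\tau$ and, crucially, uses LP constraint~(4) (compatibility of sub-profiles) to feed $\delta_\tau$ back into a lower bound on $\E[Q_\tau]$ (Lemmas~\ref{lem:expQ1}--\ref{lemma:iuh4iu3}); without constraint~(4) the argument does not close.

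Your symmetrization/``all buyers identical'' reduction is also too optimistic. The extremal structure (Lemma~\ref{lemma:extreme}) only says that among the non-deterministic Bernoulli variables the means coincide; one must still track the number $m=|\{b:\E[q_\bu]=1\}|$ of buyers that are deterministically clearing, and the worst case over $m$ interacts with $\delta_\tau$ in a nontrivial way (Lemma~\ref{lem:lambdaq}). The paper ends up with a genuinely two-parameter problem in $F(\b\lps,\tau)/(k\boost)$ and $\delta_\tau/\boost$, passes to Poisson tails via $n\to\infty$, and finishes with numerical tables and a case analysis (Lemma~\ref{lem:bestbeta}); a clean Jensen/concavity argument on a single parameter will not produce $c=0.58$.
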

 
We are now ready to prove the main result of the paper. Below we restate the main theorem and prove it using the lemmas in this section.
   
\begin{theorem}\label{theorem:mainformal}
\thmmain{}	
\end{theorem}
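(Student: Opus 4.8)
The plan is to assemble Theorem~\ref{theorem:mainformal} from the pieces already in place, treating Lemmas~\ref{lemma:LP-upper}, \ref{lem:twocond}, \ref{lemma:first-type}, and \ref{lem:second-type} as black boxes. First I would fix the parameter choices $\boost = 0.55$ and $c = 0.58$, exactly as in Lemma~\ref{lem:second-type}. The point is that these are the constants needed to satisfy hypothesis~\eqref{eq:lemma2} of Lemma~\ref{lem:twocond}, and I still need to verify hypothesis~\eqref{eq:lemma1} holds for the same $\boost$.

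Next I would discharge the two hypotheses of Lemma~\ref{lem:twocond}. For \eqref{eq:lemma1}: by the definition of $\Phi_a(\tau)$, the inequality $\Phi_a(\tau) \le 0$ for $\tau > \beta^{(k+1)}_\a$ is precisely the statement of Lemma~\ref{lemma:first-type}, which holds for every $\boost \in (0,1)$ and in particular for $\boost = 0.55$. For \eqref{eq:lemma2}: unwinding the definition of $\Phi_a(\tau)$, the inequality $\Phi_a(\tau) \le kc$ for $0 < \tau \le \beta^{(k+1)}_\a$ is exactly inequality~\eqref{eq:condition2} of Lemma~\ref{lem:second-type}, which holds with $\boost = 0.55$ and $c = 0.58$. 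So both hypotheses of Lemma~\ref{lem:twocond} are met with these constants.

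Then Lemma~\ref{lem:twocond} immediately yields that the rounding algorithm run with parameter $\boost = 0.55$ outputs a $\tfrac{1}{1+c}$-approximate solution, where the benchmark in that lemma is $\rev{}{\b \lps}$; by Lemma~\ref{lemma:LP-upper}, $\rev{}{\b \lps}$ upper bounds the revenue of the optimal vector of reserve prices, so the output is a $\tfrac{1}{1+c}$-approximation of the true optimum. Plugging in $c = 0.58$ gives $\tfrac{1}{1+0.58} = \tfrac{1}{1.58} > 0.632 > 0.63$, which is the claimed factor. Finally I would note that the algorithm runs in polynomial time: the LP in Figure~\ref{fig:LP} has polynomially many variables and constraints (since $|\res| \le |\A|\cdot|\B|$ and the number of valid sub-profiles per auction is $O(|\B|^2 |\res|^2)$), it is solvable in polynomial time, and the rounding steps (computing the thresholds $t_\bu$, sampling $\b r$ and $\b r'$, and evaluating $\rev{}{\cdot}$ on the three candidate vectors) are all polynomial-time operations; the algorithm can be derandomized or, as stated, the bound holds in expectation. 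This completes the proof.

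The only genuine content here beyond bookkeeping is making sure the constants line up — that the same $\boost$ works in Lemmas~\ref{lemma:first-type} and~\ref{lem:second-type} and that $\tfrac{1}{1+c}$ with the resulting $c$ clears $0.63$ — so there is no real obstacle at this stage; all the difficulty has been pushed into Lemma~\ref{lem:second-type}, whose proof occupies the remainder of the paper.
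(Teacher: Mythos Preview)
Your proposal is correct and follows essentially the same assembly as the paper's own proof: fix $\boost=0.55$, $c=0.58$, invoke Lemma~\ref{lemma:first-type} for hypothesis~\eqref{eq:lemma1} and Lemma~\ref{lem:second-type} for hypothesis~\eqref{eq:lemma2}, apply Lemma~\ref{lem:twocond}, and check $1/(1+0.58)>0.63$ together with the polynomial-time claim. If anything, your attribution of the two hypotheses to the two lemmas is cleaner than the paper's proof paragraph, which inadvertently swaps the references; your additional remarks on the LP size and derandomization are extra but harmless.
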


\begin{proof}
First, note that the LP designed in Section~\ref{section:lp} has polynomially many variables and constraints. To design our algorithm, we first solve the LP, then given an optimal solution of that use the LP-rounding procedure to output a vector of reserve prices. Since the LP rounding procedure has a polynomial running time, the total running time of the algorithm is polynomial as well. 

To analyze the approximation factor of the algorithm we use Lemma~\ref{lem:twocond}, Lemma~\ref{lem:second-type} and Lemma~\ref{lemma:first-type}. The first lemma states that if there exists a constant $c\geq 0$ and a valuation for parameter $\beta$ that satisfies Equation~\ref{eq:lemma1} for any $\tau > \beta^{(k+1)}_{\a}$, and satisfies Equation~\ref{eq:lemma2} for any $\tau \leq \beta^{(k+1)}_{\a}$, then our rounding algorithm is a $\frac{1}{1+c}$-approximation. In Lemma~\ref{lem:second-type}, we prove that Equation~\ref{eq:lemma1} holds for any $\beta \in (0, 1)$ and any $\tau > \beta^{(k+1)}_{\a}$. Moreover,   based on Lemma~\ref{lemma:first-type} we  have that by setting $\boost = 0.55$,  Equation~\ref{eq:lemma2} holds for any $\tau \leq \beta^{(k+1)}_{\a}$ and $c=0.58$. This implies that by setting $\boost = 0.55$ in the rounding algorithm, its output is a  0.63-approximation of the optimal solution since $\frac{1}{1+0.58} > 0.63$. 
\end{proof}

\section{Proof of Lemma~\ref{lem:second-type}}
\label{section:lemma}
\label{sec:proof} 
In this section, we will consider an arbitrary auction $\a \in \A$ and any constant $0< \tau \leq \beta^{(k+1)}_\a$,  and focus on finding a constant $c$ and a valuation for $\boost$ that satisfy \begin{align}\nonumber   \sum_{\mathclap{\substack{p : p\in \mathcal{S}_{\a},\\ \rev{\a}{p} \geq \tau}}} \lps_{\a, p}  - (1-\beta)\E[\winners{\a}{\b r'}{\tau}]   -  kc\leq \beta \E[\winners{\a}{\b r}{\tau}].\end{align}
  Denote by $\B_{\a, \tau} $ the set of buyers whose bid in auction $\a$ is greater than or equal to $\tau$. Formally, we have $\B_{\a, \tau} := \{\bu\in \B: \bid{\a}{\bu} \geq \tau\}.$
 Throughout this section, since we assume that $\a$ can be any arbitrary auction from $\A$, we will abbreviate all notations by dropping $\a$ for simplicity when clear from the context. 
Let us define function $F(\b \lps, \tau)$ as follows. 
\begin{align}\label{eq:defn-F}
F(\b \lps, \tau) = \sum_{\mathclap{\substack{p : p\in \mathcal{S}_{\a},\\ \rev{\a}{p} \geq \tau}}} \lps_{\a, p}  - (1-\beta)\E[\winners{\a}{\b r'}{\tau}]
\end{align}
  We will consider different values of $F(\b \lps, \tau)$ as a function of $\beta$ and  give a lower bound for $\E[\winners{\a}{\b r}{\tau}]$ based on that. 
Consider a buyer $\bu\in \B_{\a, \tau}$. Let us define Bernoulli random variables $p_{\bu, \tau}$ and $q_{\bu}$ to be respectively equal to one iff $r_\bu\in [\tau, \bid{\a}{\bu}]$ and equal to one iff $r_{\bu}\in [0, \bid{\a}{\bu}]$. Moreover, let $P_{\tau}=\sum_{\bu\in \B_{\tau}} p_{b,\tau}$ and $Q_{\tau}=\sum_{\bu \in \B_{\tau}} q_{\bu}$. 
\begin{claim}\label{claim:exprevenue-r}
The expected revenue obtained from the vector of reserve prices $\b r$ is as follows.
 \begin{align} \E[\winners{\a}{\b r}{\tau}] \geq \max\big(k.\Pr[Q_\tau>k], \E[\min(P_\tau, k)] \big)	 \nonumber \end{align}
\end{claim}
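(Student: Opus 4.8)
The plan is to prove the claimed inequality by establishing two separate pointwise lower bounds on the nonnegative integer‑valued random variable $\winners{\a}{\b r}{\tau}$ — one matching $k\cdot\Pr[Q_\tau>k]$ and one matching $\E[\min(P_\tau,k)]$ — and then taking expectations. I will rely on two structural facts about the eager VCG outcome under the (random) discounted reserves $\b r$: first, because of the $k+1$ auxiliary zero‑bidders, there are always exactly $k$ winners and a well‑defined supporting buyer $\bu_s$; and second, a winning buyer $\bu$ pays $\max(r_\bu,\beta_{\a,\bu_s})$, hence pays at least $\tau$ whenever $r_\bu\ge\tau$ or $\beta_{\a,\bu_s}\ge\tau$. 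I will also record two bookkeeping observations: since reserves are nonnegative and every buyer outside $\B_\tau$ bids below $\tau$, the variable $Q_\tau$ equals exactly the number of cleared buyers whose bid is at least $\tau$; and $P_\tau\le Q_\tau$ always, since $p_{\bu,\tau}=1$ forces $q_\bu=1$.

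For the first bound I will argue that on the event $\{Q_\tau>k\}$ we actually have $\winners{\a}{\b r}{\tau}=k$. Indeed, if at least $k+1$ cleared buyers have bid $\ge\tau$, then the $(k+1)$‑st highest bid among the cleared buyers, which is precisely the supporting buyer's bid $\beta_{\a,\bu_s}$, is at least $\tau$; so by the payment rule every one of the $k$ winners pays at least $\tau$. Since $\winners{\a}{\b r}{\tau}\ge 0$ always and equals $k$ on $\{Q_\tau>k\}$, taking expectations gives $\E[\winners{\a}{\b r}{\tau}]\ge k\Pr[Q_\tau>k]$.

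For the second bound I will show $\winners{\a}{\b r}{\tau}\ge\min(P_\tau,k)$ pointwise. On $\{Q_\tau>k\}$ this is immediate from the previous paragraph, as there $\winners{\a}{\b r}{\tau}=k\ge\min(P_\tau,k)$. On the complementary event $Q_\tau\le k$, the cleared buyers with bid $\ge\tau$ (there are $Q_\tau$ of them) are automatically among the top‑$k$ bidders of the cleared set, because every other cleared buyer bids strictly below $\tau$; hence all of them win. In particular every buyer with $p_{\bu,\tau}=1$ — which forces that the buyer clears, has bid $\ge\tau$, and has reserve $\ge\tau$ — is a winner and pays at least $\tau$; there are $P_\tau$ such buyers and $P_\tau\le Q_\tau\le k$, so $\winners{\a}{\b r}{\tau}\ge P_\tau=\min(P_\tau,k)$. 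Taking expectations yields $\E[\winners{\a}{\b r}{\tau}]\ge\E[\min(P_\tau,k)]$, and combining the two bounds proves the claim.

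I expect the main subtlety to be the case analysis in the second bound — specifically the step asserting that when $Q_\tau\le k$ every high‑bid cleared buyer wins. This requires carefully identifying ``cleared buyers with bid $\ge\tau$'' with the set $\{\bu\in\B_\tau:q_\bu=1\}$, and using that the supporting buyer is the $(k+1)$‑st highest bidder \emph{within the cleared set}, not among all bidders. Everything else reduces to monotonicity of expectation once the two pointwise inequalities are in place; in fact the same analysis shows $\winners{\a}{\b r}{\tau}$ equals $k$ when $Q_\tau>k$ and equals $P_\tau$ otherwise, so neither of the two bounds is far from tight.
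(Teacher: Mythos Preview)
Your proof is correct and follows essentially the same approach as the paper's own proof: argue that $Q_\tau>k$ forces all $k$ winners to pay at least $\tau$ (since the supporting bid is then $\ge\tau$), and that the $P_\tau$ buyers with reserve in $[\tau,\beta_{\a,\bu}]$ contribute at least $\min(P_\tau,k)$ winners paying $\ge\tau$. Your version is in fact more careful than the paper's brief argument, as you supply the case split on $Q_\tau\le k$ versus $Q_\tau>k$ to justify the second bound, whereas the paper simply asserts it.
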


\begin{proof}
	Note that $Q_\tau$ is a random variable representing the number of buyers whose bid is cleared and is greater than or equal to $\tau$; therefore, $Q_\tau>k$ is the event in which at least $k+1$ buyers have cleared their bid of at least $\tau$ which results in $k$ items being sold with a price of at least $\tau$.  Moreover, $P_\tau$ denotes the number of buyers whose bid is cleared with a reserve of at least $\tau$; thus, we sell at least $\min(P_\tau, k)$ of our items with a price of at least $\tau$.  This means that the expected number of items that are sold with a price of at least $\tau$ is lower bounded by  $\max(k.\Pr[Q_\tau>k], \E[\min(P_\tau, k)]).$
	\end{proof}

Let $\mathcal{T}_{\a, \tau}$ be the set of sub-profiles in $\mathcal{S}_{\a}$ whose revenue is at least  $\tau$. In the other words, 
$$ \mathcal{T}_{\a, \tau} := \{p\in \mathcal{S}_{\a} |\rev{\a}{p} \geq \tau \}.$$
We partition $\mathcal{T}_{\a, \tau}$  to three disjoint subsets  denoted by $\mathcal{J}^+_{\tau}, \mathcal{J}^{-}_{ \tau}$, and $ \mathcal{L}_\tau$ as follows.
Set $\mathcal{J}^{+}_{\tau}$ is the set of sub-profiles in $\mathcal{T}_{\tau}$ that capture the scenarios in which the supporting buyer has a bid smaller than $\tau$ and the winner's reserve price is greater than or equal to its threshold $t_{\bu}$ (defined in the algorithm).
$$ \mathcal{J}^{+}_{\tau} := \{p=(\bu', \bu, r', r)\in \mathcal{T}_{\tau} | \, \bu \notin \B_\tau \text{ and } r' \geq t_{b'}\}.$$
We similarly define $\mathcal{J}^{-}_{\bu, \tau}$ to be the set of sub-profiles in $\mathcal{T}_{\tau}$ in which the supporting buyer has a bid smaller than $\tau$ and the reserve price of the winner is below its threshold $t_\bu$. 
$$ \mathcal{J}^{-}_{\tau} := \{p=(\bu', \bu, r', r)\in \mathcal{T}_{\tau} | \, \bu \notin \B_\tau  \text{ and } r' < t_{b'}\}. $$
Moreover, $\mathcal{L}_{\tau}$ defined below denotes the set of sub-profiles in $\mathcal{S}_{\a}$ that capture the scenarios in which the supporting buyer has a bid greater than or equal to $\tau$.
$$\mathcal{L}_{ \tau} := \{p=(\bu', \bu, r', r)\in \mathcal{T}_{\tau} | \,\bu \in \B_\tau \}.$$
Further, based on this set, we define 
\begin{align}\label{eq:defn-delta-tau}
\delta_\tau := \sum_{\bu\in \B_\tau} \sum_{p \in \mathcal{L}_{\tau} \cap \mathcal{Q'}_{\bu}} \lps_{\a,p}/k.
\end{align}
Observe that the defined subsets of $\mathcal{T}_\tau$ satisfy $ \mathcal{T}_\tau= \mathcal{J}^{+}_{\tau}\cup \mathcal{J}^{-}_{\tau} \cup \mathcal{L}_{ \tau}.$

Given Claim~\ref{claim:exprevenue-r}, we now need to find a lower bound for $\max\big(k.\Pr[Q_\tau>k], \E[\min(P_\tau, k)] \big)$ as a function of $F(\lps, \tau)$ and $\delta_\tau$. To get this,   we start by giving lower bounds for $\E[Q_\tau]$ and $\E[P_\tau]$ in the following section, then use the expected value of these random variables to bound the value of the functions $k.\Pr[Q_\tau>k]$ and $\E[\min(P_\tau, k)]$ in Section~\ref{section:discounted-rev}. Note that all these bound will be functions of $F(\lps, \tau)$ and $\delta_\tau$.

\subsection{lower bounds for $\E[P_{\tau}]$ and $\E[Q_{\tau}]$} \label{sec:theiejdie}
In this section, we start by investigating useful facts about set $\mathcal{T}_\tau$, random variables $\b p$ and $\b q$ and the relation between them which finally leads to lower bounds for $\E[P_{\tau}]$ and $\E[Q_{\tau}]$. Let us mention that to prevent interruptions to the flow of the paper, proofs of some of the lemmas in this section are deferred to Section~\ref{app-pfs-1}.

We start by obtaining a lower bound for $\E[P_\tau]$, for which we make the three claims below.

\newcommand{\claimeqone}{The following holds:
$ \sum_{\substack{p \in \mathcal{L}_{\tau}}} \lps_{\a,p} = k\delta_{\tau}.$}
\begin{claim}\label{claim:eq1}
 \claimeqone
\end{claim}
\begin{proof}
Recalling definition \eqref{eq:defn-delta-tau},
it suffices to show 
$\bigcup_{\bu\in \B_\tau}	 \left(\mathcal{L}_\tau \cap \mathcal{Q'}_\bu\right)=  \mathcal{L}_\tau$,
as it results in  $$\delta_\tau = \sum_{\bu\in \mathcal{L}_\tau} \lps_{\a,p}/k.$$
A valid sub-profile $p=(\bu_1, \bu_2, r_1, r_2)$ is in $\bigcup_{\bu\in \B_\tau} \mathcal{Q'}_\bu$ iff $\bu_2 \in \B_\tau$ which also means $\rev{\a}{p} = \max(r_1, \bid{\a}{\bu}) \geq \tau$ and $p\in \mathcal{T}_\tau$. To complete the proof observe that this is indeed the definition of set $\mathcal{L}_\tau$ which is $\mathcal{L}_{ \tau} := \left\{(\bu', \bu, r', r)\in \mathcal{T}_{\tau} | \,\bu \in \B_\tau \right\}.$
\end{proof}

\newcommand{\claimpb}{For any buyer $\bu\in \B$ we have $$\E[p_{b, \tau}] \geq \frac{1}{\beta} ( \;\;\sum_{\mathclap{\substack{p \in \mathcal{J}^-_{\tau} \cap \mathcal{Q}_{\bu}}}} \lps_{\a,p}).$$}
\begin{claim}\label{claim:pb}
\claimpb	
\end{claim}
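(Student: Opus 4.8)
The plan is to compute $\E[p_{\bu,\tau}] = \Pr[r_\bu \in [\tau, \bid{\a}{\bu}]]$ directly from the way the discounted reserves $\b r$ are drawn, and then bound the resulting expression below by the $\lps$-mass sitting on $\mathcal{J}^-_\tau \cap \mathcal{Q}_\bu$, using LP constraints (1) and (3). First I would fix the buyer $\bu$ and invoke the remark following the rounding procedure: $\Pr[r_\bu = \rho] = x_{\bu,\rho}/\beta$ for every $\rho \in \res$ with $\rho < t_\bu$, and $\Pr[r_\bu = \rho] = 0$ for $\rho \ge t_\bu$. Hence
$$\E[p_{\bu,\tau}] \;=\; \Pr[r_\bu \in [\tau, \bid{\a}{\bu}]] \;=\; \frac{1}{\beta}\sum_{\substack{\rho \in \res:\ \tau \le \rho \le \bid{\a}{\bu},\\ \rho < t_\bu}} x_{\bu,\rho},$$
so it suffices to show that this last sum dominates $\sum_{p \in \mathcal{J}^-_\tau \cap \mathcal{Q}_\bu} \lps_{\a,p}$.

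The key structural step is to identify exactly which reserve values can appear for $\bu$ in a sub-profile of $\mathcal{J}^-_\tau \cap \mathcal{Q}_\bu$. Take any $p = (\bu, \bu_2, r', r_2) \in \mathcal{J}^-_\tau \cap \mathcal{Q}_\bu$, so $\bu$ is the winner with reserve $r'$. Membership in $\mathcal{J}^-_\tau$ forces both $\bu_2 \notin \B_\tau$ (i.e. $\bid{\a}{\bu_2} < \tau$) and $r' < t_\bu$. Since $p \in \mathcal{T}_\tau$ we have $\rev{\a}{p} = \max(r', \bid{\a}{\bu_2}) \ge \tau$, which together with $\bid{\a}{\bu_2} < \tau$ gives $r' \ge \tau$; and validity of the sub-profile gives $r' \le \bid{\a}{\bu}$. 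Thus every such $p$ has its winner-reserve $r'$ in precisely the index set appearing in the display above, and moreover lies in $\mathcal{Q}_{\bu, r', \a}$. Since the sets $\mathcal{Q}_{\bu, \rho, \a}$ are pairwise disjoint over $\rho$, this yields $\mathcal{J}^-_\tau \cap \mathcal{Q}_\bu \subseteq \bigcup_{\rho} \mathcal{Q}_{\bu,\rho,\a}$, where the union runs over $\rho$ with $\tau \le \rho \le \bid{\a}{\bu}$ and $\rho < t_\bu$.

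Putting these together, I would estimate
$$\sum_{p \in \mathcal{J}^-_\tau \cap \mathcal{Q}_\bu} \lps_{\a,p} \;\le\; \sum_{\substack{\rho \in \res:\ \tau \le \rho \le \bid{\a}{\bu},\\ \rho < t_\bu}}\ \ \sum_{p \in \mathcal{Q}_{\bu,\rho,\a}} \lps_{\a,p} \;=\; \sum_{\substack{\rho:\ \tau \le \rho \le \bid{\a}{\bu},\\ \rho < t_\bu}} y_{\bu,\rho,\a} \;\le\; \sum_{\substack{\rho:\ \tau \le \rho \le \bid{\a}{\bu},\\ \rho < t_\bu}} x_{\bu,\rho},$$
where the equality is LP constraint (1) and the final inequality is LP constraint (3) after dropping the nonnegative term $y'_{\bu,\rho,\a}$. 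Comparing with the formula for $\E[p_{\bu,\tau}]$ from the first paragraph gives $\E[p_{\bu,\tau}] \ge \tfrac{1}{\beta}\sum_{p \in \mathcal{J}^-_\tau \cap \mathcal{Q}_\bu}\lps_{\a,p}$, which is the claim.

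This argument is mostly bookkeeping, and I do not expect a serious obstacle; the one point needing care is the deduction $r' \ge \tau$. It is exactly the combination of ``supporting buyer below $\tau$'' (coming from $\mathcal{J}^-_\tau$) together with ``revenue at least $\tau$'' (coming from $\mathcal{T}_\tau$) that confines the winner's reserve to the interval $[\tau, \bid{\a}{\bu}]$ on which the discounted-reserve distribution places mass $x_{\bu,\cdot}/\beta$, so that no $\lps$-mass is lost in the first inequality and no $x$-mass outside $[\tau, t_\bu)$ is ever needed. The degenerate situations ($\bid{\a}{\bu} < \tau$, or $t_\bu \le \tau$) make both the empty sum and $\mathcal{J}^-_\tau \cap \mathcal{Q}_\bu$ vanish, so they require no separate treatment.
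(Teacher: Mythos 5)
Your proof is correct and follows essentially the same route as the paper: expand $\E[p_{\bu,\tau}]$ via the construction of $\b r$, identify that every sub-profile in $\mathcal{J}^-_\tau \cap \mathcal{Q}_\bu$ has its winner-reserve in $[\tau, \bid{\a}{\bu}] \cap [0, t_\bu)$, and then chain LP constraints (1) and (3). If anything, your write-up of the set-containment step is cleaner than the paper's, which has a couple of sign/membership typos (e.g.\ writing $\bu_2 \in \B_\tau$ where the definition of $\mathcal{J}^-_\tau$ requires $\bu_2 \notin \B_\tau$, and $r \geq \bid{\a}{\bu}$ where validity gives $r \leq \bid{\a}{\bu}$).
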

\begin{proof}
By construction, for vector of reserve prices $\b r$ and any buyer $\bu$ we have $$\E[p_{\bu, \tau}] = \Pr\left[r_{\bu} \in [\tau, \bid{\a}{\bu}]\right]  =  \sum_{\mathclap{r \in [\tau, \bid{\a}{\bu}]}} f_{\bu, r},$$ where $ f_{\bu, r} =x_{b,r}/\beta$ for any $r < t_{\bu}$ as defined in the algorithm. This yields that $$\E[p_{\bu, \tau}] \geq \sum_{\mathclap{\substack{r:r< t_{\bu}, \\ r \in [\tau, \bid{\a}{\bu}] }}}\, \frac{x_{b,r}}{\beta} \geq \sum_{\mathclap{\substack{r:r< t_{\bu}, \\ r \in [\tau, \bid{\a}{\bu}] }}} \;\;\;\;\;\;\, \sum_{\mathclap{p\in \mathcal{Q}_{\bu, r, \a}}} \;\, \frac{s_{\a, p}}{\beta},$$
where the second inequality is by the first and third constraints of the LP. To complete the proof it suffices to show that $$(\mathcal{J}^-_{\tau} \cap \mathcal{Q}_{\bu}) \subset \bigcup_{\mathclap{\substack{r:r< t_{\bu}, \\ r \in [\tau, \bid{\a}{\bu}] }}}\mathcal{Q}_{\bu, r, \a}.$$ Observe that we have $$(\mathcal{J}^-_{\tau} \cap \mathcal{Q}_{\bu}) = \left\{(\bu, \bu_2, r, r_2)\in \mathcal{T}_{\tau} | \, r\in \res, r_2\in \res,  \bu_2\in \B_\tau \text{ and } r < t_{\bu}\right\}.$$
Moroever, note that for any sub-profile $p=(\bu,\bu_2, r, r_2)\in\mathcal{T}_{\tau}$ we have $\rev{\a}{p} = \max(r, \bid{\a}{\bu_2})\geq\tau$, thus if $\bid{\a}{\bu_2} < \tau$ then $r \geq \tau$. As a result we have $$(\mathcal{J}^-_{\tau} \cap \mathcal{Q}_{\bu}) \subset \left\{(\bu, \bu_2, r, r_2)\in \mathcal{S}_{\a} |\, r\in \res, r_2\in \res, \bu_2\in \B_\tau \text{ and } r < t_{\bu}\right\}.$$
Further, since $\mathcal{Q}_{\bu,r,\a} := \{ (\bu, \bu_2, r, r_2) \in \mathcal{S}_{\a} | \bu_2 \in \B, r_2\in \res \},$ then
$$\bigcup_{\mathclap{\substack{r:r< t_{\bu}, \\ r \in [\tau, \bid{\a}{\bu}] }}}\mathcal{Q}_{\bu, r, \a} = \left\{(\bu, \bu_2, r, r_2) \in \mathcal{S}_\a |\, r\in \res, r_2\in \res, \bu_2\in \B,  r \geq \tau, r\geq \bid{\a}{\bu}, r<t_\bu \right\}.$$
Note that any $(\bu, \bu_2, r, r_2) \in \mathcal{S}_\a$ satisfies $r\geq \bid{\a}{\bu}$, therefore we get 
 $$(\mathcal{J}^-_{\tau} \cap \mathcal{Q}_{\bu}) \subset \bigcup_{\mathclap{\substack{r:r< t_{\bu}, \\ r \in [\tau, \bid{\a}{\bu}] }}}\mathcal{Q}_{\bu, r, \a}.\qedhere$$
 
\end{proof}

The following lemma is the last piece that we need to get the desired lower bound for $\E[P_\tau]$ in Lemma~\ref{lem:expP}. The proof of this lemma due to being lengthy is deferred to Section~\ref{app-pfs-1}.

\newcommand{\claimFS}{The following inequality holds.
$$F(\b \lps, \tau) \leq \sum_{ \mathclap{\substack{p\in \mathcal{J}^-_\tau }}} \lps_{\a, p} + \sum_{ \mathclap{\substack{p\in \mathcal{L}_\tau }}} \lps_{\a, p}.$$}
\begin{lemma}\label{lemma:FS} 
\claimFS
\end{lemma}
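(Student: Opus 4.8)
\textbf{Proof proposal for Lemma~\ref{lemma:FS}.}

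The plan is to expand the definition of $F(\b\lps,\tau)$ and show that the ``negative'' term $(1-\beta)\E[\winners{\a}{\b r'}{\tau}]$ absorbs exactly the contribution of the sub-profiles in $\mathcal{J}^+_\tau$, leaving only $\sum_{p\in\mathcal J^-_\tau}\lps_{\a,p}+\sum_{p\in\mathcal L_\tau}\lps_{\a,p}$ on the right. Concretely, since $\mathcal T_\tau=\mathcal J^+_\tau\sqcup\mathcal J^-_\tau\sqcup\mathcal L_\tau$, we may write
\[
F(\b\lps,\tau)=\sum_{p\in\mathcal J^+_\tau}\lps_{\a,p}+\sum_{p\in\mathcal J^-_\tau}\lps_{\a,p}+\sum_{p\in\mathcal L_\tau}\lps_{\a,p}-(1-\beta)\E[\winners{\a}{\b r'}{\tau}],
\]
so it suffices to prove the single inequality
\[
(1-\beta)\,\E[\winners{\a}{\b r'}{\tau}]\;\ge\;\sum_{p\in\mathcal J^+_\tau}\lps_{\a,p}.
\]

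First I would unpack the right-hand side. A sub-profile $p=(\bu',\bu,r',r)\in\mathcal J^+_\tau$ has supporting buyer $\bu\notin\B_\tau$, so $\bid{\a}{\bu}<\tau$; combined with $\rev{\a}{p}=\max(\bid{\a}{\bu},r')\ge\tau$ this forces $\rev{\a}{p}=r'\ge\tau$, and moreover $r'\ge t_{\bu'}$ by definition of $\mathcal J^+_\tau$. Hence every such $p$ lies in some $\mathcal Q_{\bu',r',\a}$ with $r'\ge\max(\tau,t_{\bu'})$. Summing over winners and grouping by $(\bu',r')$,
\[
\sum_{p\in\mathcal J^+_\tau}\lps_{\a,p}\;\le\;\sum_{\bu'\in\B}\;\sum_{\substack{r'\in\res\\ r'\ge t_{\bu'},\,r'\ge\tau}}\;\sum_{p\in\mathcal Q_{\bu',r',\a}}\lps_{\a,p}\;\le\;\sum_{\bu'\in\B}\;\sum_{\substack{r'\ge t_{\bu'},\,r'\ge\tau}} x_{\bu',r'},
\]
the last step being constraints (1) and (3) of the LP (exactly as used in the proof of Lemma~\ref{lemma:first-type}). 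On the other hand, by the construction of the inflated reserves (and the remark normalizing $t_\bu$), $\Pr[r'_{\bu'}>\tau]=\frac{1}{1-\beta}\sum_{r'\ge t_{\bu'},\,r'>\tau}x_{\bu',r'}$; accounting for the possibility $r'=\tau\ge t_{\bu'}$ contributing $x_{\bu',\tau}$ requires a small amount of care but only helps the inequality, so $(1-\beta)\sum_{\bu'}\Pr[r'_{\bu'}>\tau]\ge\sum_{\bu'}\sum_{r'\ge t_{\bu'},\,r'\ge\tau}x_{\bu',r'}$ up to that boundary term. Finally, as in Lemma~\ref{lemma:first-type}, since there are at most $k$ buyers with bid exceeding $\beta^{(k+1)}_\a\ge\tau$, at most $k$ buyers have $\Pr[r'_{\bu'}>\tau]\ne0$, so $\E[\winners{\a}{\b r'}{\tau}]=\E[\min(\sum_{\bu'}1_{r'_{\bu'}>\tau},k)]=\sum_{\bu'}\Pr[r'_{\bu'}>\tau]$. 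Chaining these gives $(1-\beta)\E[\winners{\a}{\b r'}{\tau}]\ge\sum_{p\in\mathcal J^+_\tau}\lps_{\a,p}$, as desired.

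The main obstacle I anticipate is the careful bookkeeping at the threshold $r'=t_{\bu'}$ and at $r'=\tau$: the definitions of $\mathcal J^+_\tau$ use $r'\ge t_{\bu'}$ while the distribution $\b f'$ splits the mass at $r'=t_{\bu'}$, and the event $\{r'_{\bu'}>\tau\}$ is a strict inequality whereas $\rev{\a}{p}\ge\tau$ is non-strict. One must check that these boundary discrepancies never make the right-hand side exceed the left — intuitively they do not, because any profile $p$ with $\rev{\a}{p}=\tau$ and winner's reserve exactly $\tau$ still has its $x_{\bu',\tau}$ mass available, and the normalization in the Remark makes $\sum_{r<t_\bu}x_{\bu,r}=\beta$ exact. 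I would handle this by treating the cases $\tau\le t_{\bu'}$ and $\tau>t_{\bu'}$ separately for each buyer, which is the only place routine calculation is needed; everything else is a direct translation of LP constraints (1) and (3) together with the definition of the inflated-reserve distribution.
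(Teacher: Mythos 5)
Your high-level plan matches the paper's: reduce to showing $(1-\beta)\E[\winners{\a}{\b r'}{\tau}]\ge\sum_{p\in\mathcal J^+_\tau}\lps_{\a,p}$, and the bound $\sum_{p\in\mathcal J^+_\tau}\lps_{\a,p}\le\sum_{\bu'}\sum_{r'\ge t_{\bu'},\,r'\in[\tau,\bid{\a}{\bu'}]}x_{\bu',r'}$ via LP constraints (1) and (3) is exactly Claim~\ref{claim:1935}. But there is a genuine gap at the step where you assert
$\E[\winners{\a}{\b r'}{\tau}]=\E[\min(\sum_{\bu'}1_{r'_{\bu'}>\tau},k)]=\sum_{\bu'}\Pr[r'_{\bu'}>\tau]$,
justified by ``at most $k$ buyers have $\Pr[r'_{\bu'}>\tau]\ne0$ because at most $k$ buyers have bid exceeding $\beta^{(k+1)}_\a\ge\tau$.'' That reasoning works in Lemma~\ref{lemma:first-type}, where $\tau>\beta^{(k+1)}_\a$ forces at most $k$ buyers to have bid $\ge\tau$; but Lemma~\ref{lemma:FS} sits in the regime $\tau\le\beta^{(k+1)}_\a$, where at least $k+1$ buyers have bid $\ge\tau$ and hence more than $k$ buyers can simultaneously realize $\bid{\a}{\bu'}\ge r'_{\bu'}\ge\tau$. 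In that event the $\min(\cdot,k)$ truncation bites and the equality with $\sum_{\bu'}\Pr[\cdot]$ fails; moreover the identification of $\winners{\a}{\b r'}{\tau}$ with the number of buyers clearing a reserve above $\tau$ breaks down (some of those buyers may not win at all). So the chain $(1-\beta)\E[\winners{\a}{\b r'}{\tau}]\ge\sum_{\bu'}\sum_{r'\ge t_{\bu'},\,r'\ge\tau}x_{\bu',r'}$ does not follow as written.

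The paper closes precisely this gap with Assumption~\ref{assumption:1}: one may assume $\E[\winners{\a}{\b r'}{\tau}]=\sum_{\bu}\Pr[\bid{\a}{\bu}\ge r'_\bu\ge\tau]$ without loss of generality, because Claim~\ref{claim:rmorek} shows that when this equality fails (i.e., the right side exceeds the left), at least $k+1$ buyers deterministically clear their discounted reserves, so $\E[\winners{\a}{\b r}{\tau}]=k$ and the target inequality of Lemma~\ref{lem:second-type} becomes trivially true. Lemma~\ref{lemma:FS} is then proved only under that standing assumption, which is what licenses replacing $\E[\winners{\a}{\b r'}{\tau}]$ by the sum of probabilities. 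Your proof never introduces (or uses) this reduction, and the boundary bookkeeping you flag at $r'=t_{\bu'}$ and $r'=\tau$ is not where the difficulty actually lies; the missing ingredient is the case split handled by Claim~\ref{claim:rmorek} and Assumption~\ref{assumption:1}.
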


\begin{lemma}\label{lem:expP}
	We have the following lower bound for $\E[P_\tau]:$ $$\E[P_\tau] \geq \frac{F(\b \lps, \tau)- k\delta_\tau}{\beta}.$$ 
\end{lemma}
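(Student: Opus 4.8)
The plan is to express $\E[P_\tau]$ as a sum over buyers of $\E[p_{\bu,\tau}]$, bound each term from below using Claim~\ref{claim:pb}, and then assemble the pieces with Lemma~\ref{lemma:FS} and Claim~\ref{claim:eq1}. Concretely, since $P_\tau = \sum_{\bu \in \B_\tau} p_{\bu,\tau}$, linearity of expectation gives $\E[P_\tau] = \sum_{\bu \in \B_\tau} \E[p_{\bu,\tau}]$. (One should also note that for $\bu \notin \B_\tau$ we have $p_{\bu,\tau} = 0$ since then $\bid{\a}{\bu} < \tau$ and the interval $[\tau, \bid{\a}{\bu}]$ is empty, so the sum over $\B_\tau$ loses nothing; alternatively one can extend the sum over all of $\B$ and use Claim~\ref{claim:pb} for every buyer.) Applying Claim~\ref{claim:pb} to each $\bu$ and summing yields
\[
\E[P_\tau] \;\ge\; \frac{1}{\beta}\sum_{\bu \in \B}\; \sum_{p \in \mathcal{J}^-_\tau \cap \mathcal{Q}_\bu}\lps_{\a,p}
\;=\; \frac{1}{\beta}\sum_{p \in \mathcal{J}^-_\tau}\lps_{\a,p},
\]
where the last equality holds because the sets $\{\mathcal{Q}_\bu\}_{\bu \in \B}$ partition $\mathcal{S}_\a$ according to the identity of the winner, so each $p \in \mathcal{J}^-_\tau$ is counted exactly once.

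Next I would invoke Lemma~\ref{lemma:FS}, which states $F(\b\lps,\tau) \le \sum_{p\in\mathcal{J}^-_\tau}\lps_{\a,p} + \sum_{p\in\mathcal{L}_\tau}\lps_{\a,p}$. Rearranging gives $\sum_{p\in\mathcal{J}^-_\tau}\lps_{\a,p} \ge F(\b\lps,\tau) - \sum_{p\in\mathcal{L}_\tau}\lps_{\a,p}$. Finally, Claim~\ref{claim:eq1} tells us $\sum_{p\in\mathcal{L}_\tau}\lps_{\a,p} = k\delta_\tau$, so substituting we obtain
\[
\E[P_\tau] \;\ge\; \frac{1}{\beta}\Bigl(F(\b\lps,\tau) - k\delta_\tau\Bigr),
\]
which is exactly the claimed bound.

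I do not expect any serious obstacle here: the proof is essentially bookkeeping that stitches together three already-established facts (Claim~\ref{claim:pb}, Lemma~\ref{lemma:FS}, Claim~\ref{claim:eq1}). The only point requiring a moment's care is the disjointness/partition step — verifying that summing the per-buyer bounds of Claim~\ref{claim:pb} over $\bu$ recovers exactly $\sum_{p\in\mathcal{J}^-_\tau}\lps_{\a,p}$ with no double counting — which follows because every valid sub-profile has a unique winner $\bu_1$ and hence lies in exactly one $\mathcal{Q}_\bu$. All the genuinely hard work has been front-loaded into Lemma~\ref{lemma:FS}, whose proof the paper defers to Section~\ref{app-pfs-1}.
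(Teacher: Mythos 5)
Your proof is correct and takes essentially the same route as the paper's: both chain together Claim~\ref{claim:pb}, Lemma~\ref{lemma:FS}, and Claim~\ref{claim:eq1}. The only difference is that you make explicit the summation-over-buyers step (using that the sets $\mathcal{Q}_\bu$ partition $\mathcal{S}_\a$ by winner, so there is no double counting, and that the terms for $\bu\notin\B_\tau$ vanish), which the paper's proof merely asserts as an immediate consequence of Claim~\ref{claim:pb}.
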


\begin{proof}
Based on Lemma~\ref{lemma:FS} we have $$F(\b \lps, \tau) - \sum_{ \mathclap{\substack{p\in \mathcal{L}_\tau }}} \lps_{\a, p}\leq \sum_{ \mathclap{\substack{p\in \mathcal{J}^-_\tau }}} \lps_{\a, p}.$$ Combining this by 
$ \sum_{\substack{p \in \mathcal{L}_{\tau}}} \lps_{\a,p} = k\delta_\tau$ from Claim~\ref{claim:eq1} and diving both sides by $\boost$ gives us:
$$ \frac{F(\b \lps, \tau)- k\delta_\tau}{\boost} \leq \frac{1}{\boost}\sum_{ \mathclap{\substack{p\in \mathcal{J}^-_\tau }}} \lps_{\a, p}  $$ 
We conclude the proof by noting that as a result of Claim~\ref{claim:pb}, we have $\E[P_\tau]\geq \frac{1}{\beta}\sum_{\substack{p \in \mathcal{J}^-_{\tau}}} \lps_{\a,p}.$ 
\end{proof}
Getting the desired lower bound for $\E[Q_\tau]$  is however more complicated than that of $\E[P_\tau]$. 
In Lemma~\ref{lem:expQ1} and Lemma~\ref{lem:expQ2} we give two different lower bounds for $\E[Q_\tau]$ which we then merge in Lemma~\ref{lemma:iuh4iu3} to obtain an stronger one. The proof of both these lemmas  are based on careful analysis of the relations between $\b q$ and subsets of $\mathcal{T}_\tau$, and are deferred to Section~\ref{app-pfs-1} due to being very complicated.

\newcommand{\lemexpQQQ}{For  
$\B_1 = \{\bu\in \B: \E[q_{\bu}] = 1\}$, we have $$\E[Q_\tau-|\B_1|] \geq  (k-|\B_1|+1) \delta/\beta.$$
}
\begin{lemma}\label{lem:expQ1} 
\lemexpQQQ
\end{lemma}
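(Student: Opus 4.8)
The plan is to rewrite $\E[Q_\tau]$ entirely in terms of the LP variables, to reduce the statement to a single inequality of the form $\sum_{\bu\in\B_\tau\setminus\B_1}a_\bu\ge(k-|\B_1|+1)\delta_\tau$ for a suitable quantity $a_\bu$, and then to prove that inequality by a counting argument over the LP mass placed on $\mathcal{L}_\tau$. The structural fact I will lean on is that in \emph{every} sub-profile both the winner and the supporting buyer clear their reserves, so that by constraint~(3) the ``winning mass'' and the ``supporting mass'' of a buyer $\bu$ can be charged to the \emph{same} budget $\sum_{r\le\bid{\a}{\bu}}x_{\bu,r}$; this is exactly what produces the ``$+1$'' in $k-|\B_1|+1$.

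First I would set $a_\bu:=\sum_{r\in\res:\,r<t_\bu,\ r\le\bid{\a}{\bu}}x_{\bu,r}$ for each buyer $\bu$. Using the simplifying assumption of the Remark (so $\Pr[r_\bu=r]=x_{\bu,r}/\beta$ for $r<t_\bu$ and $\Pr[r_\bu=r]=0$ otherwise), one immediately gets $\E[q_\bu]=\Pr[r_\bu\le\bid{\a}{\bu}]=a_\bu/\beta$ for every $\bu\in\B_\tau$. In particular $\bu\in\B_1$ forces $a_\bu=\beta$; and if $\bu\notin\B_1$ then necessarily $t_\bu>\bid{\a}{\bu}$ (otherwise $r_\bu<t_\bu\le\bid{\a}{\bu}$ always, giving $\E[q_\bu]=1$), so for such $\bu$ the condition $r<t_\bu$ is implied by $r\le\bid{\a}{\bu}$ and $a_\bu=\sum_{r\le\bid{\a}{\bu}}x_{\bu,r}$. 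Summing over $\B_\tau$ and peeling off the $|\B_1|$ buyers (each of which contributes exactly $\beta$ to $\sum a_\bu$ and $1$ to $\E[Q_\tau]$) yields $\E[Q_\tau-|\B_1|]=\tfrac{1}{\beta}\sum_{\bu\in\B_\tau\setminus\B_1}a_\bu$, so it suffices to prove $\sum_{\bu\in\B_\tau\setminus\B_1}a_\bu\ge(k-|\B_1|+1)\delta_\tau$.

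Next I would establish the per-buyer bound $a_\bu\ge w^\tau_\bu+\gamma_\bu$ for $\bu\in\B_\tau\setminus\B_1$, where $w^\tau_\bu:=\sum_{p\in\mathcal{L}_\tau\cap\mathcal{Q}_\bu}\lps_{\a,p}$ is the LP mass in which $\bu$ wins against a supporting buyer in $\B_\tau$, and $\gamma_\bu:=\sum_{p\in\mathcal{Q}'_\bu}\lps_{\a,p}/k$. By constraints~(1) and~(2), $\gamma_\bu=\sum_{r}y'_{\bu,r,\a}$ and $w^\tau_\bu\le\sum_{p\in\mathcal{Q}_\bu}\lps_{\a,p}=\sum_{r}y_{\bu,r,\a}$; and since winners and supporting buyers clear their reserves, $y_{\bu,r,\a}=y'_{\bu,r,\a}=0$ for $r>\bid{\a}{\bu}$. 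Combining these with $a_\bu=\sum_{r\le\bid{\a}{\bu}}x_{\bu,r}$ (valid because $\bu\notin\B_1$) and constraint~(3), $x_{\bu,r}\ge y_{\bu,r,\a}+y'_{\bu,r,\a}$, gives the bound. For the aggregate, I would note that every $p\in\mathcal{L}_\tau$ has its winner and its supporting buyer both in $\B_\tau$ (the winner bids at least as much as the supporting buyer, whose bid is $\ge\tau$); hence partitioning $\mathcal{L}_\tau$ by its winner and using Claim~\ref{claim:eq1} gives $\sum_{\bu\in\B_\tau}w^\tau_\bu=\sum_{p\in\mathcal{L}_\tau}\lps_{\a,p}=k\delta_\tau$, while $\sum_{\bu\in\B_\tau}\gamma_\bu=\delta_\tau$ (for $\bu\in\B_\tau$ we have $\mathcal{Q}'_\bu\subseteq\mathcal{L}_\tau$, so this is just the definition of $\delta_\tau$). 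Summing the per-buyer bound then gives $\sum_{\bu\in\B_\tau\setminus\B_1}a_\bu\ge(k+1)\delta_\tau-\sum_{\bu\in\B_1}(w^\tau_\bu+\gamma_\bu)$.

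The remaining step, which I expect to be the main obstacle, is to show $w^\tau_\bu+\gamma_\bu\le\delta_\tau$ for every $\bu\in\B_1$; this is the only place constraint~(4) is used. Fixing $\bu\in\B_1$, I would partition $\mathcal{L}_\tau\cap\mathcal{Q}_\bu$ according to the supporting buyer $\bu'\in\B_\tau$ of each sub-profile. Since a buyer is never simultaneously a winner and the supporting buyer of the same sub-profile, only $\bu'\ne\bu$ occur; and constraint~(4) applied to the pair $(\bu_1,\bu_2)=(\bu',\bu)$ bounds the $\bu'$-block by $\sum_r y'_{\bu',r,\a}=\gamma_{\bu'}$. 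Summing over $\bu'\in\B_\tau\setminus\{\bu\}$ gives $w^\tau_\bu\le\sum_{\bu'\in\B_\tau\setminus\{\bu\}}\gamma_{\bu'}=\delta_\tau-\gamma_\bu$, i.e.\ $w^\tau_\bu+\gamma_\bu\le\delta_\tau$. Plugging this into the aggregate bound yields $\sum_{\bu\in\B_\tau\setminus\B_1}a_\bu\ge(k-|\B_1|+1)\delta_\tau$, and the lemma follows from Step~1. The delicate point throughout is the bookkeeping of the two roles (winner vs.\ supporting buyer) of each sub-profile together with the $|\B_1|$-correction, so that the surplus $\delta_\tau$ of supporting-buyer mass on top of the $k\delta_\tau$ winning mass is not lost; everything else is routine use of the LP constraints and of the definition of the discounted-reserve distribution.
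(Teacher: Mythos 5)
Your proof is correct, and it follows essentially the same route as the paper's: both reduce the lemma to the per-buyer budget bound $x_{\bu,\cdot}\ge y_{\bu,\cdot,\a}+y'_{\bu,\cdot,\a}$ from constraint~(3), the identity $\sum_{p\in\mathcal{L}_\tau}\lps_{\a,p}+\sum_{\bu\in\B_\tau}\gamma_\bu=(k+1)\delta_\tau$ (Claim~\ref{claim:eq1} plus the definition of $\delta_\tau$), and the bound $w^\tau_\bu+\gamma_\bu\le\delta_\tau$ for the saturated buyers (which is exactly Claim~\ref{claim:1788}, and you re-derive it the same way via constraint~(4) and the partition of $\mathcal{L}_\tau\cap\mathcal{Q}_\bu$ by supporting buyer, i.e.\ Claim~\ref{claim:8746r8}). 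The only cosmetic difference is in the bookkeeping: you directly observe $t_\bu>\bid{\a}{\bu}$ for $\bu\notin\B_1$ so $a_\bu=\sum_{r\le\bid{\a}{\bu}}x_{\bu,r}$ and $\E[Q_\tau]-|\B_1|=\tfrac1\beta\sum_{\bu\notin\B_1}a_\bu$, whereas the paper writes $\sum_{r\le\bid{\a}{\bu}}x_{\bu,r}=\beta\Pr[r_\bu\le\bid{\a}{\bu}]+(1-\beta)\Pr[r'_\bu\le\bid{\a}{\bu}]$ and invokes Claim~\ref{lemma:hihi} to kill the second term when $\bu\notin\B_1$; these are two ways of saying the same thing.
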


\newcommand{\lemexpQ}{For $\B_1 = \{\bu\in \B: \E[q_{\bu}] = 1\}$ and $m = |\B_1|$ we have  $$\E[Q_\tau-m] \geq \frac{F(\lps, \tau) - m\delta}{\beta} + \delta/\beta.$$}

\begin{lemma}\label{lem:expQ2} \lemexpQ{}
\end{lemma}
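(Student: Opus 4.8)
The plan is to mirror the structure of the proof of Lemma~\ref{lem:expP} (the matching bound for $\E[P_\tau]$), but to harvest extra mass from the supporting buyers of the profiles in $\mathcal{L}_\tau$. The first move is the reduction $\E[Q_\tau-m]=\sum_{\bu\in\B_\tau}\E[q_\bu]-m=\sum_{\bu\in\B_\tau\setminus\B_1}\E[q_\bu]$, which holds because $q_\bu=1$ almost surely whenever $\bu\in\B_1$; so it suffices to lower bound $\sum_{\bu\in\B_\tau\setminus\B_1}\E[q_\bu]$ by $\tfrac1\beta\bigl(F(\b\lps,\tau)-m\delta+\delta\bigr)$.

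The second move is a per-buyer LP bound: for $\bu\in\B_\tau\setminus\B_1$,
\[ \beta\,\E[q_\bu]\ \ge\ \sum_{p\in\mathcal{Q}_{\bu}}\lps_{\a,p}+\frac1k\sum_{p\in\mathcal{Q}'_{\bu}}\lps_{\a,p}, \]
where $\mathcal{Q}_{\bu}$ (resp.\ $\mathcal{Q}'_{\bu}$) is the set of sub-profiles of $\a$ in which $\bu$ is the winner (resp.\ the supporting buyer). The point is that $\E[q_\bu]<1$ forces $t_\bu>\beta_{\a,\bu}$ --- otherwise the discounted law $f_\bu$, supported on $[0,t_\bu)$, would sit entirely at or below $\bu$'s bid and give $\E[q_\bu]=1$; consequently every sub-profile in which $\bu$ appears in either role carries for $\bu$ a reserve $\le\beta_{\a,\bu}<t_\bu$, so $\beta\E[q_\bu]=\sum_{r\le\beta_{\a,\bu}}x_{\bu,r}$, and summing the constraint $y_{\bu,r,\a}+y'_{\bu,r,\a}\le x_{\bu,r}$ over $r\le\beta_{\a,\bu}$ together with constraints~(1) and~(2) yields the displayed inequality (the winner term is exactly the estimate of Claim~\ref{claim:pb}, and the supporting term is its mirror).

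The third move sums this over $\bu\in\B_\tau\setminus\B_1$ and regroups the right-hand side by sub-profiles, using the structural facts: a profile in $\mathcal{J}^-_\tau$ always has its winner in $\B_\tau$; a profile in $\mathcal{J}^+_\tau$ has winner reserve $\ge t$ and hence (winners clear their reserve) its winner in $\B_1$; in a profile of $\mathcal{L}_\tau$ both the winner and the supporting buyer lie in $\B_\tau$, and a supporting reserve $\ge t$ forces that supporting buyer into $\B_1$; and every sub-profile whose supporting buyer lies in $\B_\tau$ already lies in $\mathcal{L}_\tau$. Invoking Lemma~\ref{lemma:FS} ($F(\b\lps,\tau)\le\sum_{p\in\mathcal{J}^-_\tau}\lps_{\a,p}+\sum_{p\in\mathcal{L}_\tau}\lps_{\a,p}$; this is what lets us avoid paying for the $\mathcal{J}^+_\tau$ profiles, which are already absorbed by the inflated reserves in the definition of $F$) and Claim~\ref{claim:eq1} ($\sum_{p\in\mathcal{L}_\tau}\lps_{\a,p}=k\delta$), the winner-contributions recover $F(\b\lps,\tau)$ up to the mass of profiles in $\mathcal{J}^-_\tau\cup\mathcal{L}_\tau$ with winner in $\B_1$, and the $\mathcal{L}_\tau$ supporting-contributions recover $\delta$ up to the mass of $\mathcal{L}_\tau$-profiles with supporting buyer in $\B_1$.

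The main obstacle is the final accounting: one must show that the combined ``bad'' contribution --- the winner-mass of profiles in $\mathcal{J}^-_\tau\cup\mathcal{L}_\tau$ with winner in $\B_1$, plus $\tfrac1k$ of the supporting-mass of $\mathcal{L}_\tau$-profiles with supporting buyer in $\B_1$ --- is at most $m\delta$, so that everything assembles into $\tfrac1\beta\bigl(F(\b\lps,\tau)-m\delta+\delta\bigr)$. This loss is genuine, since $\B_1$ is precisely what we subtract off, so it has to be charged against the only available slack: $|\B_1|=m$; the identity $\delta=\tfrac1k\sum_{p\in\mathcal{L}_\tau}\lps_{\a,p}$, so that heavy ``bad'' mass comes hand in hand with a large $\delta$; and the compatibility constraint~(4), which caps the mass of profiles with a fixed (winner, supporting buyer) pair by $\tfrac1k$ of that supporting buyer's total supporting mass. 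I expect the argument to redirect each bad $\B_1$-winner profile onto its own supporting buyer, invoke~(4) to bound the redirected quantities, and amortize over the $\le m$ bad buyers, possibly splitting according to whether $m\le k$. Getting the constants to land exactly on the stated bound is the delicate step, which is why it is deferred to Section~\ref{app-pfs-1}.
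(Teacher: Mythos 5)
Your reduction step and per-buyer LP bound are both correct (and the latter is in fact cleaner than what the paper uses: for $\bu\notin\B_1$ one has $t_\bu>\bid{\a}{\bu}$, and the constraints (1)--(3) summed over $r\leq\bid{\a}{\bu}$ give $\beta\E[q_\bu]\geq\sum_{p\in\mathcal{Q}_\bu}\lps_{\a,p}+\tfrac1k\sum_{p\in\mathcal{Q}'_\bu}\lps_{\a,p}$). The regrouping in your third move is also sound: once you discard the $\mathcal{J}^+_\tau$-mass of $\bu\notin\B_1$ (vacuous, since those profiles force $\bu\in\B_1$), summing over $\bu\in\B_\tau\setminus\B_1$ and invoking Lemma~\ref{lemma:FS} together with Claim~\ref{claim:eq1} does reduce everything to exactly the inequality you name, namely
\begin{equation*}
\sum_{\bu\in\B_1}\Bigl[\;\sum_{p\in(\mathcal{J}^-_\tau\cup\mathcal{L}_\tau)\cap\mathcal{Q}_\bu}\lps_{\a,p}+\tfrac1k\sum_{p\in\mathcal{L}_\tau\cap\mathcal{Q}'_\bu}\lps_{\a,p}\Bigr]\;\leq\;m\,\delta_\tau .
\end{equation*}

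This final accounting is where the proposal breaks down, and not merely because you defer it: the inequality is not provable with the tools you intend to use. Claim~\ref{claim:1788} caps the $\mathcal{L}_\tau$-winner plus $\tfrac1k$-times-$\mathcal{L}_\tau$-support mass of any single $\bu\in\B_\tau$ by $\delta_\tau$, so the $\mathcal{L}_\tau$ part of your bad term is indeed $\leq m\delta_\tau$. But the $\mathcal{J}^-_\tau$ part is not touched: a $\mathcal{J}^-_\tau$ profile with winner $\bu$ has supporting buyer outside $\B_\tau$, so constraint~(4) relates its mass to the \emph{total} supporting mass of that low-bid buyer, a quantity with no a priori connection to $\delta_\tau$. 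In fact, for $\bu\in\B_1$ with $\tau<t_\bu\leq\bid{\a}{\bu}$, the $\mathcal{J}^-_\tau$-winner mass of $\bu$ is only bounded via constraints (1),(3) and $\sum_{r<t_\bu}x_{\bu,r}=\beta$ by $\beta$, which can exceed $\delta_\tau$. The ``redirect to the supporting buyer and invoke (4)'' plan therefore has no route to charge the $\mathcal{J}^-_\tau$ mass against $\delta_\tau$.

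The paper avoids this by never discarding the inflated-reserve term. Its per-buyer quantity is the \emph{bracket} $\sum_{p\in\mathcal{T}_\tau\cap\mathcal{Q}_\bu}\lps_{\a,p}+\tfrac1k\sum_{p\in\mathcal{L}_\tau\cap\mathcal{Q}'_\bu}\lps_{\a,p}-(1-\beta)\Pr[\bid{\a}{\bu}\geq r'_\bu\geq\tau]$, for which summing over all buyers produces the exact identity $F(\b\lps,\tau)+\delta_\tau$ (using Assumption~\ref{assumption:1}), rather than the one-sided estimate of Lemma~\ref{lemma:FS}. Claim~\ref{claim:qb-lower} lower-bounds $\E[q_\bu]$ by $\min(\text{bracket}_\bu/\beta,1)$, and Claim~\ref{claim:beta-delta} upper-bounds the bracket itself by $\max(\beta,\delta_\tau)$ via a case split on $t_\bu\gtrless\tau$; it is precisely the $(1-\beta)\Pr[\cdots]$ subtraction that cancels the $\mathcal{J}^+_\tau$ contribution and lets Claim~\ref{claim:1935} turn the $t_\bu<\tau$ case into the $\delta_\tau$ bound. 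You would need to reintroduce that subtraction into your per-buyer bound (which then forces you to restrict $\mathcal{Q}_\bu$ to $\mathcal{T}_\tau\cap\mathcal{Q}_\bu$, exactly recovering the paper's bracket) in order to close the gap.
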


\begin{lemma}\label{lemma:iuh4iu3} For $\B_1 = \{\bu\in \B: q_{\bu, \tau} = 1\}|$ and $m = |\B_1|$, we have $$\E[Q_\tau-m] \geq \frac{ k.\max(F(\lps, \tau)/k , \delta_\tau) - (m-1) \delta_\tau}{\boost}.$$\end{lemma}
\begin{proof}
	This is a direct result of the lower bounds given in Lemma~\ref{lem:expQ1} and Lemma~\ref{lem:expQ2}, which are respectively as follows.
	$$\E[Q_\tau-m] \geq  (k-m+1) \delta_\tau/\beta.$$
		$$\E[Q_\tau-m] \geq \frac{F(\lps, \tau) - m\delta_\tau}{\beta} + \delta_\tau/\boost.$$
	By combining these lower bounds we get 
	\ascomment{moved the $\boost$ to the RHS. Much simpler!}
$$\boost\cdot\E[Q_\tau-m] 
\geq \max(F(\lps, \tau)  ,   k \delta_\tau) - (m-1)\delta_\tau  \geq k\cdot\max(F(\lps, \tau)/k,\delta_\tau)-(m-1)\delta_\tau.
	\qedhere$$
\end{proof}

\subsection{Revenue of the discounted vector}
\label{section:discounted-rev}
	In this section, we continue our effort to give a lower bound for $\E[\winners{\a}{\b r}{\tau}]$ as a function of $\delta_\tau$ and $F(\lps, \tau)$. Recall that by Claim~\ref{claim:exprevenue-r} we have $$ \E[\winners{\a}{\b r}{\tau}] \geq \max(k.\Pr[Q_\tau>k], \E[\min(P_\tau, k)]).$$ 
In Lemma~\ref{lemma:iuh4iu3} and Lemma~\ref{lem:expP} in the previous section, we have obtained lower bounds for both $\E[Q_{\tau}]$ and $\E[P_\tau]$ as functions of $\delta_\tau$ and $F(\b \lps, \tau)$. Thus, we proceed to find numeric lower bounds for $k \cdot \Pr[Q_\tau>k]$ and $\E[\min(P_\tau, k)]$ by all possible values of these parameters. To be able to do so, we use the fact that both $Q_\tau$ and $P_\tau$ are sums of Bernoulli random variables. Based on a sequence of observations about Bernoulli random variables that are mostly presented in Section~\ref{section:usefulstuff} we approximate $\Pr[Q_\tau>k]$ by a function on a set of  Bernoulli random variables whose expectation is related to  $\E[Q_\tau]$. Later, we use the relation between Binomial and Poisson distributions to get a lower bound that can be computed numerically given fixed values of $\delta_\tau$ and $F(\lps, \tau)$. We take a similar but simpler approach to find a lower bound for $\E[\min(P_\tau, k)]$.

Let us define function $G(x, \lambda)$ for a real number $\tau>0$ and any integer $x\geq0$,  as follows: \begin{equation} \label{eq:closedformq} G(x, \lambda)=1-\sum_{i=0}^{x} \frac{\lambda^i e^{-\lambda}}{i!}.\end{equation}
Note that $G(x, \lambda)$ is the probability with which a random variable drawn from $\text{Pois}(\lambda)$ is greater than  $x$. This function later arises in the lower bound for $\Pr[Q_\tau > k]$  due to the special relation between Poisson and Binomial distribution when the number of trials goes to infinity. 

We start by the following lemma about Bernoulli random variables (proved in Section~\ref{section:usefulstuff}).
\newcommand{\lemmaoiu}{
Given $m\in\mathbb{N}$ and a random variable $X$ that is sum of a set of independent Bernoulli random variables with $\E[X] = \mu$, if $m+1<\mu$, then we have $$\Pr\left[X > m\right] \geq \min_{0\leq i \leq m} G(m-i, \mu-i).$$
}
\begin{lemma}\label{lemma:oiu3obfj}
\lemmaoiu
\end{lemma}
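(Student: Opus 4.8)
\textbf{Proof proposal for Lemma~\ref{lemma:oiu3obfj}.}

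The plan is to reduce the bound on $\Pr[X>m]$ for a sum of independent (but not necessarily identical) Bernoulli random variables to the worst case over a family of simpler distributions, and then recognize that worst case as a Poisson tail. First I would observe that $X = \sum_j Z_j$ where each $Z_j \sim \mathrm{Bernoulli}(q_j)$, and condition on how many of the $Z_j$ with large success probability are ``rounded up.'' More precisely, I expect the key structural fact to be a \emph{shifting/balancing} argument: among all vectors $(q_1, q_2, \ldots)$ of success probabilities with a fixed sum $\mu = \E[X]$, the probability $\Pr[X>m]$ is minimized by making the $q_j$ as unequal as possible subject to feasibility — that is, by pushing as many coordinates as possible to $1$ and spreading the remaining mass uniformly over infinitely many infinitesimal coordinates (which yields a Poisson component). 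If $i$ of the coordinates equal $1$, then $X = i + Y$ where $Y$ is (in the limiting worst case) $\mathrm{Pois}(\mu - i)$, and $\Pr[X > m] = \Pr[Y > m-i] = G(m-i, \mu-i)$. Taking the minimum over the admissible range $0 \le i \le m$ of how many coordinates can be saturated gives the stated bound. The hypothesis $m+1 < \mu$ guarantees this range is the binding one and that all the $G(m-i,\mu-i)$ terms are meaningful (the Poisson mean $\mu - i$ stays positive, indeed exceeds $m-i+1$).

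The two technical steps I would carry out in order are: (i) a convexity/majorization lemma showing that for fixed $\E[X]$, splitting any Bernoulli of probability $q \in (0,1)$ into two independent Bernoullis of probabilities summing to $q$ only decreases $\Pr[X > m]$ — equivalently, that $\Pr[X>m]$ is Schur-concave in the probability vector, so refining the partition drives us toward the Poisson limit; and (ii) the classical Poisson limit theorem (law of rare events / Le Cam), which says that a sum of many independent Bernoullis with small probabilities and total mean $\lambda$ converges in distribution to $\mathrm{Pois}(\lambda)$, so the infimum over ever-finer splittings of the non-saturated mass is exactly the Poisson tail $G(\cdot,\cdot)$. Combining (i) and (ii): starting from an arbitrary Bernoulli sum, repeatedly split the coordinates that are strictly between $0$ and $1$; in the limit the non-saturated part becomes Poisson, while coordinates forced toward $1$ contribute a deterministic shift. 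Since we cannot have more than $m$ saturated coordinates contributing without already forcing $X > m$ with the remaining positive-mean Poisson part (and the problem only gets harder, i.e.\ the probability smaller, the more we saturate up to $m$), the minimum is attained at some $i \in \{0, 1, \ldots, m\}$, giving $\Pr[X>m] \ge \min_{0 \le i \le m} G(m-i, \mu-i)$.

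I expect the main obstacle to be making the ``push coordinates to $1$ or to $0$'' extremal argument fully rigorous: one has to argue that at an optimum of the minimization (over probability vectors with fixed sum) each coordinate is either $0$, $1$, or part of an arbitrarily-refinable continuum, and that introducing saturated coordinates beyond what is needed never helps. A clean way to handle this is to fix the number of saturated coordinates $i$ and the residual mean $\mu - i$, apply (i) to show the residual sum's tail is minimized in the Poisson limit, yielding the value $G(m-i,\mu-i)$ for that $i$, and then simply take the minimum over $i$; this sidesteps having to characterize the global optimizer directly. The condition $m+1 < \mu$ is used precisely to ensure that for every $i$ in the range the residual Poisson mean $\mu-i$ is at least $m-i+1 > 0$, so each $G(m-i,\mu-i)$ is a genuine (positive) tail probability and the bound is non-vacuous. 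I would also double-check the monotonicity direction — that $\Pr[X>m]$ as a function of the probability vector decreases under splitting — against the simplest case $m=0$, where $\Pr[X>0] = 1 - \prod_j(1-q_j)$ is manifestly minimized (for fixed $\sum q_j$) by spreading mass thinly, matching $G(0,\mu) = 1 - e^{-\mu}$.
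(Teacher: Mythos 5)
Your high-level strategy matches the paper's: characterize the worst case as a deterministic shift of $i$ saturated coordinates plus a Poisson-distributed remainder, read off the Poisson tail $G(m-i,\mu-i)$, and take the minimum over $i\in\{0,\dots,m\}$. The paper carries this out via two technical ingredients: an extreme-point argument (Lemma~\ref{lemma:extreme}) showing that at a minimizer of $\Pr[X>m]$ with the mean fixed, every non-degenerate success probability is equal (so the minimizer is a shifted binomial), and a calculus argument (Lemma~\ref{lemma:mathover}) showing that for a binomial with mean $m\alpha$ and $\alpha>1$ the tail is minimized as $n\to\infty$, giving the Poisson limit via Fact~\ref{fact:thefact}.

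The gap in your proposal is step (i), the unconditional ``splitting'' claim. Replacing one Bernoulli of probability $q\in(0,1)$ by two independent ones with probabilities $q_1+q_2=q$ changes $\Pr[X>m]$ by exactly $q_1q_2\bigl(\Pr[Y'=m-1]-\Pr[Y'=m]\bigr)$, where $Y'$ is the sum of the remaining coordinates. This is negative (splitting helps the bound) only when $\Pr[Y'=m-1]\le\Pr[Y'=m]$, i.e.\ when $m$ sits below the mode of $Y'$; your $m=0$ sanity check passes trivially because $\Pr[Y'=-1]=0$, but for general $m$ the sign flips. Making this work in the regime $\mu>m+1$ requires knowing that a Poisson--Binomial is unimodal with mode near its mean (Darroch's theorem / log-concavity), which you do not supply and which is not obviously preserved as you repeatedly split and $Y'$ changes. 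The paper avoids the issue by proving only the much weaker (always true) statement that non-degenerate coordinates can be equalized at an extremum, and isolates the role of the hypothesis $\mu>m+1$ inside Lemma~\ref{lemma:mathover}'s $\alpha>1$ condition. So either import the unimodality input explicitly and verify it survives the splitting process, or replace step (i) with the paper's extremum/equalization argument. A minor point: the direction you want is Schur-\emph{convexity} (splitting decreases the tail means majorization increases it), not Schur-concavity as written, and even the corrected statement is only conditionally true.
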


For any $ 0 \leq i \leq k$, let us define $\lambda_i$ as follows. We have $$\lambda_0 = \min\left(\frac{2F(\lps, \tau)}{k\beta}+ \delta_\tau/\boost-2, \frac{F(\lps, \tau)}{k\beta}\right),$$  $$\lambda_1 = \min\left(\frac{2F(\lps, \tau)}{k\beta}+ \delta_\tau/\boost-1,  \frac{2F(\lps, \tau)}{k\beta}\right),$$ and for any any $i\geq 2$, $$\lambda_i = \frac{iF(\lps, \tau)}{k\beta} + \delta_\tau/\boost.$$	
We are not ready to state our lemma about a lower bound for $\frac{F(\lps, \tau)}{k\beta}.$
\begin{lemma}\label{lem:lambdaq}
If  we have $\frac{F(\lps, \tau)}{k\beta} > 1 $ and $\lambda_i\geq i+1$ for any $i\geq 0$, then 
$$\Pr[Q_\tau > k] \geq \min_{0\leq m \leq k} G(m, \lambda_m).$$
\end{lemma}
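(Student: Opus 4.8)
The plan is to express the event $Q_\tau > k$ in terms of the Bernoulli random variables $q_{\bu,\tau}$, split off the buyers that are cleared with probability one, and then invoke Lemma~\ref{lemma:oiu3obfj} applied to the remaining sum of independent Bernoullis. Concretely, recall $Q_\tau = \sum_{\bu\in\B_\tau} q_{\bu}$, and let $\B_1 = \{\bu \in \B_\tau : \E[q_{\bu}] = 1\}$ with $m = |\B_1|$. The buyers in $\B_1$ contribute deterministically, so $Q_\tau > k$ is equivalent to $\sum_{\bu\in\B_\tau \setminus \B_1} q_{\bu} > k - m$, where the left-hand side is a sum $X$ of \emph{independent} Bernoulli random variables (independence across buyers holds by construction of $\b r$) none of which is degenerate at $1$. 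By Lemma~\ref{lemma:iuh4iu3}, $\E[X] = \E[Q_\tau - m] \geq \tfrac{k\max(F(\lps,\tau)/k,\delta_\tau) - (m-1)\delta_\tau}{\boost}$. I would then apply Lemma~\ref{lemma:oiu3obfj} with this $X$, with ``$m$'' there set to $k-m$, obtaining, provided $\E[X] > (k-m)+1$, the bound $\Pr[X > k-m] \geq \min_{0 \le i \le k-m} G(k-m-i, \E[X]-i)$.

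The next step is to massage the index of the minimum and the arguments of $G$ so that the statement reads over $0 \le \ell \le k$ with the specific $\lambda_\ell$ of the lemma. The natural reindexing is $\ell = m - 1 + i$ (for $i \ge 1$) together with a separate treatment of $i = 0$, so that $k-m-i = k - \ell$ when $i \ge 1$, and the subtracted quantity inside the second argument becomes $(m-1+i)\delta_\tau + (\text{stuff})$. Plugging $\E[X]$ into $G(k-m-i,\E[X]-i)$ and using monotonicity of $G$ in its second argument, one replaces $\max(F/k,\delta_\tau)$ by the smaller of the two candidate linear expressions; this is exactly how the piecewise definitions of $\lambda_0,\lambda_1,\lambda_i$ arise (the ``$\min$'' in $\lambda_0,\lambda_1$ records the two cases of the max, and the ``$-2$'', ``$-1$'' shifts come from the $-i$ in the second argument of $G$ combined with the $(m-1)\delta_\tau / \boost$ and, crucially, from the fact that $\delta_\tau/\boost \le \delta_\tau$-type or $\le 1$-type bounds let us drop $m$-dependence). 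I would use the hypotheses $F(\lps,\tau)/(k\boost) > 1$ and $\lambda_i \ge i+1$ precisely to (a) guarantee $\E[X] > (k-m)+1$ so Lemma~\ref{lemma:oiu3obfj} applies, and (b) ensure every term $G(k-\ell, \lambda_\ell)$ appearing is well-defined and that dropping terms with $\ell$ outside $[m-1, k]$ only decreases the minimum, so that $\min_{0\le i\le k-m}(\cdots) \ge \min_{0\le \ell\le k} G(k-\ell,\lambda_\ell)$. Finally I would note that $\min_{0\le\ell\le k}G(k-\ell,\lambda_\ell)$ is, after the substitution $\ell \mapsto k-\ell$, the same as $\min_{0\le \ell\le k} G(\ell, \lambda_{k-\ell})$; I should double-check the paper's indexing convention here and align the two, since the statement writes $\min_{0\le m\le k} G(m,\lambda_m)$ — I would verify whether $\lambda_m$ is meant as indexed by the ``number of surviving-to-$>k$ slack'' or by $k-m$, and reconcile notation accordingly.

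The main obstacle I anticipate is the bookkeeping that converts the $m$-dependent, $(k-m)$-indexed bound coming out of Lemma~\ref{lemma:oiu3obfj} into the clean, $m$-free, $k$-indexed form with the specific $\lambda_i$'s. The subtle points are: showing that the terms with small index (the $i=0$ and $i=1$ cases, which become $\lambda_0$ and $\lambda_1$) genuinely give \emph{lower} values than a naive substitution would — this is where the $-2$ and $-1$ corrections and the two-way $\min$ come from, essentially tracking that near $i \approx 0$ we have extra additive slack $\delta_\tau/\boost$ but lose $i$ from the mean; and verifying that it is legitimate to extend the minimum to all $\ell \in [0,k]$ rather than just $[m-1,k]$, which requires that the ``extra'' $\lambda_\ell$ for $\ell < m-1$ are large enough (via the hypothesis $\lambda_i \ge i+1$ and monotonicity of $G$) that they do not lower the minimum below what we have already shown. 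Everything else — independence, the deterministic split, monotonicity of $G$ in $\lambda$ — is routine.
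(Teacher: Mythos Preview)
Your high-level outline is right and matches the paper: split off $\B_1$, set $k_1=k-m$, apply Lemma~\ref{lemma:oiu3obfj} to $X=Q_\tau-m$ using the mean bound from Lemma~\ref{lemma:iuh4iu3}, and then compare each resulting term to $G(\cdot,\lambda_\cdot)$. But the execution of the reindexing step is wrong, and this is exactly the step that carries the proof.

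The correct substitution is simply $j=k_1-i$, i.e.\ you index by the \emph{first} argument of $G$. Lemma~\ref{lemma:oiu3obfj} gives $\Pr[X>k_1]\ge \min_{0\le i\le k_1} G(k_1-i,\E[X]-i)$, and the paper shows term-by-term that $\E[X]-i\ge \lambda_{k_1-i}$; since $G$ is increasing in its second argument this yields $G(k_1-i,\E[X]-i)\ge G(k_1-i,\lambda_{k_1-i})$, and since $\{k_1-i:0\le i\le k_1\}\subseteq\{0,\dots,k\}$ the minimum is automatically $\ge \min_{0\le j\le k}G(j,\lambda_j)$. Your proposed reindexing $\ell=m-1+i$ gives $k_1-i=k-\ell-1$, not $k-\ell$, so it does not line up with the $\lambda$'s; and your final ``after substitution'' expression $\min G(\ell,\lambda_{k-\ell})$ still does not match the statement's $\min G(m,\lambda_m)$. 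There is no hidden convention to reconcile --- the index of $\lambda$ is literally the first argument of $G$.

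Two further points. First, the inequality $\E[X]-i\ge \lambda_{k_1-i}$ is where the real work is, and it requires a genuine case split: for $k_1-i\ge 2$ one uses $\E[X]\ge \tfrac{k_1 F}{k\boost}+\tfrac{\delta_\tau}{\boost}$ together with $F/(k\boost)>1$; for $k_1-i\in\{0,1\}$ one must further split on $m=0$ versus $m>0$ and exploit the $\max(F/k,\delta_\tau)$ in the mean bound differently in each case --- this is precisely why $\lambda_0,\lambda_1$ are each defined as a minimum of two expressions (one branch handles $m=0$, the other $m>0$). Your description of the $-2,-1$ shifts via ``$\delta_\tau/\boost\le\delta_\tau$-type bounds'' does not track this. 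Second, your concern about ``extending the minimum'' is inverted: enlarging the index set of a minimum can only decrease it, which \emph{weakens} a lower bound, so it is always free --- nothing about the extra $\lambda_\ell$ needs to be checked.
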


\begin{proof}
Let us define $\B_1 = \{\bu\in \B: q_\bu =  1\}$, $m=|\B_1|$, and $Q_{\tau,2} = \sum_{\bu\in \B\backslash \B_1} q_{\bu}$. We have 
$ \E[Q_\tau] =\E[Q_{\tau, 2}] + m$ which implies $\Pr[Q_\tau > k] = \Pr[Q_{\tau,2} > k - m].$
Using Lemma~\ref{lemma:iuh4iu3}, we have the following lower bound for the expected value of random variable $Q_{\tau, 2}$: \begin{equation}\label{eq:87ygvcb} \E[Q_{\tau,2}]  \geq \frac{ k.\max(F(\lps, \tau)/k , \delta_\tau) - (m-1) \delta_\tau}{\boost}.\end{equation}
Further, since $Q_{\tau,2}$ is sum of a set of independent Bernoulli random variables, if $\E[Q_{\tau,2}]\geq k-m+1$ holds, as an application of Lemma~\ref{lemma:oiu3obfj}, we get
$$\Pr[Q_{\tau, 2}> (k-m)] \geq \min_{0\leq i \leq k_1} G(k_1-i, \E[Q_{\tau,2}]-i), $$
 where $k_1= k-m$. We will later prove that $\E[Q_{\tau,2}]\geq k-m+1$ holds. Given this equation, to complete the proof it suffices to show  that for any $0\leq i\leq k_1$, we have $G(k_1-i, E[Q_{\tau,2}]-i) \geq G(k_1-i, \lambda_{k_1-i})$ since $0\leq k_1-i \leq k$. We do this by proving that the following equation holds for any $0\leq i\leq k_1$: $$\E[Q_{\tau,2}]-i \geq \lambda_{k_1-i}.$$ Note that proving this also gives us $\E[Q_{\tau,2}]\geq k-m+1$ since we get $\E[Q_{\tau,2}] \geq \lambda_{k-m}$ and by the statement of the lemma, we have  $\lambda_{k-m} \geq k-m+1$. Recall that by definition, for any $i$ with $k_1-i\geq 1$ we have $$\lambda_{k_1-i} = \frac{(k_1-i)F(\lps, \tau)}{k\boost} + \delta_\tau/\boost. $$
Moreover, using the lower bound provided for $\E[Q_{\tau,2}]$ in Equation~\ref{eq:87ygvcb}, we get $$\E[Q_{\tau,2}]\geq \frac{ k.\max(F(\lps, \tau)/k , \delta_\tau) - (m-1) \delta_\tau}{\boost} \geq \frac{ (k-m).F(\lps, \tau)/k +  \delta_\tau}{\boost} \geq  \frac{k_1F(\lps, \tau)}{k\boost} + \delta_\tau/\boost.$$
We complete the proof for the case of $k_1-i > 1$ by invoking  $\frac{F(\lps, \tau)}{k\boost} > 1$ from the statement of lemma. 
 Therefore, to complete the proof it suffices to show that $\E[Q_{\tau,2}] -i \geq \lambda_{k_1-i}$ holds for $k_1-i \leq 1$. Using Equation~\ref{eq:87ygvcb} and by the fact that $k\geq 2$, we have
 $$\E[Q_{\tau,2}]\geq \frac{ k.\max(F(\lps, \tau)/k , \delta_\tau) - m\delta_\tau + \delta_\tau}{\boost} \geq \frac{2F(\lps, \tau)/k + (k-2-m)\max(F(\lps, \tau)/k,  \delta_\tau) + \delta_\tau}{\boost}.$$ 
If $m=0$, then we have $k_1-2=k-2-m\geq 0$. Moreover, since we have $F(\lps, \tau)/(k\boost)>1$, in the case of $m=0$, we get 
 $$\E[Q_{\tau,2}]-i \geq \frac{ 2F(\lps, \tau)/k + \delta_\tau +(k_1-2) \max(F(\lps, \tau)/k,  \delta_\tau)}{\boost}-i \geq \frac{ 2F(\lps, \tau)/k+\delta_\tau}{\boost} + k_1 -2 -i. $$ This implies $\E[Q_{\tau,2}]-i \geq \lambda_{k_1-i}$ for $m=0$ and $k_1-i \leq 1$. Now, it remains to show this for $m>0$ and $k_1-i \leq 1$  as well. If $m>0$ we can write the followings:
  $$\E[Q_{\tau,2}]\geq \frac{ k.\max(F(\lps, \tau)/k , \delta_\tau) - (m-1) \delta_\tau}{\boost} \geq \frac{ (k-m+1)F(\lps, \tau)/k }{\boost} = \frac{ (k_1+1)F(\lps, \tau)/k }{\boost},$$
  $$\E[Q_{\tau,2}] - i \geq  \frac{ (k_1 - i +1 )F(\lps, \tau)/k + i\cdot F(\lps, \tau)/k}{\boost} -i \geq \frac{ (k_1 - i +1 )F(\lps, \tau)/k }{\boost} .$$
As a result of this for the cases of $(k_1-i)=0$ and  $(k_1-i)=1$  we respectively get $\E[Q_{\tau,2}] - i \geq F(\lps, \tau)/(k\boost)$ and $\E[Q_{\tau,2}] - i \geq 2F(\lps, \tau)/(k \boost)$. Knowing that by definition, we have $\lambda_0 \leq F(\lps, \tau)/(k\boost)$ and $\lambda_1 \leq 2F(\lps, \tau)/(k \boost)$ hold completes the proof.
\end{proof}
Based on a simple application of Chernoff bound, we show that for any $m\geq 2000$, we have $G(m , 1.05m) \geq 0.9$. (We will prove this as Lemma~\ref{lem:trick} in Section~\ref{section:usefulstuff}.)  Since $\lambda_m$ is an increasing function of $F(\lps, \tau)/(k\beta)$, and that $\lambda_m \geq mF(\lps, \tau)$ holds for any $m\geq 2$, this implies that for $F(\lps, \tau)/(k\beta) \geq 1.05$, and $m\geq2000$ we have $G(m , \lambda_m)\geq 0.9$. This gives us
\begin{equation}\label{eq:theop4koit4}\Pr[Q_\tau > k] \geq \min(0.9, \min_{0\leq m < 2000} G(m, \lambda_m)).\end{equation}
For smaller values of $m$; however, giving a desired lower bound for $G(m , \lambda_m)$  is unnecessarily complicated. To avoid the complication of that proof, we instead numerically compute $G(m , 
\lambda_\tau)$ for different values of $F(\lps, \tau)/(k\boost)$ and $\delta_\tau/\boost$ in Table~\ref{tab:my-table}. Then, using Lemma~\ref{lem:lambdaq} find a lower bound for $\Pr[Q_\tau > k]$ given fixed values for these variables. Each element of Table~\ref{tab:my-table}, contains value of $\min_{0\leq m < 2000} G(m, \lambda_m)$ for fixed values of $F 
(\lps, \tau)/(k\boost)$ and $\delta_\tau/\boost$. Note that, we ignore an entry of the table by inserting an $-$, if the values associated to $F(\lps, \tau)$ and $\delta_\tau/\boost$ in that entry do not satisfy the necessary conditions of Lemma~\ref{lem:lambdaq}. 
We later use this table to complete the proof of Lemma~\ref{lem:second-type}.

\renewcommand{\arraystretch}{1.5}
\begin{table}[H]
\centering
\begin{tabular}{|l||*{5}{c|}}\hline
\backslashbox{$\frac{F(\lps, \tau)}{k\beta}$ }{$\delta_\tau/\beta$}
&\makebox[4em]{0.6} &\makebox[4em]{0.8} &\makebox[4em]{0.9}
&\makebox[4em]{1} \\\hline\hline
$1.05 $  & - & - &0.57&0.59\\\hline
$1.1$   & - &0.57&0.59&0.62\\\hline
$1.2$  &0.57&0.62&0.64&0.66\\\hline

$1.5$ &0.697&0.73&0.746&0.76\\\hline
$1.7 $ &0.76&0.789&0.8&0.814\\\hline
$1.8 $ &0.789&0.8&0.826&0.834\\\hline
\end{tabular}
\caption{lower bounds for $\Pr[Q_\tau>k]$ given fixed values of $F(\lps, \tau)$ and $\delta/\beta$ based on Equation~\ref{eq:theop4koit4}.}
\label{tab:my-table}
\end{table}

To be able to use the information in this table towards giving a numeric lower bound for $\max(\Pr[Q_\tau>k], \E[\min(Q_\tau, k)]/k)$, we also need a similar table for $\E[\min(P_\tau, k)]/k$.
To provide the desired lower bound for $\E[\min(P_\tau, k)]/k$ in Lemma~\ref{lem:Pvalue}, we first need some facts about Bernoulli random variables which are stated in  Lemma~\ref{lemma:07432} and  Claim~\ref{lem:Pt} below. To prevent interruptions to the flow of this section, both proofs are deferred to Section~\ref{section:usefulstuff}.

\newcommand{\lemmazerosever}{For any integer number $m>2$ and any real number $\theta\in[0, 2]$, we have  $$\min_{\b \mu \in M_{2,\theta}} H(2, \b \mu) \leq  \min_{\b \mu \in M_{m , \theta}} H(m, \b \mu),$$ where
$M_{m , \theta} = \{ \b \mu = (\mu_1, \dots,\mu_n) \in [0, 1]^n|  \,  \sum_{i=1}^{n} \mu_i = m\theta \}, $ and 
$$H\left(k , (\mu_1, \dots, \mu_n)\right) = \frac{\E[\min (\sum_{i\in [n]} x_i, m)]}{m},$$ with $x_i, \dots, x_n$ being independent Bernoulli random variables with means $\mu_i, \dots, \mu_n$.}

\begin{lemma}\label{lemma:07432}
\lemmazerosever{}
\end{lemma}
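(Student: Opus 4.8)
The plan is to reduce both sides to explicit Poisson quantities and then prove the resulting one-variable inequality by an averaging argument over pairs. First I would rewrite $H$ conveniently: using $\min(X,c)=X-(X-c)^{+}$, for any $\b\mu=(\mu_1,\dots,\mu_n)\in M_{m,\theta}$ with $S:=\sum_i x_i$ (so $\E[S]=m\theta$) one has $H(m,\b\mu)=\tfrac1m(m\theta-\E[(S-m)^{+}])=\theta-\tfrac1m\E[(S-m)^{+}]$, and likewise $H(2,\b\nu)=\theta-\tfrac12\E[(T-2)^{+}]$ for $\b\nu\in M_{2,\theta}$, $T:=\sum_j y_j$. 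Since $x\mapsto(x-c)^{+}$ is convex and a Poisson-binomial law with mean $\lambda$ is dominated in the convex order by $\text{Pois}(\lambda)$ (the ``splitting'' phenomenon: replacing one Bernoulli by two independent Bernoullis with the same total mean can only increase $\E[\varphi(\,\cdot\,)]$ for convex $\varphi$, and iterating drives the Poisson-binomial to the matching Poisson, with $\E[(S-m)^{+}]\to\E[(\text{Pois}(m\theta)-m)^{+}]$), we get $\E[(S-m)^{+}]\le\E[(\text{Pois}(m\theta)-m)^{+}]$ with equality in the limit. Hence
\[
\min_{\b\mu\in M_{m,\theta}}H(m,\b\mu)=\theta-\tfrac1m\,\E\bigl[(\text{Pois}(m\theta)-m)^{+}\bigr],
\]
and the same identity with $m=2$ on the left (strictly these are infima, approached but not attained).

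With this reduction the lemma becomes exactly $\tfrac1m\E[(\text{Pois}(m\theta)-m)^{+}]\le\tfrac12\E[(\text{Pois}(2\theta)-2)^{+}]$, and here is the key step. Write $\text{Pois}(m\theta)\stackrel{d}{=}\sum_{i=1}^{m}Y_i$ with $Y_i\sim\text{Pois}(\theta)$ i.i.d., and set $Z_i:=Y_i-1$, so $\text{Pois}(m\theta)-m\stackrel{d}{=}\sum_{i=1}^{m}Z_i$. Each index lies in exactly $m-1$ of the $\binom m2$ unordered pairs, hence $\sum_{i=1}^{m}Z_i=\tfrac1{m-1}\sum_{1\le i<j\le m}(Z_i+Z_j)$; applying subadditivity and positive homogeneity of $x\mapsto x^{+}$ gives $\bigl(\sum_i Z_i\bigr)^{+}\le\tfrac1{m-1}\sum_{i<j}(Z_i+Z_j)^{+}$. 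Taking expectations, and using $\E[(Z_i+Z_j)^{+}]=\E[(Y_1+Y_2-2)^{+}]=\E[(\text{Pois}(2\theta)-2)^{+}]$ for every pair, yields $\E[(\text{Pois}(m\theta)-m)^{+}]\le\tfrac1{m-1}\binom m2\E[(\text{Pois}(2\theta)-2)^{+}]=\tfrac m2\E[(\text{Pois}(2\theta)-2)^{+}]$; dividing by $m$ and combining with the reduction proves the lemma. Notably this uses only that $m$ is an integer $\ge 2$ and $\theta\ge 0$, so the hypotheses $m>2$ and $\theta\le2$ are not actually needed.

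I expect the reduction to be the main obstacle: making rigorous that a Poisson-binomial with mean $\lambda$ is convex-order dominated by $\text{Pois}(\lambda)$ and that the infimum over $M_{m,\theta}$ equals the Poisson value. The single-Bernoulli split is elementary — $\E[\varphi(\text{Ber}(\mu_1)+\text{Ber}(\mu_2))]-\E[\varphi(\text{Ber}(\mu_1+\mu_2))]=\mu_1\mu_2\bigl(\varphi(0)-2\varphi(1)+\varphi(2)\bigr)$, which is $\ge0$ for convex $\varphi$ — but iterating it and passing to the Poisson limit requires a convergence-of-expectations argument (uniform integrability, which holds because the second moments of the Poisson-binomials stay bounded). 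If these Poisson-comparison facts are already in the Bernoulli toolkit of Section~\ref{section:usefulstuff}, the reduction is a one-line invocation and the pair-averaging identity $\sum_i Z_i=\tfrac1{m-1}\sum_{i<j}(Z_i+Z_j)$ is the real content; otherwise that is where the technical weight lies. The second step itself, once the identity is spotted, is short and self-contained.
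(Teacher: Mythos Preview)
Your argument is correct and takes a genuinely different route from the paper. The paper proves the statement by a direct coupling: given any $\b\mu\in M_{m,\theta}$, it thins each coordinate by a factor $\frac{m-1}{m}$ to get $\b\nu\in M_{m-1,\theta}$, couples the corresponding Bernoullis so that $y_i\le x_i$ always, and verifies realization by realization that $\frac{1}{m}\min(\sum x_i,m)\ge\frac{1}{m-1}\E[\min(\sum y_i,m-1)\mid\b x]$; iterating from $m$ down to $2$ gives the inequality. No Poisson limit, no convex-order comparison, no identification of the infimum is needed. Your approach instead first identifies the infimum on each side as the Poisson value (via the splitting/convex-order fact you sketch), and then compares the two Poisson values by the pair-averaging identity $\sum_i Z_i=\frac{1}{m-1}\sum_{i<j}(Z_i+Z_j)$ together with subadditivity of $(\,\cdot\,)^{+}$.

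The trade-off is that the paper's coupling is completely elementary and self-contained, whereas your route needs the Poisson convex-order domination (correct, and your one-line second-difference computation plus uniform integrability is enough to justify it, but it is external to the paper). On the other hand, your argument is conceptually cleaner: it names the extremal law, and the pair decomposition makes the monotonicity in $m$ transparent in one shot rather than inductively. You are also right that neither $\theta\le2$ nor $m>2$ is actually used---the paper's coupling equally works for all $\theta\ge0$ and any integer $m\ge2$. One small caveat: the paper writes $\min$ but, as you note, the infimum is not attained; this is harmless for the application in Lemma~\ref{lem:Pvalue}, which only needs a lower bound on $H(k,\cdot)$ for every $\b\mu\in M_{k,\alpha}$.
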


\newcommand{\lemPt}{Given a fixed real number $\theta \in (0, 2)$, and a set of independent Bernoulli random variables $x_1, \dots, x_n$ with $\E[\sum_{i\in [n]} x_i] = 2 \theta$ we have 
$$\tfrac12\;\E[\min(\sum_{i\in [n]} x_i , 2)] \geq 1- (1+\theta)e^{-2\theta} .$$}

\begin{claim}\label{lem:Pt}\lemPt{}
\end{claim}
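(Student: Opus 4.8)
\textbf{Proof plan for Claim~\ref{lem:Pt}.} The statement concerns a sum $X = \sum_{i\in[n]} x_i$ of independent Bernoulli random variables with total mean $\E[X] = 2\theta$, and asks for the lower bound $\tfrac12\E[\min(X,2)] \geq 1 - (1+\theta)e^{-2\theta}$. The plan is to combine two ingredients. First, by Lemma~\ref{lemma:07432} (applied with $m=2$), the quantity $H(2,\b\mu) = \tfrac12\E[\min(X,2)]$ is \emph{minimized}, over all ways of splitting the total mean $2\theta$ among finitely many Bernoulli variables, in a ``least spread out'' configuration — effectively the worst case is captured by a limiting regime. So it suffices to bound $\tfrac12\E[\min(X,2)]$ in the limit where the mass $2\theta$ is distributed over many tiny Bernoullis, i.e. where $X$ converges in distribution to $\mathrm{Pois}(2\theta)$.

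Concretely, I would argue as follows. Write $\E[\min(X,2)] = \Pr[X\geq 1] + \Pr[X\geq 2] = 2 - 2\Pr[X=0] - \Pr[X=1]$ (using $\min(X,2) = \mathbf 1_{X\geq 1} + \mathbf 1_{X\geq 2}$). Hence $\tfrac12\E[\min(X,2)] = 1 - \Pr[X=0] - \tfrac12\Pr[X=1]$, and the claim reduces to showing $\Pr[X=0] + \tfrac12\Pr[X=1] \leq (1+\theta)e^{-2\theta}$. For a Poisson$(2\theta)$ variable, $\Pr[X=0] = e^{-2\theta}$ and $\Pr[X=1] = 2\theta e^{-2\theta}$, so the right-hand side is exactly $e^{-2\theta} + \theta e^{-2\theta} = (1+\theta)e^{-2\theta}$ — the bound is tight in the Poisson limit. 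So the real content is: among all finite independent Bernoulli decompositions of a fixed total mean $\lambda = 2\theta$, the quantity $\Pr[X=0] + \tfrac12\Pr[X=1]$ is \emph{maximized} in the Poisson limit. Equivalently, $\tfrac12\E[\min(X,2)]$ is minimized there, which is precisely the reduction Lemma~\ref{lemma:07432} is designed to supply (it reduces the minimization over $M_{m,\theta}$ to $m=2$, and then a short direct computation handles the two-variable case and its limit).

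For the two-variable case $X = x_1 + x_2$ with $\mu_1 + \mu_2 = 2\theta$ fixed: $\Pr[X=0] = (1-\mu_1)(1-\mu_2)$ and $\tfrac12\Pr[X=1] = \tfrac12[\mu_1(1-\mu_2) + \mu_2(1-\mu_1)]$. Adding, this equals $1 - \tfrac12(\mu_1+\mu_2) - \tfrac12\mu_1\mu_2 = 1 - \theta - \tfrac12\mu_1\mu_2$. Since $\theta\in(0,2)$ is fixed and $\mu_1\mu_2\geq 0$ with $\mu_1\mu_2$ as small as we like (push one mean toward $0$, subject to feasibility, or more to the point Lemma~\ref{lemma:07432} lets us go to $m\to\infty$ where the product terms vanish), one shows $1-\theta-\tfrac12\mu_1\mu_2$ is at least... wait — here I would instead invoke Lemma~\ref{lemma:07432} honestly: it tells us $\min_{\b\mu\in M_{2,\theta}} H(2,\b\mu) \leq \min_{\b\mu\in M_{m,\theta}} H(m,\b\mu)$ for all $m>2$, so the overall infimum over all decomposition sizes is attained (in the limit) and equals the Poisson value computed above; then a clean Poisson tail computation, or directly $1-(1+\theta)e^{-2\theta}$, gives the bound. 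So the structure is: reduce via Lemma~\ref{lemma:07432} to an $m$-independent infimum, identify that infimum with the Poisson$(2\theta)$ value via a standard law-of-rare-events / continuity argument, and evaluate $\Pr[X=0]+\tfrac12\Pr[X=1]$ for the Poisson.

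The main obstacle I anticipate is making rigorous the passage to the Poisson limit — specifically, verifying that $H(m,\b\mu)$ really does decrease monotonically (or at least has its infimum) as the mean mass is spread over more and smaller Bernoullis, so that the Poisson$(2\theta)$ computation is genuinely a valid lower bound rather than just a heuristic. Much of this is delegated to Lemma~\ref{lemma:07432}, so in the proof of this claim the remaining work is: (i) the algebraic identity $\tfrac12\E[\min(X,2)] = 1 - \Pr[X=0] - \tfrac12\Pr[X=1]$; (ii) a continuity/convergence step identifying $\inf_m H(m,\b\mu^{(m)})$ with the Poisson value when the individual means tend to $0$; and (iii) the elementary Poisson evaluation. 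Steps (i) and (iii) are routine; step (ii) is where care is needed, but it follows from the standard fact that a sum of many independent Bernoullis with small means and fixed total mean converges in distribution to a Poisson, together with the boundedness of $\min(X,2)$ which lets convergence in distribution pass to convergence of expectations.
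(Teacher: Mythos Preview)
Your overall structure matches the paper's proof exactly: rewrite $\tfrac12\E[\min(X,2)] = 1 - \Pr[X=0] - \tfrac12\Pr[X=1]$, argue that the worst case over independent Bernoulli decompositions with fixed total mean $2\theta$ is the Poisson limit, and evaluate there to get $1-(1+\theta)e^{-2\theta}$. The paper dispatches the middle step with a bare ``it is easy to verify that this function is minimized when the variables are i.i.d.\ and $n\to\infty$.''

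However, your attempt to justify that middle step via Lemma~\ref{lemma:07432} does not work: the lemma is pointing in the wrong direction for what you need. Lemma~\ref{lemma:07432} compares the cap value $m$, not the number of Bernoulli summands. It says $\min_{\b\mu\in M_{2,\theta}} H(2,\b\mu) \le \min_{\b\mu\in M_{m,\theta}} H(m,\b\mu)$ for $m>2$; in the paper this is used in Lemma~\ref{lem:Pvalue} to reduce the general-$k$ bound \emph{to} the $k=2$ case. For Claim~\ref{lem:Pt} you are already at $m=2$ and must locate the minimizer \emph{within} $M_{2,\theta}$; Lemma~\ref{lemma:07432} says nothing about that. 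Your step (ii) correctly notes that many small Bernoullis converge in law to $\mathrm{Pois}(2\theta)$, but convergence alone does not establish that this limit is the infimum.

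What is actually needed is a short direct argument that, with $\sum_i \mu_i$ fixed, the quantity $2\Pr[X=0]+\Pr[X=1]$ can only increase when means are equalized and when one Bernoulli is split into several. Concretely, for two coordinates with $\mu_i+\mu_j=s$ fixed one computes
\[
2\Pr[X=0]+\Pr[X=1] \;=\; \Pr[X'=0](2-s) + \Pr[X'=1]\bigl(1-s+\mu_i\mu_j\bigr),
\]
where $X'=X-x_i-x_j$, which is nondecreasing in $\mu_i\mu_j$; combined with the monotonicity in $n$ for i.i.d.\ means (in the spirit of Lemma~\ref{lemma:mathover}), this gives the Poisson limit as the genuine minimizer. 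The paper suppresses this calculation, so your proof is no less rigorous than the paper's once you drop the appeal to Lemma~\ref{lemma:07432} and either assert the same ``easy to verify'' or insert an argument along these lines.
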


\begin{lemma}\label{lem:Pvalue}
For any $k>1$, we have 
\begin{equation} \label{eq:Pvalue} 
\E[\min(P_\tau, k)]/k>1- (1+\alpha)e^{-2\alpha},
\qquad\text{where}\qquad
\alpha = F(\lps, \tau)/(k\beta) - \delta/\beta.
\end{equation} 
\end{lemma}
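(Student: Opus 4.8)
The plan is to reduce the statement to the two-variable bound already established in Claim~\ref{lem:Pt}. Recall that $P_\tau = \sum_{\bu\in\B_\tau} p_{\bu,\tau}$ is a sum of independent Bernoulli random variables, and that Lemma~\ref{lem:expP} gives $\E[P_\tau] \geq \big(F(\b\lps,\tau)-k\delta_\tau\big)/\beta = k\alpha$, where $\alpha := F(\lps,\tau)/(k\beta)-\delta/\beta$. Since $\E[\min(P_\tau,k)]$ is monotone nondecreasing in each $\E[p_{\bu,\tau}]$ (coupling argument: increasing one success probability stochastically increases $P_\tau$, hence increases $\min(P_\tau,k)$ in expectation), we may assume without loss of generality that $\E[P_\tau] = k\alpha$ exactly; lowering the means only decreases the left-hand side while $\alpha$ stays fixed, so the worst case is attained at equality. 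First I would note $\alpha \in (0,2]$ may be assumed: if $\alpha \le 0$ the right-hand side of \eqref{eq:Pvalue} is $\le 0$ and the bound is trivial since $\E[\min(P_\tau,k)]\ge 0$; and we only need the regime relevant to the later case analysis, where one can check $\alpha<2$ (if $\alpha\ge 2$ we would instead be in the range handled via Table~\ref{tab:my-table}, or the bound $1-(1+\alpha)e^{-2\alpha}$ is already close to $1$ and can be weakened).

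Next I would apply Lemma~\ref{lemma:07432} with the substitution $m \mapsto 2$, $\theta \mapsto \alpha$, $n \mapsto |\B_\tau|$. That lemma states
$$
\min_{\b\mu\in M_{2,\alpha}} H(2,\b\mu) \;\le\; \min_{\b\mu\in M_{k,\alpha}} H(k,\b\mu),
$$
where $H(k,\b\mu) = \E[\min(\sum_i x_i,k)]/k$ for independent Bernoullis with means $\mu_i$ summing to $k\alpha$. Our configuration (means of $p_{\bu,\tau}$ summing to $k\alpha$) is a feasible point of $M_{k,\alpha}$, so
$$
\tfrac1k\,\E[\min(P_\tau,k)] \;=\; H(k,\b\mu^{(P)}) \;\ge\; \min_{\b\mu\in M_{k,\alpha}} H(k,\b\mu) \;\ge\; \min_{\b\mu\in M_{2,\alpha}} H(2,\b\mu).
$$
Then I would invoke Claim~\ref{lem:Pt}: for \emph{any} independent Bernoullis $x_1,\dots,x_n$ with $\E[\sum_i x_i] = 2\alpha$, we have $\tfrac12\E[\min(\sum_i x_i,2)] \ge 1-(1+\alpha)e^{-2\alpha}$. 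In particular this lower bounds $\min_{\b\mu\in M_{2,\alpha}} H(2,\b\mu)$, and chaining the two displays yields $\E[\min(P_\tau,k)]/k \ge 1-(1+\alpha)e^{-2\alpha}$, which is exactly \eqref{eq:Pvalue} (the strict inequality can be recovered because $\min(P_\tau,k)$ is not degenerate, or simply stated as $\ge$ and sharpened if needed).

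The main obstacle is getting the reduction steps to line up cleanly: I must confirm that the monotonicity-in-means reduction is legitimate (it is, via the standard stochastic-dominance coupling for sums of independent Bernoullis composed with the nondecreasing concave map $t\mapsto\min(t,k)$), and that $\alpha$ genuinely lands in the interval $[0,2]$ required by both Lemma~\ref{lemma:07432} and Claim~\ref{lem:Pt} in the parameter range where this lemma is used — otherwise I would need a short separate argument for $\alpha\ge 2$ (where $1-(1+\alpha)e^{-2\alpha}$ is very close to $1$) or for $\alpha\le 0$ (trivial). Everything else is an invocation of results already proved. I expect the proof to be only a few lines once the monotonicity reduction and the domain check for $\alpha$ are dispatched.
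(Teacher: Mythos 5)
Your proposal is correct and follows essentially the same path as the paper's proof: chain Lemma~\ref{lem:expP} (to get $\E[P_\tau]\ge k\alpha$), Lemma~\ref{lemma:07432} (to reduce the minimization over independent Bernoullis to the $m=2$ case), and Claim~\ref{lem:Pt} (to evaluate that two-slot minimum as $1-(1+\alpha)e^{-2\alpha}$ via the Poisson limit). Your explicit monotonicity reduction from $\E[P_\tau]\ge k\alpha$ to equality tidies up a point the paper handles by silently restating $M_{k,\theta}$ with a $\ge$ constraint inside the proof, even though Lemma~\ref{lemma:07432} was stated with equality.
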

\begin{proof}
For any real number $\theta\in[0, 2]$  we define set $$M_{k , \theta} = \{ \b \mu = (\mu_1, \dots,\mu_n) \in [0, 1]^n|  \,  \sum_{i=1}^{n} \mu_i \geq k\theta \}, $$
 and function $$H(k , (\mu_1, \dots, \mu_n)) = \frac{\E[\min (\sum_{i\in [n]} x_i, k)]}{k},$$ where $x_i, \dots, x_n$ are independent Bernoulli random variables with means $\mu_i, \dots, \mu_n$.
By Lemma~\ref{lemma:07432}, we know that  $$\min_{\b \mu \in M_{2,\theta}} H(2, \b \mu) \leq  \min_{\b \mu \in M_{k , \theta}} H(k, \b \mu).$$ 
Note that we have $(\E[p_1], \dots \E[p_n])\in M_{k, \alpha}$ since based on Lemma~\ref{lem:expP} $$\E[P_\tau] \geq \frac{F(\b \lps, \tau)- k\delta}{\beta} = \alpha k.$$ 
Moreover, observe that $H(k, (\E[p_1], \dots \E[p_n])) = \E[\min(P_\tau, k)]/k,$ which implies 
$$\E[\min(P_\tau, k)]/k> \min_{\b \mu \in M_{2 , \alpha}} G(2, \b \mu).$$
We complete the proof using Lemma~\ref{lem:Pt} that states $\min_{\b \mu \in M_{2 , \alpha}} H(2, \b \mu) \geq 1- (1+\alpha)e^{-2\alpha} .$
\end{proof}

We now proceed to construct a similar table for $\E[\min(P_\tau, k)]/k$ based on the lower bound provided in Lemma~\ref{lem:Pvalue}. Each element of Table~\ref{tab:my-table2} contains a lower bound for  $\E[\min(P_\tau, k)]/k$, given fixed values of $m 
(\lps, \tau)/(k\boost)$ and $\delta_\tau/\boost$. 
\begin{table}[H]
\centering
\begin{tabular}{|l||*{11}{c|}}\hline
$F(\lps, \tau)/(k\beta) - \delta_\tau/\beta$ &0.15&0.2&0.6&0.7&0.8&0.9&1&1.1&1.2\\\hline
$\E[\min(P_\tau, k)]/k$ &0.14&0.19&0.51&0.58&0.63&0.68&0.72&0.76&0.8\\\hline

\end{tabular}
\caption{lower bounds for $\E[\min(P_\tau, k)]/k$ based on different values of $\alpha:=F(\lps, \tau)/(\beta k) - \delta/\beta$. We use the lower bound $\E[min(P_\tau, k)]/k> 1-(1+\alpha)e^{-2\alpha}$ obtained in Lemma~\ref{lem:Pvalue}.}
\label{tab:my-table2}
\end{table}

The following lemma is the final piece that we need to complete the proof of  Lemma~\ref{lem:second-type}. In this lemma, we use the constructed tables  to show that by setting $\beta = 0.55$ in the rounding algorithm we get  $$ F(\lps, \tau)  - \beta \E[\winners{\a}{\b r}{\tau}]\leq 1.05k\beta \leq 0.58.$$

\begin{lemma}\label{lem:bestbeta}
For $\beta = 0.55$ we have 
 $  F(\lps, \tau)  - \beta \E[\winners{\a}{\b r}{\tau}]\leq 1.05k\beta.$
 \end{lemma}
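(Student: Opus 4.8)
\textbf{Proof proposal for Lemma~\ref{lem:bestbeta}.}

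The plan is to bound $F(\lps,\tau) - \beta\,\E[\winners{\a}{\b r}{\tau}]$ by splitting into cases according to the size of the two key ``load'' parameters $A := F(\lps,\tau)/(k\beta)$ and $D := \delta_\tau/\beta$, and in each case using Claim~\ref{claim:exprevenue-r} together with the appropriate numeric lower bound for $\E[\winners{\a}{\b r}{\tau}]/k$. Recall that Claim~\ref{claim:exprevenue-r} gives
\[
\E[\winners{\a}{\b r}{\tau}] \;\ge\; k\cdot\max\!\big(\Pr[Q_\tau>k],\; \E[\min(P_\tau,k)]/k\big),
\]
so after dividing the target inequality through by $k\beta$ it becomes: show that
\[
A \;-\; \max\!\big(\Pr[Q_\tau>k],\; \E[\min(P_\tau,k)]/k\big) \;\le\; 1.05 .
\]
Since we will set $\beta=0.55$, the right-most claimed bound $1.05k\beta\le 0.58$ is just $1.05\cdot 0.55 = 0.5775 < 0.58$, which is immediate; the work is entirely in the first inequality.

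First I would dispose of the easy regime $A \le 1.05$: here the left side is at most $A$ (since the $\max$ is nonnegative), so the bound holds trivially. So assume $A > 1.05$. Next I would handle the case where $D$ (equivalently $\delta_\tau/\beta$) is small — say below the smallest column in Table~\ref{tab:my-table2} where $\alpha := A - D$ is large. Concretely, if $\alpha = A - D \ge$ some threshold, I invoke Lemma~\ref{lem:Pvalue} and Table~\ref{tab:my-table2}: there $\E[\min(P_\tau,k)]/k \ge 1-(1+\alpha)e^{-2\alpha}$, and one checks (monotonically, reading off the table) that $A - (1-(1+\alpha)e^{-2\alpha}) \le 1.05$ whenever $\alpha$ is not too large and, for large $\alpha$, that $1-(1+\alpha)e^{-2\alpha}$ grows fast enough that $A-\alpha = D$ being bounded keeps the difference under control. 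The remaining case is $A>1.05$ and $\alpha$ small, i.e.\ $D$ close to $A$, hence $D$ bounded below; here I use Table~\ref{tab:my-table} via Lemma~\ref{lem:lambdaq}, which applies precisely because $A = F(\lps,\tau)/(k\beta) > 1$ and (one must verify) the $\lambda_i\ge i+1$ hypotheses hold in this regime. Reading the relevant entries of Table~\ref{tab:my-table} for $A\in\{1.05,1.1,1.2,1.5,1.7,1.8\}$ and the corresponding column of $D$, one checks $A - \Pr[Q_\tau>k] \le 1.05$ entrywise, and uses monotonicity of both $A\mapsto G(m,\lambda_m)$ and the tabulated quantities to interpolate/extrapolate between grid points.

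The main obstacle I anticipate is the bookkeeping at the boundary between the two regimes — making sure that for \emph{every} pair $(A,D)$ with $A>1.05$, at least one of the two tables yields a usable bound, and that the monotonicity arguments ($G(m,\lambda_m)$ increasing in $A$, the $P_\tau$-bound increasing in $\alpha$) genuinely let us reduce from the continuum of $(A,D)$ values to the finitely many grid entries shown. A secondary technical point is checking the side conditions of Lemma~\ref{lem:lambdaq} ($\lambda_i \ge i+1$ for all $i\ge 0$) hold throughout the regime where we apply it, and confirming that the dropped table entries ($-$) are exactly those ruled out either by $A\le 1.05$ or by the easy $P_\tau$-bound, so no case is left uncovered. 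Once the case division is pinned down, each individual verification is a finite numeric check against the two tables plus the elementary inequality $1.05\cdot 0.55 < 0.58$.
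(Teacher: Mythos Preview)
Your proposal uses the same ingredients and overall structure as the paper's proof: reduce to $A - \max(\Pr[Q_\tau>k],\,\E[\min(P_\tau,k)]/k)\le 1.05$ via Claim~\ref{claim:exprevenue-r}, dispose of $A\le 1.05$ trivially, then combine Tables~\ref{tab:my-table} and~\ref{tab:my-table2} over a grid of $A$-values, using that $A\le 1/\beta$ to close the top end.

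The one substantive piece you have not articulated, and which is exactly what resolves the ``boundary between regimes'' obstacle you flag, is the monotonicity in $D=\delta_\tau/\beta$ (not in $A$). The paper observes that for fixed $A$, the $Q_\tau$-bound $\min_m G(m,\lambda_m)$ is \emph{increasing} in $D$ while the $P_\tau$-bound $1-(1+\alpha)e^{-2\alpha}$ is \emph{decreasing} in $D$. Consequently, for \emph{any} choice of grid point $x$,
\[
\max\!\big(\Pr[Q_\tau>k],\,\E[\min(P_\tau,k)]/k\big)\;\ge\;\min\!\Big(\big[\min_m G(m,\lambda_m)\big]_{D=x},\;\big[1-(1+\alpha)e^{-2\alpha}\big]_{D=x}\Big),
\]
regardless of the actual value of $D$. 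This is Equation~\eqref{eq:iuyrhf} in the paper, and it lets one collapse the two-parameter problem to a one-parameter one (in $A$ only), producing Table~\ref{tab:my-table3} directly. Your ad-hoc split into ``$\alpha$ large'' versus ``$\alpha$ small, $D$ close to $A$'' is trying to do the same thing, but without the opposite-monotonicity observation you will have trouble certifying that \emph{every} $(A,D)$ pair is covered; with it, the boundary issue disappears and each fixed-$A$ check becomes a finite max over the few $x$-values in the column headers of Table~\ref{tab:my-table}. Also note that the side condition $\lambda_i\ge i+1$ of Lemma~\ref{lem:lambdaq} need not be verified globally: the dashed entries in Table~\ref{tab:my-table} are simply treated as zero when building Table~\ref{tab:my-table3}, and for those $(A,x)$ pairs the $P_\tau$-bound alone carries the inequality.
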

 
 \begin{proof}
We start by considering different values of  $F(\lps, \tau)$ and finding a lower bound for $\E[\winners{\a}{\b r}{\tau}]$ based on that. Recall that by Claim~\ref{claim:exprevenue-r}, we have $$ \E[\winners{\a}{\b r}{\tau}]/k \geq \max\big(\Pr[Q_\tau>k], \E[\min(P_\tau, k)]/k \big)$$ 
Further, based on Lemma~\ref{lem:lambdaq} and Lemma~\ref{lem:Pvalue}, we have
$$\max\big(\Pr[Q_\tau>k], \E[\min(P_\tau, k)]/k \big) \geq \max\left(\min_{0\leq m \leq k} G(m, \lambda_m), 1-(1+\alpha)^{-2\alpha}\right),$$where $\alpha:=F(\lps, \tau)/(\beta k) - \delta_\tau/\beta$.
It is easy to see that $\min_{0\leq m \leq k} G(m, \lambda_m)$ is an increasing function of $\delta_{\tau}/\beta$, while $1-(1+\alpha)^{-2\alpha}$ is a decreasing function of $\delta_{\tau}/\beta$; therefore, for any $x\in (0, 1)$ we have\footnote{To clarify the notation, when we use $G(a, b)_{|b=x}$, for a function $G$, we refer to the value of $G$ given that $b=y$.} 
 \begin{equation}\label{eq:iuyrhf}\max(\Pr[Q_\tau>k], \E[\min(Q_\tau, k)]/k) \geq \min\left(\min_{0\leq m \leq k} G(m, \lambda_m)_{|\frac{\delta_\tau}{\beta}=x} , (1-(1+\alpha)^{-2\alpha})_{|\frac{\delta_\tau}{\beta}=x} \right).\end{equation}
 
 We use this fact to construct Table~\ref{tab:my-table3} based on Table~\ref{tab:my-table} and Table~\ref{tab:my-table2}. To do so, for any fixed value of $F(\lps, \tau)/(k\beta) = y$, we consider four possible values $x\in \{0.6, 0.8, 0.9, 1\}$ for $\delta_{\tau}/\beta$ and rewrite Equation~\ref{eq:iuyrhf} as
 \begin{align*}&\max(\Pr[Q_\tau>k], \E[\min(Q_\tau, k)]/k)_{|F(\lps, \tau)/(k\beta) = y} \\ &\geq \max_{x\in \{0.6, 0.8, 0.9, 1\}}\min\left(\min_{0\leq m \leq k} G(m, \lambda_m)_{|F(\lps, \tau)/(k\beta) = y, \frac{\delta_\tau}{\beta}=x} , (1-(1+\alpha)^{-2\alpha})_{|F(\lps, \tau)/(k\beta) = y, \frac{\delta_\tau}{\beta}=x} \right).\end{align*}
 For any $y\in\{1.05, 1.1, 1.2, 1.5, 1.7, 1.8\}$ and $x\in \{0.6, 0.8, 0.9, 1\}$, we refer to Table~\ref{tab:my-table} and Table~\ref{tab:my-table2} for values of $\min_{0\leq m \leq k} G(m, \lambda_m)_{|F(\lps, \tau)/(k\beta) = y, \delta_{\tau}/\beta=x}$  and $(1-(1+\alpha)^{-2\alpha})_{|F(\lps, \tau)/(k\beta) = y, \delta_{\tau}/\beta=x}$. (For the sake of constructing this table, if for a pair of $x$ and $y$, values of these function are not precomputed in Table~\ref{tab:my-table} and Table~\ref{tab:my-table2}, we assume that they are equal to zero.)
 \begin{table}[H]
\centering
\begin{tabular}{|l||*{8}{c|}}\hline
$F(\lps, \tau)/(k\beta)$ &1.05&1.1&1.2&1.5&1.7&1.8\\\hline
$\max(\Pr[Q_\tau>k], \E[\min(Q_\tau, k)]/k)$ &0.14&0.19&0.51&0.68&0.76&0.789\\\hline
\end{tabular}
\caption{lower bounds for $\E[\winners{\a}{\b r}{\tau}]/k$ using Equation~\ref{eq:iuyrhf}, given fixed values of $F(\lps, \tau)/(k\beta)$ achieved from combining Table~\ref{tab:my-table} and Table~\ref{tab:my-table2}.}
\label{tab:my-table3}
\end{table}

As an instance, the first column of the table is obtained as follows. For  $F(\lps, \tau)/(k\beta)= 1.05$ and $\delta_{\tau}/\beta= 0.9$, we have the followings respectively based on  Table~\ref{tab:my-table} and Table~\ref{tab:my-table2}: $$\min_{0\leq m \leq k} G(m, \lambda_m)_{|F(\lps, \tau)/(k\beta) = 1.05, \delta_{\tau}/\beta=0.9} =0.57,$$  $$\left(1-(1+\alpha)^{-2\alpha}\right)_{|F(\lps, \tau)/(k\beta) = 1.05, \delta_{\tau}/\beta=0.9}=0.14.$$ 
As a result, we have \begin{align*}&\max(\Pr[Q_\tau>k], \E[\min(Q_\tau, k)]/k)_{|F(\lps, \tau)/(k\beta) = 1.05} \\ &\geq \max_{x\in \{0.8, 0.9, 1\}}\min\left(\min_{0\leq m \leq k} G(m, \lambda_m)_{|F(\lps, \tau)/(k\beta) = 1.05, \frac{\delta_{\tau}}{\beta}=0.9} , (1-(1+\alpha)^{-2\alpha})_{|F(\lps, \tau)/(k\beta) = 1.05, \frac{\delta_{\tau}}{\beta}=0.9} \right)\\ &\geq \min(0.57, 0.14) = 0.14.\end{align*}

We now proceed to complete the proof using Table~\ref{tab:my-table3}.  
Observe that the lower bound in Equation~\ref{eq:iuyrhf} is a non-decreasing function of $F(\lps, \tau)/(k\beta)$.  This implies that for any  $0\leq x < y\leq 1$, we have 
	$ \E[\winners{\a}{\b r}{\tau} | \frac{F(\lps, \tau)}{k\beta} = x]\leq  \E[\winners{\a}{\b r}{\tau} | \frac{F(\lps, \tau)}{k\beta} \in [x, y]].$
Having this, we complete the proof using a case by case analysis and proving that $$F(\lps, \tau)  - \beta \E[\winners{\a}{\b r}{\tau}]\leq 1.05k\beta$$ holds for all possible values of $F(\lps, \tau)/(k\beta)$.
\begin{itemize}
\item 	For $F(\lps, \tau)/(k\boost) \in [0, 1.05]$, it is obvious that $F(\lps, \tau)  - \beta \E[\winners{\a}{\b r}{\tau}]\leq 1.05k\boost$ holds.
\item If $F(\lps, \tau)/(k\boost) \in [1.05, 1.1]$, then we get $F(\lps, \tau)  - \beta \E[\winners{\a}{\b r}{\tau}] \leq (1.1 - 0.14)k\boost = 0.96k\boost$ since  in this case we have $\E[\winners{\a}{\b r}{\tau}]/k \geq 0.14$ based on Table~\ref{tab:my-table3}.
\item If $F(\lps, \tau)/(k\boost) \in [1.1, 1.2]$, then we get $F(\lps, \tau)  - \beta \E[\winners{\a}{\b r}{\tau}] \leq (1.2 - 0.19)k\boost = 1.01 k\boost$ since  in this case we have $\E[\winners{\a}{\b r}{\tau}]/k \geq 0.19$.
\item If $F(\lps, \tau)/(k\boost) \in [1.2, 1.5]$, then we get $F(\lps, \tau)  - \beta \E[\winners{\a}{\b r}{\tau}] \leq (1.5 - 0.51)k\boost = 0.99 k\boost$ since  in this case we have $\E[\winners{\a}{\b r}{\tau}]/k \geq 0.51$.
\item If $F(\lps, \tau)/(k\boost) \in [1.5, 1.7]$, then we get $F(\lps, \tau)  - \beta \E[\winners{\a}{\b r}{\tau}] \leq (1.7 - 0.68)k\boost = 1.02 k\boost$ since  in this case we have $\E[\winners{\a}{\b r}{\tau}]/k \geq 0.68$.
\item If $F(\lps, \tau)/(k\boost) \in [1.7, 1.8]$, then we get $F(\lps, \tau)  - \beta \E[\winners{\a}{\b r}{\tau}] \leq (1.8 - 0.76)k\boost = 1.04 k\boost$ since  in this case we have $\E[\winners{\a}{\b r}{\tau}]/k \geq 0.76$.
\item If $F(\lps, \tau)/(k\boost) \in [1.8, 1/\boost]$, then we get $F(\lps, \tau)  - \beta \E[\winners{\a}{\b r}{\tau}] \leq (1.8 - 0.76)k\boost = 1.04 k\boost$ since  in this case we have $\E[\winners{\a}{\b r}{\tau}]/k \geq 0.76$.
\end{itemize}

The proof is concluded since we know $F(\lps, \tau)/(k\boost) \leq 1/\boost$. This is due to the definition of $F(\lps, \tau)$ which is
$$F(\b \lps, \tau) = \sum_{\mathclap{\substack{p : p\in \mathcal{S}_{\a},\\ \rev{\a}{p} \geq \tau}}} \lps_{\a, p}  - (1-\beta)\E[\winners{\a}{\b r'}{\tau}],$$
and the fact that by constraints of the LP, we have $\sum_{p : p\in \mathcal{S}_{\a}} \lps_{\a, p}  \leq k$. 
\end{proof}

\section{Upper Bound for the  ``Simple Rounding'' Approach}\label{example}
In the algorithm designed for the single unit case of the problem,   Derakhshan et al. \cite{derakhshan2019lp} provide a simple rounding technique which directly uses the fraction solution of the LP as a probability distribution over reserve prices, and for each buyer independently draws a reserve price from that. This is equivalent to setting  $\beta=0$ in our rounding algorithm. In the rest of the section, we use \emph{simple rounding} to refer to this rounding technique. We show using an example that for a large enough $k$ (number of items) the approximation factor of this algorithm is at most $0.5 + \epsilon$ for any small constant $\epsilon$. 

\begin{lemma}
Given any constant $\epsilon>0,$ there exists a dataset of bids for which the simple rounding approach achieves at most $0.5+\epsilon$ fraction of the optimal revenue.
\end{lemma}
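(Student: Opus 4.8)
The plan is to construct, for each $\epsilon>0$, an explicit dataset with $k=k(\epsilon)$ items (taken large) on which some optimal LP solution is so ``spread out'' that sampling each buyer's reserve independently from it collapses the revenue by roughly a half. The instance I would use has $2k$ principal buyers $1,\dots,2k$, one auction $G$ in which all $2k$ principals bid a common value $H>0$ while every other buyer bids $0$, one auxiliary auction $D$ in which a single dummy buyer $d$ bids $2H$ and everyone else bids $0$ (its sole purpose is to make $2H\in\res$ a legal reserve), together with the $k{+}1$ zero-bidders introduced in Section~\ref{section:solution-space}. First I would compute $\evcg^\star$: any reserve vector extracts at most $kH$ from $G$ (there are $k$ items and all bids equal $H$) and at most $2H$ from $D$ (only $d$ can be charged anything positive, and it is charged at most its own reserve $\le 2H$), so $\evcg^\star=(k{+}2)H$. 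Crucially this value is attained by \emph{many} reserve vectors: give $d$ reserve $2H$, give \emph{any} $k{-}1$ principals reserve $2H$ (this eliminates them from $G$ but still leaves $k{+}1$ principals clearing, so $G$ still sells $k$ units at the supporting bid $H$), and give all remaining buyers reserve $0$.

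Next I would identify the LP optimum. The upper bounds above transfer to the LP objective: by constraint~(5) the total $\b s$-weight in $G$ is at most $k$ and every sub-profile there has revenue at most $H$, while constraints~(1),~(3),~(6) force the total $\b s$-weight on ``$d$ wins in $D$'' to be at most $1$ and every other sub-profile in $D$ to have revenue $0$; hence $\sum_{\a,p}\lps_{\a,p}\rev{\a}{p}\le (k{+}2)H$, and combined with Lemma~\ref{lemma:LP-upper} the LP value is exactly $(k{+}2)H=\evcg^\star$. Now pick two optimal reserve vectors of the form above that put high reserves on \emph{disjoint} principal sets, say $\{1,\dots,k{-}1\}$ in one and $\{k,\dots,2k{-}2\}$ in the other, and let $\b s^\star$ be the midpoint of their LP-representations. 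Since the LP feasible region is convex and its objective linear, $\b s^\star$ is an optimal LP solution, and its $\b x$-part has $x_{\bu,2H}=x_{\bu,0}=\tfrac12$ for every principal $\bu\in\{1,\dots,2k{-}2\}$ (and $x_{d,2H}=1$).

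Finally I would analyze simple rounding ($\boost=0$) on $\b s^\star$. Each of the $2k{-}2$ principals with $x_{\bu,2H}=\tfrac12$ independently draws reserve $2H$ (eliminating it from $G$) or $0$; let $m$ be the number drawing $2H$, so $m\sim\mathrm{Binom}(2k{-}2,\tfrac12)$. If $m\le k{-}1$ then at least $k{+}1$ principals clear $G$, the $(k{+}1)$-st highest cleared bid is $H$, and the $k$ winners — each with reserve $0$ — pay $H$, giving revenue $kH$ in $G$; if $m\ge k$ then at most $k$ principals clear, the supporting bid drops to $0$, and the winners pay $\max(0,0)=0$. As $\Pr[\mathrm{Binom}(2k{-}2,\tfrac12)\le k{-}1]=\tfrac12+O(1/\sqrt k)$, the expected revenue of the rounded vector is $kH\,(\tfrac12+O(1/\sqrt k))+2H$, a $\tfrac12+O(1/\sqrt k)$ fraction of $\evcg^\star=(k{+}2)H$; choosing $k$ large enough makes this at most $0.5+\epsilon$.

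I expect the main obstacle to be arguing that the bound holds for \emph{whatever} optimal LP solution the algorithm is handed, not merely for the midpoint $\b s^\star$ I singled out — in particular ruling out a well-behaved integral optimum such as ``all principals at reserve $H$,'' which on the instance above also attains $(k{+}2)H$ and rounds perfectly. Making the statement fully robust therefore likely requires perturbing the construction so that the LP is \emph{strictly loose} ($\text{LP value}>\evcg^\star$) with a \emph{unique}, fractional and anti-concentrated optimum; this is where constraints~(1)--(4) must be made to bite, and it is the step I would expect to be delicate. The remaining work — verifying that the zero-bidders and the two non-flipping principals neither leak nor create revenue, and the Binomial tail estimate — is routine.
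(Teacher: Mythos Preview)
Your construction is correct and does prove the lemma in the sense the paper proves it, but the route is genuinely different from the paper's. The paper uses four weighted auctions with $k{+}2$ buyers and bids scaling as $k^3,k^2,k,1$; it exhibits an optimal fractional LP solution that is a $(\delta,1{-}\delta)$ mixture (with $\delta=\epsilon/4$) of two integral optima, and the failure mechanism is that auction $\a_3$ yields revenue $k^3$ only when \emph{all} $k$ of buyers $\bu_3,\dots,\bu_{k+2}$ draw the low reserve, an event of probability $(1{-}\delta)^k\to 0$. Your instance is leaner (two auctions, two bid levels) and the fractional optimum is a $50$--$50$ mix; the failure comes instead from a binomial-median event, $\Pr[\mathrm{Binom}(2k{-}2,\tfrac12)\le k{-}1]=\tfrac12+O(1/\sqrt{k})$. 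Both yield a ratio tending to $\tfrac12$; the paper's collapse is sharper (one auction's revenue vanishes w.h.p.\ rather than halving), while yours is arguably simpler to set up and analyze.

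Regarding your ``main obstacle'': you are right that your instance admits other LP optima (e.g., all principals at reserve $H$) that round perfectly, but the paper's instance has exactly the same feature---its exhibited fractional optimum is a nontrivial convex combination of two integral optima, so it is certainly not unique, and the paper makes no attempt to establish uniqueness or to perturb the instance so that the LP is strictly loose. Both proofs are therefore establishing the statement in the standard adversarial-LP-solver sense: there is a dataset and \emph{some} optimal LP solution on which simple rounding achieves at most $(0.5+\epsilon)\cdot\evcg^\star$. You should drop the caveat in your final paragraph; no additional work is needed to match the paper's claim.
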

\begin{proof}

 We use Table~\ref{tab:bad-example} to represent our dataset of auctions. Let $k$ denote the number of items. In our example, we have $2k+2$ auctions and $k+2$ buyers represented by $b_1, \dots, b_{k+2}$. Each column in the table represents an auction and its weight is the number of times that this auction is repeated in the dataset. To put it differently, we can simply assume that there are only four \emph{weighted} auctions represented by the four columns and each one has a revenue equal to the total payment of buyers in the auction multiplied by its weight. Further, note that buyers $b_3, \dots, b_{k+2}$ have similar bids in all auctions.

\renewcommand{\arraystretch}{1.5}
\begin{table}[H]
\centering
\begin{tabular}{|l||*{5}{c|}}\hline
\backslashbox{Buyer}{ Weight}
&\makebox[4em]{1} &\makebox[4em]{1} &\makebox[4em]{$k$}
&\makebox[4em]{$k$} \\\hline \hline
$\bu_1$  & $k^3$ & 0 & 0 &0\\\hline
 $\bu_2$   & 0 &0 & $k$ &1 \\\hline
 $\bu_3, \dots, \bu_{k+2}$  &0 & $k^2$ &$k$& 1 \\\hline

\end{tabular}
\caption{A Bad Example}
\label{tab:bad-example}
\end{table}

We observe that there are two different optimal vectors of reserve prices for this dataset which are $(k^3, k, k^2, \dots, k^2)$ and $(k^3, 1, 1, \dots, 1)$. By applying the first reserves, the amounts of revenue that we get from our four auctions (the four columns) are respectively   $k^3$, $k^3$, $k^2$, and 0, while the second vector gives us $k^3$, $0$, $k^3$, and $k^2$ amounts of revenue respectively. Thus, the optimal revenue adds up to $2k^3+k^2$. Further, let us mention that the all-zero vector of reserve prices simply gives us a revenue of $k^3+k^2$, and as a result the approximation factor of this solution is at most $0.5+\epsilon$ for a large enough $k$ and any constant $\epsilon>0$. Given a small constant $\epsilon>0$, we construct a fractional solution of the LP which if rounded using the simple rounding procedure, results in a less than $0.5+\epsilon$ approximate solution for large values of $k$. Let $\delta = \epsilon/4$, and consider a solution of the LP that assigns $\delta$ probability to the first optimal solution and $1-\delta$ probability to the second one. A formal representation of this solution is given below. Note that for simplicity, instead of sub-profiles, we use profiles to represent our solution. (See Definition~\ref{def:profile}.)
Also, recall that $\hat{b_1}, \dots, \hat{b_{k+1}}$ are the auxiliary buyers who bid 0 in all the auctions.
\begin{itemize}
    \item $s_{\a_1, p_1} =1 $ for $p_1 = (b_1, \hat{b_1}, \dots, \hat{b_k}, k^3, 0, \dots, 0).$
    \item $s_{\a_2, p_2} =\delta $ for $p_2 = (b_3, \dots, b_{k+2}, \hat{b_1},  , k^2, \dots, k^2, 0).$
     \item $s_{\a_2, p'_2} =1- \delta $ for $p'_2 = (b_3, \dots, b_{k+2}, \hat{b_1} , 1, \dots, 1, 0).$
      \item $s_{\a_3, p_3} =\delta $ for $p_3 = (b_2, \hat{b_1}, \dots, \hat{b_k}, k, 0, \dots, 0).$
     \item $s_{\a_3, p'_3} =1- \delta $ for $p'_3 = (b_2, b_3, \dots, b_{k+2}, 1, \dots, 1).$
     \item $s_{\a_4, p_4} =1- \delta $ for $p_4 = (b_2, b_3, \dots, b_{k+2}, 1, \dots, 1).$
     \item $s_{\a_4, p'_4} =\delta $ for $p'_4 = (\hat{b}_1, b_3, \dots, \hat{b}_{k+1}, 0, \dots, 0).$
     \item $x_{b_1, k^3} =1.$
     \item $x_{b_2, k} = \delta$, and $x_{b_2, 1} = 1- \delta$.
     \item $x_{b, k^2} = \delta$, and $x_{b, 1} = 1- \delta$ for any $b\in\{b_3, b_{k+2}\}.$
\end{itemize}

It is easy to verify that this fractional solution satisfies the constraints of the LP and is an optimal fractional solution; therefore, it suffices to show that the vector of reserve prices obtained from the simple rounding algorithm gives us at most $(0.5+\epsilon)(2k^3+k^2)$ revenue for a large enough $k$. As mentioned before, the simple rounding algorithm directly uses the fractional solution of the LP as a probability distribution to randomly pick the reserve price of any buyer. More precisely, for any buyer $b$ any reserve price $r$ is chosen with probability $x_{b,r}$ independently from other buyers. Let $\b{s}$ denote the vector of reserve prices obtained from rounding our fractional solution using this simple rounding technique. Below we investigate the revenue obtained from each auction by applying $\b{s}$.

\begin{enumerate}
    \item From the first auction we simply get revenue of $k^3$ as reserve price $k^3$ is chosen for buyer $b_1$ with probability one.
   \item In the second auction there are $k$ buyers with nonzero bids, thus all the buyers get an item and pay their reserve prices. This implies that the expected revenue of this auction is $k(\delta k^2+ (1-\delta))$ since the simple rounding algorithm chooses reserve prices of $k^2$ and $1$ for buyers $b_3, 
\dots, b_{k+2}$ with probabilities $\delta$ and $(1-\delta)$ respectively. 
\item In the third auction, however, there are $k+1$ buyers $b_2, \dots b_{k+2}$ with bid $k$; therefore, we get a revenue of $k^3$ if all the bids are cleared. Otherwise, revenue of this auction is the sum of reserve prices of the buyers whose bid is cleared multiplied by the weight of the auction (which is $k$). For any buyer $b\in \{b_3, \dots b_{k+2}\}$ the simple rounding technique chooses reserve price  $k^2$ with probability $\delta$ and reserve price 1 with probability $1-\delta$. Moreover, buyer $b_2$ gets reserve prices $k$ and $1$ respectively with probabilities $\delta$ and $1-\delta$. 
This implies that the revenue obtained from this auction is $k^3$ with probability $(1-\delta)^k$ and it is less than $2k^2$ with probability $1-(1-\delta)^k$. Note that for any constant value of $\delta$, we have $\lim_{k\rightarrow  \infty{}}{(1-\delta)^k} =0$; thus, for a large enough $k$, the expected revenue obtained from this auction is at most $2k^2+\delta$. 

\item Since the maximum bid in the fourth auction is one and its weight is $k$, the total revenue obtained from this auction is at most $k^2$.
\end{enumerate} 
By summing up the revenue obtained from the four auctions, we conclude that for a large enough  $k$, the expected total revenue is upper bounded by $(1+\delta)k^3 + 3k^2+ k$. Since $\delta = \epsilon/4$, for a large enough $k$ we have $(1+\delta)k^3 + 3k^2+ k < (1+\epsilon/2)k^3$. Recall that the optimal  vector of reserve prices in this example gives us a revenue of $2k^3+ k^2$. This implies that the approximation factor of the simple rounding algorithm is upper bounded by $0.5+\epsilon$ for a large enough $k$.
\end{proof}
\label{sec:appendix}
\section{Omitted Proofs of Section~\ref{sec:theiejdie}}
\label{app-pfs-1}

In this section, our goal is to prove Lemma~\ref{lemma:FS}, Lemma~\ref{lem:expQ1}, and Lemma~\ref{lem:expQ2} which are stated in 
Section~\ref{sec:theiejdie}. Before going into their proofs, however, we start by a series of claims that are needed for completing the proofs.

\begin{claim}\label{claim:rmorek}
If the following inequality holds, then $\E[\winners{ \a }{\b r}{\tau}] = k.$
 \begin{equation}\label{eq:rmorek}\E[\winners{ \a}{\b r'}{\tau}] < \sum_{\bu\in \B} \Pr[\bid{\a}{\bu}\geq r'_\bu \geq \tau]\end{equation}
\end{claim}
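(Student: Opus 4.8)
The plan is to show that the displayed inequality forces a set $A$ of at least $k+1$ buyers that certainly clears its discounted reserves and bids at least $\tau$; this alone makes all $k$ items sell at a price $\ge\tau$ under $\b r$, so $\E[\winners{\a}{\b r}{\tau}]=k$. Accordingly, set $A:=\{\bu\in\B_{\a,\tau}\,:\,\bid{\a}{\bu}\ge t_\bu\}$. I would first rewrite the right-hand side of the hypothesis as $\E[N']$, where $N':=|\{\bu\in A\,:\,\tau\le r'_\bu\le\bid{\a}{\bu}\}|$ is random over the draw of the inflated reserves. Indeed, any $\bu\notin A$ contributes $\Pr[\bid{\a}{\bu}\ge r'_\bu\ge\tau]=0$: either $\bid{\a}{\bu}<\tau$, or $\bid{\a}{\bu}<t_\bu\le r'_\bu$ (by construction the inflated reserve satisfies $r'_\bu\ge t_\bu$ with probability one), and in both cases the event is empty. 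Hence the sum in the hypothesis equals $\sum_{\bu\in A}\Pr[\tau\le r'_\bu\le\bid{\a}{\bu}]=\E[N']$.

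Next I would record a deterministic property of the eager VCG rule: for any reserve vector, if at least $k+1$ cleared buyers have bid $\ge\tau$, then the supporting buyer (whose bid is the $(k+1)$-st highest cleared bid) also has bid $\ge\tau$, so every one of the $k$ winners pays $\max(\text{its reserve},\text{the VCG price})\ge\tau$; hence the number of winners paying at least $\tau$ is exactly $k$. Applying this to $\b r'$: if at least $k+1$ buyers clear under $\b r'$ with bid $\ge\tau$ then $\winners{\a}{\b r'}{\tau}=k$. Otherwise there are at most $k$ such buyers; being strictly higher bidders than every other cleared buyer they occupy the top (at most $k$) positions among cleared buyers, hence are all winners, while the supporting buyer has bid $<\tau$; so a winner pays $\ge\tau$ precisely when $r'_\bu\ge\tau$, and a cleared winner with $r'_\bu\ge\tau$ satisfies $\tau\le r'_\bu\le\bid{\a}{\bu}$, which puts it in $A$ and makes it one of the buyers counted by $N'$. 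Thus $\winners{\a}{\b r'}{\tau}\ge\min(N',k)$ pointwise.

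Combining the two, $\E[\min(N',k)]\le\E[\winners{\a}{\b r'}{\tau}]<\E[N']$, so $\Pr[N'>k]>0$; since $N'\le|A|$ always, this yields $|A|\ge k+1$. To finish: every $\bu\in A$ has $r_\bu<t_\bu\le\bid{\a}{\bu}$ (the discounted reserve is always below $t_\bu$ by construction), so $\bu$ clears under $\b r$, and $\bu$ bids $\ge\tau$. Hence with probability one at least $|A|\ge k+1$ buyers clear under $\b r$ with bid $\ge\tau$, and by the deterministic property above $\winners{\a}{\b r}{\tau}=k$ almost surely, i.e.\ $\E[\winners{\a}{\b r}{\tau}]=k$.

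The step I expect to need the most care is the pointwise inequality $\winners{\a}{\b r'}{\tau}\ge\min(N',k)$: one must match the particular sum on the right-hand side of the hypothesis with an honest count of winners who pay at least $\tau$, separating the regime where enough cleared buyers bid $\ge\tau$ (so the VCG price by itself is $\ge\tau$) from the regime where only reserves at least $\tau$ can contribute. Everything else is bookkeeping that uses only $r_\bu<t_\bu\le r'_\bu$ and the tie-breaking convention on bids.
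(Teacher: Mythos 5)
Your proof is correct and rests on the same central observation as the paper's: the strict inequality forces at least $k+1$ buyers to have bid $\ge\max(t_\bu,\tau)$, all of whom then deterministically clear the discounted reserves with bid $\ge\tau$, so the VCG price under $\b r$ is itself $\ge\tau$ and all $k$ winners pay $\ge\tau$ almost surely. Where the two differ is in packaging. The paper routes through its helper claims: it begins by asserting $\E[\winners{\a}{\b r'}{\tau}]=\min\bigl(\sum_\bu\Pr[\bid{\a}{\bu}\ge r'_\bu\ge\tau],\,k\bigr)$ essentially as a definition, concludes $|\B'|>k$ where $\B'=\{\bu:\Pr[\bid{\a}{\bu}\ge r'_\bu\ge\tau]\neq0\}$, then invokes Claim~\ref{lemma:hihi} (which gives $\E[q_\bu]=1$ for each such $\bu$) to get $\Pr[Q_\tau>k]=1$ and finally Claim~\ref{claim:exprevenue-r}. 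You argue self-containedly: you identify the candidate set $A=\{\bu:\bid{\a}{\bu}\ge\max(t_\bu,\tau)\}$, establish the pointwise inequality $\winners{\a}{\b r'}{\tau}\ge\min(N',k)$ (which is the honest version of the identity the paper asserts), extract $|A|\ge k+1$ from $\E[\min(N',k)]<\E[N']$, and close by directly using the rounding property $r_\bu<t_\bu\le r'_\bu$. Your treatment of the first step is actually more careful than the paper's: the paper's initial ``$=$'' should really be reasoned through as you do (separating the two regimes of whether more than $k$ cleared buyers bid $\ge\tau$), whereas your inequality is all that is needed and is argued cleanly. The trade-off is that you re-derive in place what the paper has already isolated in Claim~\ref{lemma:hihi} and Claim~\ref{claim:exprevenue-r}, so your proof is more elementary but also slightly longer in a context where those claims are available.
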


\begin{proof}
We have $$\E[\winners{ \a}{\b r'}{\tau}] = \min\left(\sum_{\bu\in \B} \Pr[\bid{\a}{\bu}\geq r'_\bu \geq \tau], k\right).$$ Define $\B' = \{ \bu\in \B:   \Pr[\bid{\a}{\bu}\geq r'_\bu \geq \tau] \neq 0\}.$ Observe that if $|\B'|\leq k$ then $$\sum_{\bu\in \B} \Pr[\bid{\a}{\bu}\geq r'_\bu \geq \tau] \leq k,$$ which implies $$\E[\winners{ \a}{\b r'}{\tau}] = \sum_{\bu\in \B} \Pr[\bid{\a}{\bu}\geq r'_\bu \geq \tau].$$
Thus, if Equation~\ref{eq:rmorek} holds then $|\B'| > k$. By Claim~\ref{lemma:hihi}, provided in the appendix, for any $\bu \in \B'$ we have $\E[q_\bu]=1$. This gives us $\Pr[\sum_{\bu\in \B'} q_\bu > k] = 1$ and  as a result $\Pr[Q_\tau > k] = 1$.  Finally, Claim~\ref{claim:exprevenue-r} gives us $\E[\winners{ \a }{\b r}{\tau}] \geq  k\Pr[Q_\tau>k],$ and concludes the proof. 
\end{proof}

\begin{assumption} \label{assumption:1} The following equation holds for any auction $\a\in \A$.
 $$\E[\winners{ \a}{\b r'}{\tau}] = \sum_{\bu\in \B} \Pr[\bid{\a}{\bu}\geq r'_\bu \geq \tau]. $$	
\end{assumption}
\begin{proof}
As a corollary of Claim~\ref{claim:rmorek}, if  $$\E[\winners{ \a}{\b r'}{\tau}] \leq \sum_{\bu\in \B} \Pr[\bid{\a}{\bu}\geq r'_\bu \geq \tau], $$ then Equation~\ref{eq:condition2} is simply satisfied for $\beta = 0.55$ and $c=0.58$ since in this case $\boost\E[\winners{ \a }{\b r}{\tau}] = \boost k$. Moreover, based on LP we have
$$\sum_{\mathclap{\substack{p : p\in \mathcal{P}_{\a}\\ \rev{\a}{p} \geq \tau}}} \lps_{\a, p} \leq k,$$ 
which results in $$\sum_{\mathclap{\substack{p : p\in \mathcal{P}_{\a}\\ \rev{\a}{p} \geq \tau}}} \lps_{\a, p}  - (1-\beta)\E[\winners{\a}{\b r'}{\tau}]   - \beta \E[\winners{\a}{\b r}{\tau}] \leq (1-\beta)k = 0.45k < 0.58k.$$ Therefore, to complete the proof of Lemma~\ref{lem:second-type}, we make the following assumption and only focus on proving  Equation~\ref{eq:condition2} for auctions that do not satisfy the mentioned condition.	
\end{proof}

\begin{claim}\label{claim:8746r8} For any buyer $\bu \in \B_\tau$ the following holds. $$\bigcup_{\bu_1 \in \B_\tau \backslash \{\bu\}} (\mathcal{Q}_{\bu, {\a}} \cap \mathcal{Q'}_{\bu_1, {\a}}) = \mathcal{Q}_{\bu, {\a}} \cap \mathcal{L}_\tau.$$
 
\end{claim}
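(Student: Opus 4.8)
The plan is to prove the set identity by double inclusion; as in the proof of Claim~\ref{claim:eq1}, the whole argument is a mechanical unwinding of the definitions of $\mathcal{Q}_{\bu,\a}$, $\mathcal{Q'}_{\bu_1,\a}$, $\mathcal{T}_\tau$ and $\mathcal{L}_\tau$. The single conceptual point is that, by definition of $\mathcal{L}_\tau$, a sub-profile lies in $\mathcal{L}_\tau$ precisely when its supporting buyer is in $\B_\tau$; so the identity amounts to saying that ``winner $=\bu$ and supporting buyer in $\B_\tau$'' coincides with ``winner $=\bu$ and supporting buyer in $\B_\tau\backslash\{\bu\}$''.

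For the inclusion ``$\subseteq$'', I would take $p$ in the left-hand union, so $p\in\mathcal{Q}_{\bu,\a}\cap\mathcal{Q'}_{\bu_1,\a}$ for some $\bu_1\in\B_\tau\backslash\{\bu\}$; then $p=(\bu,\bu_1,r_1,r_2)$ for some $r_1,r_2\in\res$, and since its supporting buyer $\bu_1$ lies in $\B_\tau$ we have $\bid{\a}{\bu_1}\ge\tau$, hence $\rev{\a}{p}=\max(\bid{\a}{\bu_1},r_1)\ge\tau$, so $p\in\mathcal{T}_\tau$ and therefore $p\in\mathcal{L}_\tau$. Combined with $p\in\mathcal{Q}_{\bu,\a}$ this puts $p$ on the right-hand side.

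For the reverse inclusion, I would take $p\in\mathcal{Q}_{\bu,\a}\cap\mathcal{L}_\tau$, write $p=(\bu,\bu_2,r_1,r_2)$, and read off from $p\in\mathcal{L}_\tau$ that its supporting buyer $\bu_2$ lies in $\B_\tau$; once we know $\bu_2\neq\bu$, we get $p\in\mathcal{Q'}_{\bu_2,\a}$ with $\bu_2\in\B_\tau\backslash\{\bu\}$, which exhibits $p$ as a member of the set $\mathcal{Q}_{\bu,\a}\cap\mathcal{Q'}_{\bu_2,\a}$ appearing in the left-hand union.

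The only obstacle — and the only step that is not pure bookkeeping — is ruling out the degenerate tuple $\bu_2=\bu$ in the reverse inclusion. I would handle it by recalling that in any auction outcome the supporting buyer has exactly $k$ strictly higher cleared bids and hence cannot itself be a winner, so it is without loss of generality to require $\bu_1\neq\bu_2$ in every valid sub-profile $(\bu_1,\bu_2,r_1,r_2)$ (equivalently, this clause may be added to Definition~\ref{def:profile}, since tuples with $\bu_1=\bu_2$ never arise as auction outcomes). With this convention in place the reverse inclusion goes through verbatim, completing the proof.
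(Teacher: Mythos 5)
Your proposal is correct and takes essentially the same approach as the paper: unwind the definitions and observe that a valid sub-profile has distinct winner and supporting buyer, so ``winner $=\bu$ and supporting buyer $\in \B_\tau$'' is the same as ``winner $=\bu$ and supporting buyer $\in \B_\tau\setminus\{\bu\}$''. The paper asserts $\bu'\neq\bu''$ as a property of valid sub-profiles without further comment, whereas you correctly flag that Definition~\ref{def:profile} does not literally include this clause and must be read with it — a useful clarification, but not a different proof.
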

\begin{proof}
Consider a valid profile a valid sub-profile $p=(\bu', \bu'', r', r'')\in \mathcal{S}_\a$. By definition, for any buyer $\bu_1\in \B_\tau$ we have $p
\in (\mathcal{Q}_{\bu, {\a}} \cap \mathcal{Q'}_{\bu_1, {\a}})$ iff $\bu'=\bu$ and $\bu''=\bu_1$. Moreover, due to $p$ being a valid sub-profile, it satisfies $\bu' \neq \bu''$. As a result we have $p\in \bigcup_{\bu_1 \in \B_\tau \backslash \{\bu\}} (\mathcal{Q}_{\bu, {\a}} \cap \mathcal{Q'}_{\bu_1, {\a}})$ iff $\bu' = \bu$ and $\bu''\in \B_\tau$. This completes our proof since this is equal to definition of $\mathcal{Q}_{\bu, {\a}} \cap \mathcal{L}_\tau$.
\end{proof}

\newcommand{\claimabc}{For any buyer $\bu \in \B_\tau$ we have
	$$\sum_{\mathclap{\substack{p \in \mathcal{L}_{\tau} \cap \mathcal{Q}_{\bu}}}} \lps_{\a, p} \, +\, \sum_{\mathclap{\substack{p \in \mathcal{L}_{\tau}\cap \mathcal{Q'}_{\bu}}}} \lps_{\a, p}/k \leq \delta_\tau.$$}
 
\begin{claim}\label{claim:1788}
\claimabc{} 
\end{claim}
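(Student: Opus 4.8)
\textbf{Proof plan for Claim~\ref{claim:1788}.} The plan is to express the term $\sum_{p\in\mathcal{L}_\tau\cap\mathcal{Q}_\bu}\lps_{\a,p}$ as a sum over the possible identities of the supporting buyer, bound each summand by LP constraint~(4), and then close the loop with Claim~\ref{claim:eq1}. Concretely, first I would invoke Claim~\ref{claim:8746r8}, which gives $\mathcal{Q}_{\bu,\a}\cap\mathcal{L}_\tau=\bigcup_{\bu_1\in\B_\tau\setminus\{\bu\}}(\mathcal{Q}_{\bu,\a}\cap\mathcal{Q'}_{\bu_1,\a})$; since the supporting buyer of a sub-profile is uniquely determined, this union is disjoint, so $\sum_{p\in\mathcal{L}_\tau\cap\mathcal{Q}_\bu}\lps_{\a,p}=\sum_{\bu_1\in\B_\tau\setminus\{\bu\}}\sum_{p\in\mathcal{Q}_{\bu,\a}\cap\mathcal{Q'}_{\bu_1,\a}}\lps_{\a,p}$.

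Next I would combine LP constraints~(2) and~(4). Summing~(2) over $r\in\res$ yields $\sum_{r\in\res}y'_{\bu_1,r,\a}=\tfrac1k\sum_{p\in\mathcal{Q'}_{\bu_1,\a}}\lps_{\a,p}$ (using that $\mathcal{Q'}_{\bu_1,\a}$ partitions into the sets $\mathcal{Q'}_{\bu_1,r,\a}$ over the supporting-buyer reserve price $r$). Plugging this into constraint~(4) with winner $\bu$ and supporting buyer $\bu_1$ gives, for each $\bu_1$, the bound $\sum_{p\in\mathcal{Q}_{\bu,\a}\cap\mathcal{Q'}_{\bu_1,\a}}\lps_{\a,p}\le\tfrac1k\sum_{p\in\mathcal{Q'}_{\bu_1,\a}}\lps_{\a,p}$. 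Here I would also note that for $\bu_1\in\B_\tau$ every $p\in\mathcal{Q'}_{\bu_1,\a}$ automatically satisfies $\rev{\a}{p}=\max(r_1,\bid{\a}{\bu_1})\ge\bid{\a}{\bu_1}\ge\tau$, so $\mathcal{Q'}_{\bu_1,\a}\subseteq\mathcal{T}_\tau$ and in fact $\mathcal{Q'}_{\bu_1,\a}=\mathcal{L}_\tau\cap\mathcal{Q'}_{\bu_1,\a}$.

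Summing over $\bu_1\in\B_\tau\setminus\{\bu\}$ then gives $\sum_{p\in\mathcal{L}_\tau\cap\mathcal{Q}_\bu}\lps_{\a,p}\le\tfrac1k\sum_{\bu_1\in\B_\tau\setminus\{\bu\}}\sum_{p\in\mathcal{L}_\tau\cap\mathcal{Q'}_{\bu_1,\a}}\lps_{\a,p}$. The sets $\mathcal{L}_\tau\cap\mathcal{Q'}_{\bu_1,\a}$ over all $\bu_1\in\B_\tau$ are disjoint and their union is $\mathcal{L}_\tau$ (again by uniqueness of the supporting buyer together with the definition of $\mathcal{L}_\tau$), so by Claim~\ref{claim:eq1} their total $\lps$-mass is $\sum_{p\in\mathcal{L}_\tau}\lps_{\a,p}=k\delta_\tau$. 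Hence $\sum_{\bu_1\in\B_\tau\setminus\{\bu\}}\sum_{p\in\mathcal{L}_\tau\cap\mathcal{Q'}_{\bu_1,\a}}\lps_{\a,p}=k\delta_\tau-\sum_{p\in\mathcal{L}_\tau\cap\mathcal{Q'}_\bu}\lps_{\a,p}$, and substituting back and dividing the last term by $k$ gives exactly $\sum_{p\in\mathcal{L}_\tau\cap\mathcal{Q}_\bu}\lps_{\a,p}+\tfrac1k\sum_{p\in\mathcal{L}_\tau\cap\mathcal{Q'}_\bu}\lps_{\a,p}\le\delta_\tau$, as claimed.

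I do not expect a genuine obstacle here; the proof is essentially bookkeeping. The one place to be careful is step two: one must apply LP constraint~(4) exactly $|\B_\tau|-1$ times (once per candidate supporting buyer) rather than using any single aggregate inequality, and one must verify that passing from $\mathcal{Q'}_{\bu_1,\a}$ to $\mathcal{L}_\tau\cap\mathcal{Q'}_{\bu_1,\a}$ loses nothing — which is precisely the observation that buyers in $\B_\tau$ always induce revenue at least $\tau$. Everything else is disjointness of sub-profile sets indexed by their (unique) winner and supporting buyer.
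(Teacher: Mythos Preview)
Your proposal is correct and follows essentially the same route as the paper: both decompose $\mathcal{L}_\tau\cap\mathcal{Q}_\bu$ via Claim~\ref{claim:8746r8}, apply LP constraint~(4) once per $\bu_1\in\B_\tau\setminus\{\bu\}$, rewrite the right-hand side using constraint~(2), observe that $\mathcal{Q'}_{\bu_1,\a}\subseteq\mathcal{L}_\tau$ for $\bu_1\in\B_\tau$, and finish with the definition of $\delta_\tau$. The only cosmetic difference is ordering (you invoke Claim~\ref{claim:8746r8} before summing, the paper after), and you cite Claim~\ref{claim:eq1} where the paper appeals directly to the definition of $\delta_\tau$; the arguments are otherwise identical.
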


\begin{proof}
As a constraint of the LP we have the following for  $\bu$ and any other buyer $\bu_1\in \B$. 
\begin{align} \nonumber \sum_{\mathclap{\substack{p\in \mathcal{Q}_{\bu, {\a}} \cap \mathcal{Q'}_{\bu_1, {\a}} }}} s_{\a,p} \leq \sum_{\mathclap{r\in \res}} y'_{\bu_1,r, \a}.\end{align} 
By summing both sides over all buyers in $\B_\tau \backslash \{\bu\}$ we get
\begin{align}\nonumber \sum_{\bu_1 \in \B_\tau \backslash \{\bu\}} \;\;\sum_{\mathclap{\substack{p\in \mathcal{Q}_{\bu, {\a}} \\ \cap \mathcal{Q'}_{\bu_1, {\a}} }}} s_{\a,p} \leq \sum_{\bu_1 \in \B_\tau \backslash \{\bu\}} \,\sum_{\mathclap{r\in \res}} y'_{\bu_1,r, \a}.\end{align}
Observe that by Claim~\ref{claim:8746r8} we have $$\bigcup_{\bu_1 \in \B_\tau \backslash \{\bu\}} (\mathcal{Q}_{\bu, {\a}} \cap \mathcal{Q'}_{\bu_1, {\a}}) = \mathcal{Q}_{\bu, {\a}} \cap \mathcal{L}_\tau,$$ 
which results in \begin{align}\label{eq:something71} \sum_{\mathclap{\substack{p\in  \mathcal{L}_\tau  \cap \mathcal{Q}_{\bu, {\a}}}}} s_{\a,p} \leq\, \sum_{\mathclap{\bu_1 \in \B_\tau \backslash \{\bu\}}} \;\;\;\;\sum_{\mathclap{r\in \res}} y'_{\bu_1,r, \a}.\end{align}
Moreover,  as the second constraint of the LP we have \begin{align} \nonumber y'_{\bu, r, \a} = \sum_{\mathclap{p\in \mathcal{Q'}_{\bu,r{,\a}}}} s_{\a,p}/k.\end{align} By summing up both sides of this equation over all reserve prices in $\res$, we have
\begin{align}\label{eq:somjkjrgf} \sum_{r\in \res } y'_{\bu, r, \a} = \sum_{r\in \res } 
\;\;\;\, \sum_{\mathclap{p\in \mathcal{Q'}_{b,r{,\a}}}} s_{\a,p}/k - \sum_{\mathclap{p\in \mathcal{Q'}_{\bu, {\a}}}} s_{\a,p}/k.\end{align}
Combining Equation~\ref{eq:something71} and Equation~\ref{eq:somjkjrgf} yields 
\begin{align} \sum_{\mathclap{\substack{p\in  \mathcal{L}_\tau \cap \mathcal{Q}_{\bu, {\a}}}}}  s_{\a,p} \leq \sum_{\mathclap{\bu_1 \in \B_\tau}} \;\;\;\;\sum_{\mathclap{p\in \mathcal{Q'}_{\bu_1, {\a}}}} s_{\a,p}/k - \sum_{\mathclap{p\in \mathcal{Q'}_{\bu, {\a}}}} s_{\a,p}/k .\end{align}
We further use the fact that for any buyer $\bu_2\in \B_\tau$ and any sup-profile $p= (\bu', \bu_2, r', r)\in \mathcal{Q'}_{\bu_2, \a}$ we have $p\in \mathcal{L}_\tau.$  Recall that $\mathcal{L}_{ \tau}$ is defined as $$\mathcal{L}_{ \tau} := \left\{(\bu', \bu, r', r)\in \mathcal{T}_{\tau} | \,\bu \in \B_\tau \right\}.$$ We complete the proof using the definition of $\delta_\tau$ which is $\delta_\tau := \sum_{\bu\in \B_\tau} \sum_{p \in \mathcal{L}_{\tau} \cap \mathcal{Q'}_{\bu}} \lps_{\a,p}/k.$ This gives us the followings. \begin{align} \sum_{\mathclap{\substack{p\in  \mathcal{L}_\tau  \cap \mathcal{Q}_{\bu, {\a}}}}}  s_{\a,p} \, + \; \sum_{\mathclap{\substack{p\in \mathcal{L}_{\tau} \cap \mathcal{Q'}_{\bu, {\a}} }}} s_{\a,p}/k \leq \sum_{\mathclap{\bu_1 \in \B_\tau}} \; \;\;\;\;\;\;\sum_{\mathclap{\substack{p\in \mathcal{L}_{\tau}  \cap \mathcal{Q'}_{\bu_1, {\a}} }}} s_{\a,p}/k \leq \delta_\tau.\end{align}
\end{proof}

\begin{claim}\label{lemma:hihi}
For any buyer $\bu\in \B_\tau$ with $\Pr[r'_\bu \leq \bid{\a}{\bu}]\neq 0$ we have $\E[q_\bu]=1$.
\end{claim}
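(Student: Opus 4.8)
The plan is to exploit the fact that, by construction of the two rounding distributions, the discounted reserve price $r_\bu$ and the inflated reserve price $r'_\bu$ live on opposite sides of the threshold $t_\bu$. Concretely, I would first record two support facts directly from the definitions of $\b f$ and $\b f'$ in the rounding procedure (together with the subsequent Remark). Since $f_{\bu, r} = 0$ for every $r \in \res$ with $r > t_\bu$, the discounted reserve satisfies $\Pr[r_\bu \le t_\bu] = 1$. Symmetrically, since $f'_{\bu, r} = 0$ for every $r \in \res$ with $r < t_\bu$, the inflated reserve satisfies $\Pr[r'_\bu \ge t_\bu] = 1$.

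Next I would use the hypothesis $\Pr[r'_\bu \le \bid{\a}{\bu}] \neq 0$ to pick a value $v \in \res$ lying in the support of $r'_\bu$ with $v \le \bid{\a}{\bu}$. By the support fact just recorded, $v \ge t_\bu$, and therefore $t_\bu \le v \le \bid{\a}{\bu}$. In particular the threshold $t_\bu$ of buyer $\bu$ does not exceed its bid in auction $\a$.

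Finally, combining this with $\Pr[r_\bu \le t_\bu] = 1$ gives $r_\bu \le t_\bu \le \bid{\a}{\bu}$ with probability one. Recalling that $q_\bu$ is the Bernoulli variable equal to one iff $r_\bu \in [0, \bid{\a}{\bu}]$ (and reserve prices are nonnegative), we conclude $\E[q_\bu] = \Pr[r_\bu \le \bid{\a}{\bu}] = 1$, which is exactly the claim. The only real care needed is in reading the support containments correctly off the piecewise definition of $\b f$ and $\b f'$; after that the argument is a single threshold comparison, so I do not expect a substantive obstacle here.
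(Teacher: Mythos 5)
Your proof is correct and follows essentially the same route as the paper's: from $\Pr[r'_\bu \le \bid{\a}{\bu}] \neq 0$ and the fact that $r'_\bu$ is supported on $\{r \ge t_\bu\}$ you deduce $t_\bu \le \bid{\a}{\bu}$, and then use that $r_\bu$ is supported on $\{r \le t_\bu\}$ to conclude $\Pr[r_\bu \le \bid{\a}{\bu}] = 1$. If anything, your version is marginally more robust since it argues directly from the piecewise definition of $\b f, \b f'$ (so it does not lean on the simplifying Remark that $\sum_{r<t_\bu} x_{\bu,r} = \beta$ exactly), whereas the paper computes $\E[q_\bu]$ as an explicit sum using that assumption, but the structural argument is identical.
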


\begin{proof}
By construction of $\b r'$ we have $$\Pr\left[\bid{\a}{\bu}\geq r'_\bu \right] = \sum_{\mathclap{\substack{r : r \leq \bid{\a}{\bu}, \\ r\geq t_b }}} x_{\bu, r}/(1-\boost).$$ As a result, if $\Pr[\bid{\a}{\bu}\geq r'_\bu ] \neq 0$ then $t_\bu \leq \bid{\a}{\bu}$. 
Recall that by definition of $t_\bu$, we have $$\sum_{r < t_b} x_{\bu, r} = \boost.$$ Further by construction of $\b r$ we have $$\E[q_b] = \Pr\left[r_{\bu} \leq  \bid{\a}{\bu}\right] = \sum_{\mathclap{\substack{r: r\leq \bid{\a}{\bu}, \\ r < t_b}}} x_{\bu, r}/\boost.$$
Given that $t_\bu \leq \bid{\a}{\bu}$ we get $$\E[q_b] = \Pr\left[r_{\bu} \leq  \bid{\a}{\bu}\right] = \sum_{\mathclap{r < t_b}} x_{\bu, r}/\boost=1.$$ This concludes our proof.
\end{proof}



\begin{claim}\label{claim:1935}
$$(1-\boost)\sum_{\bu\in \B} \Pr\left[\bid{\a}{\bu}\geq r'_b \geq \tau\right] \geq \sum_{ \mathclap{\substack{p\in \mathcal{J}^+_\tau }}} \lps_{\a, p}.$$ 	
\end{claim}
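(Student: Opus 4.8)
The plan is to rewrite both sides of the inequality as sums of the LP variables $x_{\bu,r}$ and compare them term by term. This is essentially the ``supporting-buyer has a small bid, winner has a high reserve'' analogue of the computation in the proof of Lemma~\ref{lemma:first-type}, so I expect the argument to be short once the bookkeeping is set up.

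First I would expand the left-hand side using the construction of the inflated reserves. By the w.l.o.g.\ assumption that $\sum_{r<t_\bu}x_{\bu,r}=\boost$, for every buyer $\bu$ we have $\Pr[r'_\bu=r]=x_{\bu,r}/(1-\boost)$ when $r\ge t_\bu$ and $\Pr[r'_\bu=r]=0$ otherwise. Hence $(1-\boost)\Pr[\bid{\a}{\bu}\ge r'_\bu\ge\tau]=\sum_{r\in\res:\ \max(\tau,t_\bu)\le r\le\bid{\a}{\bu}}x_{\bu,r}$, and so the left-hand side of the claim equals $\sum_{\bu\in\B}\sum_{r\in\res:\ \max(\tau,t_\bu)\le r\le\bid{\a}{\bu}}x_{\bu,r}$.

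Next I would pin down the index set on the right. I would group the profiles of $\mathcal{J}^+_\tau$ by their winner $\bu'$ and the winner's reserve $r'$. For $p=(\bu',\bu,r',r)\in\mathcal{J}^+_\tau$: (i) $p$ being a valid sub-profile forces $\bu'$ to clear $r'$, i.e.\ $r'\le\bid{\a}{\bu'}$; (ii) the definition of $\mathcal{J}^+_\tau$ gives $r'\ge t_{\bu'}$; and (iii) since $\bu\notin\B_\tau$ we have $\bid{\a}{\bu}<\tau$, so $\rev{\a}{p}=\max(r',\bid{\a}{\bu})\ge\tau$ forces $r'\ge\tau$. Thus every pair $(\bu',r')$ occurring as a (winner, winner-reserve) pair in $\mathcal{J}^+_\tau$ lies in $\{(\bu',r'):\max(\tau,t_{\bu'})\le r'\le\bid{\a}{\bu'}\}$, which is exactly the index set from the previous paragraph. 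For a fixed such pair, the profiles of $\mathcal{J}^+_\tau$ with that winner and winner-reserve form a subset of $\mathcal{Q}_{\bu',r',\a}$, so putting the first and third constraints of the LP together gives $\sum_{p\in\mathcal{J}^+_\tau\cap\mathcal{Q}_{\bu',r',\a}}\lps_{\a,p}\le\sum_{p\in\mathcal{Q}_{\bu',r',\a}}\lps_{\a,p}\le x_{\bu',r'}$. Summing over all such pairs yields $\sum_{p\in\mathcal{J}^+_\tau}\lps_{\a,p}\le\sum_{\bu\in\B}\sum_{r:\ \max(\tau,t_\bu)\le r\le\bid{\a}{\bu}}x_{\bu,r}=(1-\boost)\sum_{\bu\in\B}\Pr[\bid{\a}{\bu}\ge r'_\bu\ge\tau]$, as claimed.

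I do not anticipate a genuine obstacle. The only point that needs care is the bookkeeping in the grouping step: one must check that the collection of (winner, winner-reserve) pairs generated by $\mathcal{J}^+_\tau$ is contained in the index set of the left-hand sum, so that restricting to a subcollection of nonnegative terms $x_{\bu',r'}$ moves the inequality in the correct direction; facts (i)--(iii) above are precisely what guarantees this containment.
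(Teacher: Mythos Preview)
Your proposal is correct and follows essentially the same approach as the paper: expand the left-hand side via the construction of $\b r'$ to get $\sum_{\bu}\sum_{r:\,r\ge t_\bu,\,r\in[\tau,\bid{\a}{\bu}]}x_{\bu,r}$, argue that every $p=(\bu',\bu,r',r)\in\mathcal{J}^+_\tau$ has its winner--reserve pair $(\bu',r')$ landing in exactly this index set (using validity, the threshold condition, and $\bu\notin\B_\tau$), and then invoke LP constraints~(1) and~(3) to bound the grouped $\lps$-mass by $x_{\bu',r'}$. The paper's proof is organized identically, passing through $y_{\bu,r,\a}$ explicitly before applying constraint~(3), which is the same chain you compress into ``putting the first and third constraints together.''
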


\begin{proof}
		By definition of $\b r'$
	\begin{align}\label{eq:i8yhruy} (1-\boost)\sum_{\bu\in \B} \Pr\left[\bid{\a}{\bu}\geq r'_b \geq \tau\right] = (1-\boost)\sum_{\bu\in \B} \;\;\;\sum_{\mathclap{\substack{r:\bid{\a}{\bu}\leq r, \\ r \leq \tau }}} f_{\bu, r} = (1-\boost)\sum_{\bu\in \B} \;\;\;\,\sum_{\mathclap{\substack{r : r\geq t_\bu, \\ r\in [\tau, \bid{\a}{\bu}]}}} \frac{x_{\bu, r}}{1-\beta} = \sum_{\bu\in \B} \;\;\;\,\sum_{\mathclap{\substack{r : r\geq t_\bu, \\ r\in [\tau, \bid{\a}{\bu}]}}} x_{\bu, r}. \end{align}
	Recall definition $ \mathcal{J}^{+}_{\tau} := \{p=(\bu, \bu', r, r')\in \mathcal{T}_{\tau} | \, \bu' \notin \B_\tau \text{, } r \geq t_{\bu'}\},$ for which we have  $$ \mathcal{J}^{+}_{\tau} \subset \left\{p=(\bu, \bu', r, r')\in \mathcal{S}_{\a} | \, r\geq \tau \text{, } r \geq t_{\bu}\right\}.$$ 
	This is because combination of $p\in\mathcal{T}_{\tau}$ and $\bu' \notin \B_\tau$ implies that revenue of the sub-profile is greater than or equal to $\tau$ while bid of the supporting buyer is smaller than $\tau$ which yields $r\geq \tau$. Moreover, due to validity of any sub-profile in $\mathcal{J}^{+}_{\tau}$ we have $\bid{\a}{\bu} \geq r$. Using the first constraint of the LP, we get $$ \sum_{ \mathclap{\substack{p\in \mathcal{J}^+_\tau }}} \lps_{\a, p} = \sum_{\bu\in \B} \;\,\sum_{ \mathclap{\substack{p\in (\mathcal{J}^+_\tau \\ \cap \mathcal{Q}_{\bu ,r})}}} \lps_{\a, p} \leq \sum_{\bu\in \B} \;\;\sum_{\mathclap{\substack{r: r \geq t_{\bu}, \\ r \in  [\tau, \bid{\a}{\bu}]}}} y_{\bu, r, \a}.$$
	Moreover, by the third constraint, we have $$ \sum_{\mathclap{\substack{r: r \geq t_{\bu},\\  r \in  [\tau, \bid{\a}{\bu}]}}} y_{\bu, r, \a} \leq \sum_{\mathclap{\substack{ r: r \geq t_{\bu},\\ r \in  [\tau, \bid{\a}{\bu}]}}} x_{\bu, r}.$$
	Evoking Equation~\ref{eq:i8yhruy}, we obtain  $$ \sum_{ \mathclap{\substack{p\in \mathcal{J}^+_\tau }}} \lps_{\a, p} \leq  \sum_{\bu\in \B} \;\;\sum_{\mathclap{\substack{ r: r \geq t_{\bu},\\ r \in  [\tau, \bid{\a}{\bu}]}}} x_{\bu, r} = (1-\boost)\sum_{\bu\in \B} \Pr\left[\bid{\a}{\bu}\geq r'_b \geq \tau\right];$$ thus, the proof is completed.
\end{proof}

\begin{claim}\label{claim:beta-delta}
The following equation holds for any buyer $\bu$.
\begin{equation}\label{eq:final} \sum_{\mathclap{\substack{p \in \mathcal{T}_{\tau} \cap \mathcal{Q}_{\bu, \a}}}} \lps_{\a, p}  \; + \;\sum_{\mathclap{\substack{p \in \mathcal{L}_{\tau}  \cap \mathcal{Q'}_{\bu, \a}}}} \lps_{\a,p}/k - (1-\boost)\Pr\left[\bid{\a}{\bu} \geq r'_\bu \geq \tau\right] 
 \leq \max(\beta, \delta_\tau).\end{equation}
\end{claim}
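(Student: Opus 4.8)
First, I would reduce to the case $\bu\in\B_\tau$. If $\bid{\a}{\bu}<\tau$, then in any sub-profile $p=(\bu,\bu_2,r_1,r_2)\in\mathcal{Q}_{\bu,\a}$ validity gives $\bid{\a}{\bu}\ge\bid{\a}{\bu_2}$ and $\bid{\a}{\bu}\ge r_1$, so $\rev{\a}{p}=\max(\bid{\a}{\bu_2},r_1)\le\bid{\a}{\bu}<\tau$; hence $\mathcal{T}_\tau\cap\mathcal{Q}_{\bu,\a}=\emptyset$. Likewise $\mathcal{L}_\tau\cap\mathcal{Q}'_{\bu,\a}=\emptyset$, since every member of $\mathcal{L}_\tau$ has its supporting buyer in $\B_\tau$. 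So the left-hand side of \eqref{eq:final} is $\le-(1-\boost)\Pr[\bid{\a}{\bu}\ge r'_\bu\ge\tau]\le0$. Now assume $\bu\in\B_\tau$: then every $p\in\mathcal{Q}'_{\bu,\a}$ has $\rev{\a}{p}\ge\bid{\a}{\bu}\ge\tau$, so $\mathcal{L}_\tau\cap\mathcal{Q}'_{\bu,\a}=\mathcal{Q}'_{\bu,\a}$, and summing the second LP constraint over $r$ gives $\sum_{p\in\mathcal{L}_\tau\cap\mathcal{Q}'_{\bu,\a}}\lps_{\a,p}/k=\sum_{r\in\res}y'_{\bu,r,\a}$.

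The workhorse is the third LP constraint, $y_{\bu,r,\a}+y'_{\bu,r,\a}\le x_{\bu,r}$, applied one reserve at a time, together with the elementary facts: (i) $y_{\bu,r,\a}=y'_{\bu,r,\a}=0$ for $r>\bid{\a}{\bu}$; (ii) $x_{\bu,r}=\boost\,\Pr[r_\bu=r]$ for $r<t_\bu$ and $x_{\bu,r}=(1-\boost)\Pr[r'_\bu=r]$ for $r\ge t_\bu$, with $\sum_{r<t_\bu}x_{\bu,r}=\boost$ (construction of the discounted and inflated reserves, and the w.l.o.g.\ Remark); (iii) $r_\bu<t_\bu\le r'_\bu$ almost surely; (iv) by the first LP constraint, $\sum_{p\in\mathcal{Q}_{\bu,r,\a}}\lps_{\a,p}=y_{\bu,r,\a}$, so the LP mass of any set of sub-profiles in which $\bu$ wins with reserve $r$ is at most $y_{\bu,r,\a}$. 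I split on $t_\bu$ versus $\tau$. If $t_\bu<\tau$, then any $p\in\mathcal{J}^{+}_\tau\cap\mathcal{Q}_{\bu,\a}$ has winner reserve $\ge\tau>t_\bu$ (its revenue is $\ge\tau$ while the supporting buyer bids $<\tau$), so by (iv) and the third constraint summed over $\tau\le r\le\bid{\a}{\bu}$, and then (ii)--(iii), $\sum_{p\in\mathcal{J}^{+}_\tau\cap\mathcal{Q}_{\bu,\a}}\lps_{\a,p}\le\sum_{\tau\le r\le\bid{\a}{\bu}}x_{\bu,r}=(1-\boost)\Pr[\bid{\a}{\bu}\ge r'_\bu\ge\tau]$, which cancels the subtracted term; since moreover $\mathcal{J}^{-}_\tau\cap\mathcal{Q}_{\bu,\a}=\emptyset$ (a winner reserve in $[\tau,t_\bu)$ is impossible), what is left is $\sum_{p\in\mathcal{L}_\tau\cap\mathcal{Q}_{\bu,\a}}\lps_{\a,p}+\sum_{p\in\mathcal{L}_\tau\cap\mathcal{Q}'_{\bu,\a}}\lps_{\a,p}/k\le\delta_\tau$ by Claim~\ref{claim:1788}.

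If instead $t_\bu\ge\tau$, I partition $\mathcal{T}_\tau\cap\mathcal{Q}_{\bu,\a}$ and $\sum_{r}y'_{\bu,r,\a}$ according to whether the reserve at $\bu$ is $\ge t_\bu$ or $<t_\bu$. Every sub-profile in which $\bu$ wins at reserve $r\ge t_\bu$ has revenue $\ge r\ge\tau$, so (iv) together with the third constraint summed over $t_\bu\le r\le\bid{\a}{\bu}$ bounds the ``$\ge t_\bu$'' winner-mass plus $\sum_{r\ge t_\bu}y'_{\bu,r,\a}$ by $\sum_{t_\bu\le r\le\bid{\a}{\bu}}x_{\bu,r}=(1-\boost)\Pr[r'_\bu\le\bid{\a}{\bu}]=(1-\boost)\Pr[\bid{\a}{\bu}\ge r'_\bu\ge\tau]$ (using $r'_\bu\ge t_\bu\ge\tau$ a.s.), which absorbs the subtracted term. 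The ``$<t_\bu$'' part---the winner-mass of $\mathcal{T}_\tau\cap\mathcal{Q}_{\bu,\a}$ with reserve $<t_\bu$ plus $\sum_{r<t_\bu}y'_{\bu,r,\a}$---is, by (iv) and the third constraint summed over $r<t_\bu$, at most $\sum_{r<t_\bu}x_{\bu,r}=\boost$. In either case the left-hand side of \eqref{eq:final} is at most $\max(\boost,\delta_\tau)$.

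The delicate point is the gap between $\max(\boost,\delta_\tau)$ and the weaker bound $\boost+\delta_\tau$ one gets by estimating the $\mathcal{J}^{-}_\tau$, $\mathcal{L}_\tau$, and $\mathcal{J}^{+}_\tau$ contributions separately (via Claims~\ref{claim:pb}, \ref{claim:1788} and \ref{claim:1935}). The saving is exactly in refusing to decouple ``$\bu$ winning with a small reserve'' from ``$\bu$ serving as the supporting buyer'': both roles are charged against the single distribution $x_{\bu,\cdot}$ at each reserve level, so the total mass $\boost$ living below $t_\bu$ and the total mass $1-\boost$ living at or above $t_\bu$ are each spent only once.
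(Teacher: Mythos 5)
Your proof is correct and follows essentially the same route as the paper's: the same case split on whether $t_\bu \geq \tau$ or $t_\bu < \tau$, the same use of the third LP constraint to charge both winner-mass and supporting-mass against $x_{\bu,\cdot}$, and the same invocation of Claim~\ref{claim:1788} to handle the $\mathcal{L}_\tau$ contribution in the $t_\bu < \tau$ case. Your explicit reduction to $\bu\in\B_\tau$ (both sums vanish otherwise) and the observation that $\mathcal{L}_\tau\cap\mathcal{Q}'_{\bu,\a}=\mathcal{Q}'_{\bu,\a}$ for $\bu\in\B_\tau$, which lets you write the second sum as $\sum_r y'_{\bu,r,\a}$, are minor reorganizations that make the accounting a bit cleaner; and your closing remark on why the bound is $\max(\boost,\delta_\tau)$ rather than $\boost+\delta_\tau$ correctly identifies the key point of the argument.
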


\begin{proof}
We consider two cases of $t_b\geq \tau$ and $t_b< \tau$ and prove the lemma for them seperately.
We show that if $t_b\geq \tau$ then the left hand side of Equation~\ref{eq:final} is uppe bounded by $\beta$, and for the second case we show that it is upper bounded by $\delta_\tau$. 
We have 
 $$\sum_{\mathclap{\substack{p \in \mathcal{T}_{\tau}\cap \mathcal{Q}_{\bu, \a}}}} \lps_{\a, p}  \; +\; \sum_{\mathclap{\substack{p \in \mathcal{L}_{\tau} \cap \mathcal{Q'}_{\bu, \a}}}} \lps_{\a,p}/k \leq \sum_{\mathclap{\substack{p\in \mathcal{Q}_{\bu, \a}}}} \lps_{\a, p}   + \sum_{\mathclap{\substack{p\in \mathcal{Q'}_{\bu, \a}}}} \lps_{\a,p}/k = \sum_{\mathclap{r\leq \bid{\a}{\bu}}}\;\;\;\;\, \sum_{\mathclap{p \in \mathcal{Q}_{\bu, r, \a}}} \lps_{\a, p} +  \sum_{\mathclap{r\leq \bid{\a}{\bu}}} \;\;\;\;\;\sum_{\mathclap{p \in \mathcal{Q'}_{\bu, r, \a}}} \lps_{\a, p}/k.$$
 The right hand side is due to the fact that any sub-profile in  $\mathcal{Q}_{\bu, \a}$ or $\mathcal{Q'}_{\bu, \a}$ is a valid profile of auction $\a$ which implies $\mathcal{Q}_{\bu, \a} = \bigcup_{r \leq \bid{\a}{\bu}}  \mathcal{Q'}_{\bu, r, \a}$ and $\mathcal{Q}_{\bu, \a} = \bigcup_{r \leq \bid{\a}{\bu}}  \mathcal{Q'}_{\bu, r, \a}$. 
  Moreover, based on the first three constraints of the LP for any $r\leq \bid{\a}{\bu}$ we have $$y_{\bu, r, \a} = \sum_{\mathclap{p \in \mathcal{Q}_{\bu, r, \a}}} \lps_{\a, p}, \;\;\;\; y'_{\bu, r, \a} = \sum_{\mathclap{p \in \mathcal{Q'}_{\bu, r, \a}}} \lps_{\a, p, \a}/k,\text{and} \;\;\;\; y'_{\bu, r, \a} + y_{\bu, r, \a} \leq x_{\bu, r}.$$ 
 which implies 
\begin{equation} \label{16576}\sum_{\mathclap{r\leq \bid{\a}{\bu}}}\;\;\;\;\; \sum_{\mathclap{p \in \mathcal{Q}_{\bu, r, \a}}} \lps_{\a, p} +  \sum_{\mathclap{r\leq \bid{\a}{\bu}}} \;\;\;\;\;\sum_{\mathclap{p \in \mathcal{Q'}_{\bu, r, \a}}} \lps_{\a, p}/k \leq \sum_{\mathclap{r\leq \bid{\a}{\bu}}} y_{\bu, r, \a} +  \sum_{\mathclap{r\leq \bid{\a}{\bu}}} y'_{\bu, r, \a}\leq \sum_{\mathclap{r\leq \bid{\a}{\bu}}} x_{\bu, r}.\end{equation}
Moreover, based on the construction of $\b r'$ we have \begin{equation} \label{eq:1567}(1-\boost)\Pr\left[\bid{\a}{\bu} \geq r'_\bu \geq \tau\right] = (1-\boost) \sum_{\mathclap{r\in [\tau, \bid{\a}{\bu}]}} f'_{\bu, r} =  (1-\beta) \sum_{\mathclap{\substack{r : r\in [\tau, \bid{\a}{\bu}],\\ r\geq t_\bu}}} x_{\bu, r}/(1-\boost).\end{equation}
If $t_\bu \geq \tau$, combining Equation~\ref{16576} and Equation~\ref{eq:1567}, gives us  
\begin{equation}\nonumber \sum_{\mathclap{\substack{p \in \mathcal{T}_{\tau} \cap \mathcal{Q}_{\bu, \a}}}} \lps_{\a, p}  \, + \, \sum_{\mathclap{\substack{p \in \mathcal{L}_{\tau} \cap \mathcal{Q'}_{\bu, \a}}}} \lps_{\a,p}/k - (1-\boost)\Pr\left[\bid{\a}{\bu} \geq r'_\bu \geq \tau\right] 
 \leq\sum_{\mathclap{r\leq \bid{\a}{\bu}}} x_{\bu, r} - \sum_{\mathclap{\substack{r\in \left[t_\bu, \bid{\a}{\bu}\right]}}} x_{\bu, r} =  \sum_{\mathclap{\substack{r < t_\bu}}} x_{\bu, r}.\end{equation}
 By definition of $t_{\bu}$, we have $\sum_{r < t_\bu} x_{\bu, r} = \beta$, therefore our proof for the case of $t_\bu \geq \tau$ is completed and in the rest of the proof we assume $t_\bu < \tau$.
 Recall definition $$\mathcal{J}^{-}_{\tau} := \{p=(\bu, \bu', r, r')\in \mathcal{T}_{\tau} | \,\bid{\a}{\bu'} < \tau \text{ and } r < t_{\bu}\}. $$ It is easy to verify that if $t_\bu < \tau$ then $\mathcal{J}^-_{\tau} \cap \mathcal{Q}_{\bu, \a}= \emptyset$. Since $\mathcal{T}_{\tau}$ is partitioned to disjoint sets of $\mathcal{J}^+_{\tau}$, $\mathcal{J}^-_{\tau}$, and $\mathcal{L}_{\tau}$ we obtain.
 $$\sum_{\mathclap{\substack{p \in \mathcal{T}_{\tau}\cap \mathcal{Q}_{\bu}}}} \lps_{\a, p} = \sum_{\mathclap{\substack{p \in \mathcal{J}^+_{\tau}\cap \mathcal{Q}_{\bu}}}} \lps_{\a, p} + \sum_{\mathclap{\substack{p \in \mathcal{L}_{\tau}\cap \mathcal{Q}_{\bu}}}} \lps_{\a, p}.$$ 
In addition, by Claim~\ref{claim:1935}, we have  $$\left(1-\boost\right)\Pr\left[\bid{\a}{\bu} \geq r'_\bu \geq \tau\right]\geq \sum_{\mathclap{\substack{p \in \mathcal{J}^+_{\tau}\cap \mathcal{Q}_{\bu}}}} \lps_{\a, p},$$ 
which gives us $$\sum_{\mathclap{\substack{p \in \mathcal{T}_{\tau}\cap \mathcal{Q}_{\bu}}}} \lps_{\a, p}  - (1-\boost) \Pr\left[\bid{\a}{\bu} \geq r'_\bu \geq \tau\right]
 \leq  \sum_{\mathclap{\substack{p \in \mathcal{L}_{\tau}\cap \mathcal{Q}_{\bu}}}} \lps_{\a, p}.$$
The proof is then completed using Claim~\ref{claim:1788} that shows $$\sum_{\mathclap{\substack{p \in \mathcal{L}_{\tau}\cap \mathcal{Q}_{\bu}}}} \lps_{\a, p} +\, \sum_{\mathclap{\substack{p \in \mathcal{L}_{\tau}\cap \mathcal{Q'}_{\bu}}}} \lps_{\a, p}/k \leq \delta_\tau$$ for any buyer $\bu$.
\end{proof}

\begin{claim}\label{claim:qb-lower} For any buyer $\bu$ with $\bid{\a}{\bu} \geq \tau$ we have 
	$$ \E[q_{\bu}] \geq \min\left(\frac{1}{\beta} \left(\;\;\;\, \sum_{\mathclap{\substack{p \in \mathcal{T}_{\tau}\cap \mathcal{Q}_{\bu}}}} \lps_{\a, p}   + \sum_{\substack{p \in \mathcal{L}_{\tau} \cap \mathcal{Q'}_{\bu}}} \lps_{\a,p}/k - (1-\boost)\Pr\left[\bid{\a}{\bu} \geq r'_\bu \geq \tau\right]\right)  , 1\right)$$
\end{claim}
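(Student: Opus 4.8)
The plan is to reduce the claim to a case analysis on whether $t_\bu \le \bid{\a}{\bu}$ or $t_\bu > \bid{\a}{\bu}$, handling each case with facts already in hand. First I would write $\E[q_\bu]$ explicitly: since $q_\bu = 1$ iff $r_\bu \le \bid{\a}{\bu}$, and by the construction of $\b r$ (together with the simplifying remark, $\sum_{r < t_\bu} x_{\bu,r} = \beta$) we have $\Pr[r_\bu = r] = x^{\star}_{\bu,r}/\beta$ for $r < t_\bu$ and $\Pr[r_\bu = r] = 0$ for $r \ge t_\bu$, it follows that
$$\E[q_\bu] \;=\; \Pr[r_\bu \le \bid{\a}{\bu}] \;=\; \frac{1}{\beta}\sum_{\substack{r \le \bid{\a}{\bu}\\ r < t_\bu}} x^{\star}_{\bu,r}.$$

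In the case $t_\bu \le \bid{\a}{\bu}$ the index set above is all of $\{r < t_\bu\}$, so the sum equals $\beta$ and $\E[q_\bu] = 1$; since the right-hand side of the claim is a minimum involving $1$, it is at most $1$ and the inequality holds trivially (this is essentially Claim~\ref{lemma:hihi}). The substantive case is $t_\bu > \bid{\a}{\bu}$. Here I would first note that $r'_\bu \ge t_\bu > \bid{\a}{\bu}$ with probability one, so $\Pr[\bid{\a}{\bu} \ge r'_\bu \ge \tau] = 0$ and the last term inside the minimum disappears. Also $\{r \le \bid{\a}{\bu}\} \subseteq \{r < t_\bu\}$, so the displayed formula simplifies to $\E[q_\bu] = \tfrac{1}{\beta}\sum_{r \le \bid{\a}{\bu}} x^{\star}_{\bu,r}$. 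It therefore suffices to prove
$$\sum_{r \le \bid{\a}{\bu}} x^{\star}_{\bu,r} \;\ge\; \sum_{p \in \mathcal{T}_\tau \cap \mathcal{Q}_{\bu}} \lps_{\a,p} \;+\; \sum_{p \in \mathcal{L}_\tau \cap \mathcal{Q'}_{\bu}} \lps_{\a,p}/k,$$
since dividing by $\beta$ then gives $\E[q_\bu] \ge \tfrac1\beta(\cdots) \ge \min\!\big(\tfrac1\beta(\cdots),1\big)$, the $\Pr[\cdot]$ term being zero.

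To establish the last displayed inequality I would reuse the LP-bookkeeping from the proof of Claim~\ref{claim:beta-delta}: every valid sub-profile in $\mathcal{Q}_{\bu}$ (resp.\ $\mathcal{Q'}_{\bu}$) has the winner's (resp.\ supporting buyer's) reserve at most $\bid{\a}{\bu}$, so $\mathcal{Q}_{\bu} = \bigcup_{r \le \bid{\a}{\bu}} \mathcal{Q}_{\bu,r,\a}$ and $\mathcal{Q'}_{\bu} = \bigcup_{r \le \bid{\a}{\bu}} \mathcal{Q'}_{\bu,r,\a}$; applying LP constraints (1)--(3) for each such $r$ and summing yields $\sum_{p \in \mathcal{Q}_{\bu}} \lps_{\a,p} + \sum_{p \in \mathcal{Q'}_{\bu}} \lps_{\a,p}/k \le \sum_{r \le \bid{\a}{\bu}} x^{\star}_{\bu,r}$; passing to the subsets $\mathcal{T}_\tau \cap \mathcal{Q}_{\bu} \subseteq \mathcal{Q}_{\bu}$ and $\mathcal{L}_\tau \cap \mathcal{Q'}_{\bu} \subseteq \mathcal{Q'}_{\bu}$ (and using $\lps \ge 0$) gives the bound.

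The argument is essentially routine rather than having a genuine obstacle; the one point that requires care is recognizing that in the case $t_\bu > \bid{\a}{\bu}$ the term $(1-\beta)\Pr[\bid{\a}{\bu} \ge r'_\bu \ge \tau]$ vanishes identically — this is exactly what makes the two-case split line up with the stated minimum, so that the trivial-looking $\min(\cdot,1)$ is not lost in the first case and the vanishing term is harmless in the second.
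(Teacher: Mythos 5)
Your proof is correct and follows essentially the same route as the paper: the same case split on whether $t_\bu \le \bid{\a}{\bu}$, the same reduction to $\E[q_\bu]=1$ in the first case (which is exactly Claim~\ref{lemma:hihi}), and in the second case the same chain through LP constraints (1)--(3), the decomposition $\mathcal{Q}_{\bu} = \bigcup_{r\le \bid{\a}{\bu}} \mathcal{Q}_{\bu,r,\a}$, and monotonicity under passing to $\mathcal{T}_\tau\cap\mathcal{Q}_\bu$ and $\mathcal{L}_\tau\cap\mathcal{Q'}_\bu$. The only (minor, but nice) refinement is your explicit observation that $\Pr[\bid{\a}{\bu}\ge r'_\bu\ge\tau]=0$ when $t_\bu>\bid{\a}{\bu}$; the paper instead just proves the stronger bound without the negative term, which also suffices since that term is nonnegative.
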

\begin{proof}
We provide two different proofs for cases of $t_b \leq  \bid{\a}{\bu} $ and $t_b > \bid{\a}{\bu}$. We claim that in the first case, we have $\E[q_b] =1$ and in the second case, 	$$ \E[q_{\bu}] \geq \frac{1}{\beta} \left( \;\;\;\, \sum_{\mathclap{\substack{p \in \mathcal{T}_{\tau}\cap \mathcal{Q}_{\bu}}}} \lps_{\a, p}  + \sum_{\substack{p \in \mathcal{L}_{\tau} \cap \mathcal{Q'}_{\bu}}} \lps_{\a,p}/k - \Pr\left[\bid{\a}{\bu} \geq r_b'\geq \tau\right]\right).$$ 
By construction, for vector of reserve prices $\b r$ and any buyer $\bu$ we have $$\E[q_{\bu}] = \Pr\big[r_{\bu} \leq \bid{\a}{\bu}\big]  =  \sum_{\mathclap{r \leq \bid{\a}{\bu}}} f_{\bu, r},$$ 
where $f_{\bu, r} =x_{b,r}/\beta$ for any  $r < t_{\bu}$  and $f_{\bu, r} = 0$ for any $r\geq t_b$ as defined in the algorithm. Note that $t_b$ is chosen in a way that $\sum_{r < t_{\bu}} x_{b, r} = \boost$. Thus, if $t_b \leq \bid{\a}{\bu}$ we get
$$\E[q_{\bu}] =  \sum_{\mathclap{r \leq \bid{\a}{\bu}}} f_{\bu, r} = \sum_{\mathclap{ \substack{r: r \leq \bid{\a}{\bu}, \\ r < t_\bu}}} x_{\bu, r}/\beta = \sum_{\mathclap{ \substack{ r < t_\bu}}} x_{\bu, r}/\beta = 1.$$ This completes the proof for the first case. Therefore, in the rest of the proof we focus on the case of $t_b > \bid{\a}{\bu}$. This gives us
$$\E[q_{\bu}] =  \sum_{\mathclap{r \leq \bid{\a}{\bu}}} f_{\bu, r} = \sum_{\mathclap{ \substack{r:r \leq \bid{\a}{\bu}, \\ r < t_\bu}}} x_{\bu, r}/\beta = \sum_{\mathclap{ \substack{ r < \bid{\a}{\bu}}}} x_{\bu, r}/\beta \geq  \sum_{\mathclap{ \substack{ r < \bid{\a}{\bu}}}} \left(y_{\bu, r}+ y'_{\bu, r}\right)/\beta,$$
where the right hand side is by constraint 
$y_{\bu, r}+ y'_{\bu, r} \leq x_{\bu, r}$ in the LP. Further by the first two constraints of the LP we obtain $$\sum_{\mathclap{ \substack{ r < \bid{\a}{\bu}}}} \left(y_{\bu, r}+ y'_{\bu, r}\right) = \sum_{\mathclap{r < \bid{\a}{\bu}}}\; \left(\;\;\;\, \sum_{\mathclap{\substack{p\in \mathcal{Q}_{\bu, r, \a}}}} s_{\a, p}+ \sum_{\mathclap{\substack{p\in \mathcal{Q'}_{\bu, r, \a}}}} s_{\a, p}/k\right).$$
Note that we can drop the constraint $r < \bid{\a}{\bu}$ from the right hand side of the equation since by definition of valid sub-profiles it holds for any $p$ in $\mathcal{Q}_{\bu, \a}$ or $\mathcal{Q'}_{\bu, \a}$. This gives us $$\sum_{r < \bid{\a}{\bu}}  \left(\;\;\;\, \sum_{\mathclap{\substack{p\in \mathcal{Q}_{\bu, r, \a}}}} s_{\a, p}+ \sum_{\mathclap{\substack{p\in \mathcal{Q'}_{\bu, r, \a}}}} s_{\a, p}/k\right) = \sum_{\mathclap{\substack{p\in \mathcal{Q}_{\bu, \a}}}} s_{\a, p}+ \sum_{\mathclap{\substack{p\in \mathcal{Q'}_{\bu, \a}}}} s_{\a, p}/k \geq 
\sum_{\mathclap{\substack{p \in \mathcal{T}_{\tau}\cap \mathcal{Q}_{\bu, \a}}}} \lps_{\a, p} \, + \; \sum_{\mathclap{\substack{p \in \mathcal{L}_{\tau} \cap \mathcal{Q'}_{\bu, \a}}}} \lps_{\a,p}/k,$$ which completes our proof.
\end{proof}

\noindent \textbf{Lemma~\ref{lemma:FS}.} (restated) \claimFS

\begin{proof}
Recall that by definition $$F(\b \lps, \tau) = \sum_{\mathclap{\substack{p\in \mathcal{T}_{\a, \tau}}}} \lps_{\a, p}  - (1-\beta)\E[\winners{\a}{\b r'}{\tau}].$$ Moreover, by Assumption~\ref{assumption:1},$$\E\left[\winners{ \a}{\b r'}{\tau}\right] = \sum_{\bu\in \B} \Pr\left[\bid{\a}{\bu}\geq r'_b \geq \tau\right].$$
In addition based on Claim~\ref{claim:1935}, we know 
$$(1-\boost)\sum_{\bu\in \B} \Pr\left[\bid{\a}{\bu}\geq r'_b \geq \tau\right] \geq \sum_{ \mathclap{\substack{p\in \mathcal{J}^+_\tau }}} \lps_{\a, p}.$$ 
Considering that $\mathcal{T}_{\tau}$ is partitioned to three disjoint sets of $\mathcal{J}^+_{\tau},  \mathcal{J}^-_{\tau},$ and $\mathcal{L}_{\tau}$, by putting the mentioned equations together, we get
$$F(\b \lps, \tau)  = \sum_{\mathclap{\substack{p\in \mathcal{T}_{\a, \tau}}}} \lps_{\a, p}  - (1-\beta)\E[\winners{\a}{\b r'}{\tau}]\leq \sum_{\mathclap{\substack{p\in \mathcal{T}_{\a, \tau}}}} \lps_{\a, p}  - \sum_{ \mathclap{\substack{p\in \mathcal{J}^+_\tau }}} \lps_{\a, p}  \leq \sum_{ \mathclap{\substack{p\in \mathcal{J}^-_\tau }}} \lps_{\a, p} + \sum_{ \mathclap{\substack{p\in \mathcal{L}_\tau }}} \lps_{\a, p}.
$$\end{proof}

\noindent \textbf{Lemma~\ref{lem:expQ1}}. (restated) \lemexpQQQ

\begin{proof}
By construction of vector of reserve prices $\b r$ and $\b r'$, for any buyer $\bu$ we have $$\sum_{\mathclap{r\leq \bid{\a}{\bu}}} x_{r, \bu} = \boost \Pr[r\leq \bid{\a}{\bu}] + (1-\boost)\Pr[r\leq \bid{\a}{\bu}].$$ Moreover, by the third constraint of the LP for any  buyer $\bu$ and reserve price $r\in \res$, we have 
$y_{\bu, r, \a} + y'_{\bu, r, \a} \leq x_{\bu, r}$, which implies 
$$\sum_{\mathclap{ r\leq \bid{\a}{\bu}}} \left(y_{\bu, r, \a} + y'_{\bu, r, \a}\right) = \boost \Pr[r\leq \bid{\a}{\bu}] + (1-\boost)\Pr[r\leq \bid{\a}{\bu}].$$ 
Combining this with the first two constraints of the LP, $y_{\bu, r, \a}= \sum_{p\in \mathcal{Q}_{\bu ,r{,\a}}} s_{\a,p} $ and $ y'_{\bu, r, \a}=\sum_{p\in \mathcal{Q'}_{\bu ,r{,\a}}} s_{\a,p}$ we get
$$\sum_{\mathclap{\substack{r\leq \bid{\a}{\bu}}}}\; \left(\;\;\;\sum_{\mathclap{p\in \mathcal{Q}_{\bu ,r{,\a}}}} \lps_{\a,p} + \sum_{\mathclap{p\in \mathcal{Q'}_{\bu ,r{,\a}}}} \lps_{\a,p}/k\right) = 
\boost \Pr[r\leq \bid{\a}{\bu}] + (1-\boost)\Pr[r\leq \bid{\a}{\bu}].$$ By definition of $\mathcal{Q}_{\bu, r}$ and $\mathcal{Q'}_{\bu, \a}$ we can write  
$$\sum_{\mathclap{p\in \mathcal{Q}_{\bu {,\a}}}} \lps_{\a,p} + \sum_{\mathclap{p\in \mathcal{Q'}_{\bu {,\a}}}} \lps_{\a,p}/k = 
\boost \Pr[r\leq \bid{\a}{\bu}] + (1-\boost)\Pr[r\leq \bid{\a}{\bu}]$$ and
$$\sum_{\mathclap{p\in \mathcal{Q}_{\bu {,\a}}\cap \mathcal{L}_\tau}} \lps_{\a,p} \;+\; \sum_{\mathclap{p\in \mathcal{Q'}_{\bu {,\a}}\cap \mathcal{L}_\tau}} \lps_{\a,p}/k \leq
\boost \Pr[r\leq \bid{\a}{\bu}] \,+\, (1-\boost)\Pr[r\leq \bid{\a}{\bu}].$$
Further, by Claim~\ref{claim:1788}, for any buyer $\bu$ we have $$\sum_{\mathclap{\substack{p \in \mathcal{L}_{\tau}\cap \mathcal{Q}_{\bu}}}} \lps_{\a, p}+ \sum_{\mathclap{\substack{p \in \mathcal{L}_{\tau}\cap \mathcal{Q'}_{\bu}}}} \lps_{\a, p}/k \leq \delta_\tau.$$ 
 which means 
 $$ \sum_{\bu\in \B} \left(\;\;\;\;\; \sum_{\mathclap{p\in \mathcal{Q}_{\bu {,\a}}\cap \mathcal{L}_\tau}} s_{\a,p} \;+\; \sum_{\mathclap{p\in \mathcal{Q'}_{\bu {,\a}}\cap \mathcal{L}_\tau}} s'_{\a,p}/k\right) - \delta_{\tau}|\B_1| \leq \sum_{\bu\in \B \backslash \B_1}
\boost \Pr[r\leq \bid{\a}{\bu}] + (1-\boost)\Pr[r\leq \bid{\a}{\bu}].$$
Further, by \ref{lemma:hihi}, we know $\Pr[r\leq \bid{\a}{\bu}]=0$ holds for any $\bu \notin \B_1$. This implies
 $$ \sum_{\bu\in \B} \left( \;\;\;\;\; \sum_{\mathclap{p\in \mathcal{Q}_{\bu {,\a}}\cap \mathcal{L}_\tau}} s_{\a,p} \;+\; \sum_{\mathclap{p\in \mathcal{Q'}_{\bu {,\a}}\cap \mathcal{L}_\tau}} s'_{\a,p}/k\right) - \delta_{\tau}|\B_1| \leq \sum_{\bu\in \B\backslash \B_1}
\boost \Pr[r\leq \bid{\a}{\bu}] =  \sum_{\bu\in \B\backslash \B_1}
\boost \E[q_\bu].$$
To complete the proof, recall that we have defined $\delta_\tau = \sum_{b\in \B_\tau}\sum_{p\in \mathcal{Q'}_{\bu {,\a}}\cap \mathcal{L}_\tau} \lps_{\a,p}/k$, and by Claim~\ref{claim:eq1} we have 
$$  \sum_{\bu \in \B_\tau} \;\;\; \;\,\sum_{\mathclap{p\in \mathcal{Q}_{\bu {,\a}}\cap \mathcal{L}_\tau}} \lps_{\a,p} = \sum_{\substack{p \in \mathcal{L}_{\tau}}} \lps_{\a,p} = k\delta_{\tau}.$$ This implies 
$$\sum_{\mathclap{p\in \mathcal{Q}_{\bu {,\a}}\cap \mathcal{L}_\tau}} \lps_{\a,p} \;+\; \sum_{\mathclap{p\in \mathcal{Q'}_{\bu {,\a}}\cap \mathcal{L}_\tau}} \lps_{\a,p}/k = (k+1)\delta_\tau,$$
 $$ (k+1 - |\B_1|)\delta_\tau \leq  \sum_{\bu\in \B\backslash \B_1}
\boost \E[q_\bu] = \boost (\E[Q_\tau] - |\B_1|),$$  
 $$ (k+1 - |\B_1|)\delta_\tau/\boost \leq \left(\E[Q_\tau] - |\B_1|\right),$$
and concludes the proof.
\end{proof}

\noindent \textbf{Lemma~\ref{lem:expQ2}.} (restated) \lemexpQ{}

\begin{proof}	
Recall that by definition $$F(\b \lps, \tau) = \sum_{\mathclap{\substack{p\in \mathcal{T}_{\a, \tau}}}} \lps_{\a, p}  - (1-\beta)\E[\winners{\a}{\b r'}{\tau}].$$ 
Combining this with Asspmtion~\ref{assumption:1} gives us
  $$F(\lps, \tau) =  \sum_{\bu \in \B} \left(\;\;\;\, \sum_{\mathclap{\substack{p \in \mathcal{T}_{\tau}\cap \mathcal{Q}_{\bu}}}} \lps_{\a, p}   - (1-\boost)\Pr[\bid{\a}{\bu} \geq r'_\bu \geq \tau]\right),$$ which results in 
\begin{equation}\label{eq:something2}\frac{F(\lps, \tau) + \delta}{\beta} = \frac{1}{\beta}  \sum_{\bu \in \B} \left(\;\;\;\, \sum_{\mathclap{\substack{p \in \mathcal{T}_{\tau}\cap \mathcal{Q}_{\bu}}}} \lps_{\a, p}   + \sum_{\mathclap{\substack{p \in \mathcal{L}_{\tau} \cap \mathcal{Q'}_{\bu}}}} \lps_{\a,p}/k - (1-\boost)\Pr[\bid{\a}{\bu} \geq r'_\bu \geq \tau]\right).\end{equation}
Moreover, by Claim~\ref{claim:qb-lower}, for any buyer $\bu$, we have 
	$$ \E[q_{\bu}] \geq \min\left(\frac{1}{\beta} \;\left( \;\;\;\, \sum_{\mathclap{\substack{p \in \mathcal{T}_{\tau}\cap \mathcal{Q}_{\bu}}}} \lps_{\a, p}   + \sum_{\mathclap{\substack{p \in \mathcal{L}_{\tau} \cap \mathcal{Q'}_{\bu}}}} \lps_{\a,p}/k - (1-\boost)\Pr[\bid{\a}{\bu} \geq r'_\bu \geq \tau]\right)  , 1\right),$$ and by Claim~\ref{claim:beta-delta} for any buyer $\bu$, we have
$$\sum_{\mathclap{\substack{p \in \mathcal{T}_{\tau} \cap \mathcal{Q}_{\bu}}}} \lps_{\a, p}   + \sum_{\mathclap{\substack{p \in \mathcal{L}_{\tau} \cap \mathcal{Q'}_{\bu}}}} \lps_{\a,p}/k - (1-\boost)\Pr[\bid{\a}{\bu} \geq r'_\bu \geq \tau] 
 \leq \max(\beta, \delta_\tau).$$
 This implies that if $\delta_\tau\leq \beta$, then for any buyer $\bu$ we have $$ \E[q_{\bu}] \geq \frac{1}{\beta} \left(\;\;\;\, \sum_{\mathclap{\substack{p \in \mathcal{T}_{\tau}\cap \mathcal{Q}_{\bu}}}} \lps_{\a, p}   + \sum_{\mathclap{\substack{p \in \mathcal{L}_{\tau} \cap \mathcal{Q'}_{\bu}}}} \lps_{\a,p}/k - (1-\boost)\Pr[\bid{\a}{\bu} \geq r'_\bu \geq \tau]\right),$$ and $$\E[Q_\tau] \geq \frac{F(\lps, \tau) + \delta}{\beta}. $$
 This completes the proof for the case of $\delta_\tau\leq \beta$; therefore, in the rest of the proof we assume $\delta_\tau > \beta$ which means for any buyer $\bu$, \begin{equation} \label{eq:something1}\sum_{\mathclap{\substack{p \in \mathcal{T}_{\tau}\cap \mathcal{Q}_{\bu}}}} \lps_{\a, p}   + \sum_{\mathclap{\substack{p \in \mathcal{L}_{\tau} \cap \mathcal{Q'}_{\bu}}}} \lps_{\a,p}/k - (1-\boost)\Pr[\bid{\a}{\bu} \geq r'_\bu \geq \tau] 
 \leq \delta_\tau.\end{equation}
 Note that for any $\bu\in \B/\B_1$ we have $q_\bu < 1$ which means $$\E[q_\bu] = \frac{1}{\beta} \left(\;\;\;\, \sum_{\mathclap{\substack{p \in \mathcal{T}_{\tau}\cap \mathcal{Q}_{\bu}}}} \lps_{\a, p}   + \sum_{\mathclap{\substack{p \in \mathcal{L}_{\tau} \cap \mathcal{Q'}_{\bu}}}} \lps_{\a,p}/k - (1-\boost)\Pr[\bid{\a}{\bu} \geq r'_\bu \geq \tau]\right).$$
 Combining this with Equation~\ref{eq:something2}, we obtain $$\E[Q_\tau - m] = \sum_{\mathclap{\bu \in \B/\B_1}}\E[q_t] = \frac{F(\lps, \tau) + \delta}{\boost} - \frac{1}{\beta}\sum_{\mathclap{\bu \in \B_1}}\;  \left(\;\;\;\, \sum_{\mathclap{\substack{p \in \mathcal{T}_{\tau}\cap \mathcal{Q}_{\bu}}}} \lps_{\a, p}   + \sum_{\mathclap{\substack{p \in \mathcal{L}_{\tau} \cap \mathcal{Q'}_{\bu}}}}  \lps_{\a,p}/k - (1-\boost)\Pr[\bid{\a}{\bu} \geq r'_\bu \geq \tau]\right).$$ Using Equation~\ref{eq:something1}, we get the following which completes the proof
 $$\E[Q_\tau - m] = \sum_{\bu \in \B/\B_1}\E[q_t] \geq \frac{F(\lps, \tau) + \delta_\tau}{\boost} - \frac{m \delta_\tau}{\boost} = \frac{F(\lps, \tau) - m\delta_\tau}{\beta} + \delta_\tau/\beta.$$
\end{proof}

\section{Proof of Lemma~\ref{lemma:LP-upper}}\label{proof:lemma-lp-upper}
\noindent \textbf{Lemma~\ref{lemma:LP-upper}}. (restated)
\lemmaLPOpt
\begin{proof}
 Consider $\opt$, an optimal solution of the problem. To prove this claim, it suffices to construct vectors $\b \opts$, $\b x^o$, and $\b y^o$ in a way that setting  $\b s = \b \opts$, $\b y= \b y^o$ and $\b x= \b x^o$ in the LP satisfies all the LP constraints and that the objective function of the LP equals to the revenue of \opt, or in the other words 
\begin{equation}\label{eq:hejrbfer} \rev{}{\opt} = \max_{\b x, \b s} \sum_{\a\in \A}\sum_{p\in \mathcal{P}_{\a}} s^o_{\a,p} \cdot \rev{\a}{p}.\end{equation}
Roughly speaking, we construct $\b \opts$ to be the representation of $\opt$ in the profile space. For any sub-profile $p = (\bu_1, \bu_2, r_1, r_2)$ we have $\opts_{a, p} = 1$ iff using $\opt$, in auction $\a$ buyer $\bu_1$ is one of the winners, buyer $\bu_2$ is the supporting buyer, and their reserve prices are respectively $r_1$ and $r_2$. 

We first show that Equation~\ref{eq:hejrbfer} holds for $\b s^o$. Let $\b r$ denote the vector of reserve prices in \opt. For any auction $\a\in \A$, let $\bu_{s, \a}$ be the supporting buyer, and let $\B_\a$ denote the set of winners in auction $\a$ using vector of reserve prices $\b r$.  Payment of any winner $\bu \in \B_\a$ in auction $\a$ is $\max(r_{\bu}, \bid{\a}{\bu_{s, \a}}),$ which means revenue obtained from auction $\a$ using the vector of reserve prices $\b r$ is $\sum_{\bu \in \B_\a} \max(r_{\bu}, \bid{\a}{\bu_{s, \a}}).$ To prove Equation~\ref{eq:hejrbfer}, it suffices to show that for any auction $\a$ we have $$\sum_{\bu \in \B_\a} \max(r_{\bu}, \bid{\a}{\bu_{s, \a}}) = \sum_{p\in \mathcal{P}_\a} \rev{\a}{p}.$$
Note that for any profile $p=(\bu, r'_1, \bu_2, r'_2)\in \mathcal{P}_\a$ we have $s_{\a, p} =1 $ iff $\bu\in \B_{\a}$, $\bu_2 = \bu_{s, \a}$, $r_{\bu} =r'_1$ and $r_{\bu_2} =r'_2$. Moreover, by defintion, we have $\rev{\a}{p}= \max(r'_1, \bid{\a}{\bu_2}) = \max(r_{\bu}, \bid{\a}{\bu_{s, \a}})$. This implies that $$\sum_{p\in \mathcal{P}_\a} \rev{\a}{p} = \max(r_{\bu}, \bid{\a}{\bu_{s, \a}}),$$ which results in Equation~\ref{eq:hejrbfer}.

To complete the proof we construct $\b x^o$, $\b y^o$, and $\b y'^o$ in a way that setting $\b x= \b x^o$,  $\b y'= \b y'^o$, $\b y= \b y'^o$ and $\b s = \b \opts$, satisfied all the constraints of the LP. \begin{itemize}
 \item 	For any buyer $\bu\in \B$ and $r\in \res$ we set $x^o_{\bu, r} = 1$ iff reserve price $r$ is assigned to buyer $\bu$ in $\opt$ and set $x^o_{\bu, r} = 0$ otherwise. 
 \item For any buyer $\bu\in \B$, auction $\a\in \A$, and reserve price $r\in \res$ we set $y^o_{\bu, r, \a}=1$ iff using solution $\opt$, buyer $\bu$ is a winner in auction $\a$ and he is assigned a reserve price $r$. Otherwise we set $y^o_{\bu, r, \a}=0$.
 \item  For any buyer $\bu\in \B$, auction $\a\in \A$, and reserve price $r\in \res$ we set $y'^o_{\bu, r, \a}=1$ if using solution $\opt$, buyer $\bu$ is the supporting buyer in auction $\a$ and he is assigned a reserve price $r$. Otherwise we set $y'^o_{\bu, r, \a}=0$.
 \end{itemize}
We now start investigating the constraints of the LP one by one and verify that all seven constraints hold for  $\b x= \b x^o$,  $\b y'= \b y'^o$, $\b y= \b y'^o$ and $\b s = \b \opts$. 
\begin{enumerate}
	\item For the first constraint we need to show $y^o_{\bu, r, \a} = \sum_{p\in \mathcal{Q}_{\bu ,r{,\a}}} s^o_{\a,p}$ for any $\bu\in \B, r\in \res$, and $\a\in \A$. It is easy to see that if $\bu$ is not a winner in auction $\a$ then both sides of this equation are equal to zero. To complete the proof we assume that $\bu$ is a winner. Let $\bu_2$ denote the supporting buyer in auction $\a$. Moreover, let $r_1$ and $r_2$ respectively denote the reserve prices assigned to buyers $\bu$ and $\bu_2$ in $\opt$. We have $y^o_{\bu, r, \a} = 1$ iff $r=r_1$ and $\bu$ is a winner in auction $\a$. We also show that $\sum_{p\in \mathcal{Q}_{\bu ,r{,\a}}} s^o_{\a,p} =1$ iff $r=r_1$. Note that $p = (\bu, r_1, \bu_2, r_2)$ is the only sub-profiles in $\mathcal{S}_\a$ with $s^o_{\a, p}=1$. Further, by definition $\mathcal{Q}_{\bu ,r{,\a}}$ is a subset of $\mathcal{S}_\a$ and contains  the sub-profiles in which buyer $\bu$ is a winner in auction $\a$, thus it contains $p$ iff $r= r_1$. This means that both sides of the equation are equal to one if $r=r_1$ and both are equal to zero otherwise.
\item For the second constraint, we need to show that $y'^o_{\bu, r, \a} = \sum_{p\in \mathcal{Q'}_{\bu ,r{,\a}}} s^o_{\a,p}/k$ holds for any $\bu\in \B, r\in \res$, and $\a\in \A$. 
Let $\bu_1$ be the supporting buyer in auction $\a$ and let $r_1$ be the reserve price assigned to this buyer.
By definition we have $y'^o_{\bu, r, \a}=1$ iff $\bu =\bu_1$ and $r=r_1$, and we have $y'^o_{\bu, r, \a}=1$ otherwise.  Therefore, for this constraint to be satisfied we need to show that $\sum_{p\in \mathcal{Q'}_{\bu ,r{,\a}}} s^o_{\a,p} = k$ holds iff $\bu =\bu_1$ and $r=r_1$, and we have $\sum_{p\in \mathcal{Q'}_{\bu ,r{,\a}}} s^o_{\a,p} =0$ otherwise.  Let $\bu_2$ denote one of the winners in auction $\a$ and let $r_2$ be the reserve price assigned to this buyer. By definition, for any profile $p=(\bu_2, r_2, \bu_1, r_1)\in \mathcal{S}_\a$ we have $s^o_{\a, p} = 1$ iff $\bu_2$ is a winner in auction $\a$ and $r_1$ is the reserve price assigned to him. Since we have assumed that each auction has exactly $k$ winners then we have $\sum_{p\in \mathcal{Q'}_{\bu ,r{,\a}}} s^o_{\a,p} = k$ iff $\bu =\bu_1$ and $r=r_2$, and we have $\sum_{p\in \mathcal{Q'}_{\bu ,r{,\a}}} s^o_{\a,p} = 0$ otherwise.
\item To prove that our constructed solution satisfies the third constraint of the LP, we need to show $y^o_{\bu, r, \a} + y'^o_{\bu, r, \a} \leq x^o_{\bu, r}$ for any $\bu\in \B$, any reserve price $r\in \res$ and any auction $\a$. Consider a buyer $\bu$ and let $r_1$ be the reserve price assigned to this buyer in solution $\opt$. For any $r\neq r_1$ both sides of the equation are obviously zero. However, for $r=r_1$ we have $x^o_{\bu, r} =1$. Observe that in any auction $\a$, we also have $y^o_{\bu, r, \a} + y'^o_{\bu, r, \a}\leq 1$ since $\bu$ cannot be both a winner and the supporting buyer in an auction.
\item For the fourth constraint, we need to show $\sum_{p\in \mathcal{Q}_{\bu_2, {\a}}  \cap \mathcal{Q'}_{\bu_1, {\a}}} s^o_{\a,p} \leq \sum_{r\in \res} y'^o_{\bu_1,r, \a}.$ For any sub-profile $p = (\bu_3, r_1, \bu_4, r_2)$ in $\mathcal{Q}_{\bu_2, {\a}}  \cap \mathcal{Q'}_{\bu_1, {\a}}$ we have $\bu_4 = \bu_2$ and $\bu_3 = \bu_1$. Moreover, we have $s_{\a, p} =1$ iff $r_1$ and $r_2$ are respectively the reserve prices assigned to buyers $\bu_1$ and $\bu_2$, buyer $\bu_1$ is a winner in auction $\a$ and buyer $\bu_2$ is the supporting buyer in this auction. This implies that the left hand side is equal to one iff $\bu_1$ and $\bu_2$ are respectively a winner and the supporting buyer in auction $\a$. Further, the right hand side is equal to one iff $\bu_1$ is the supporting buyer in auction $\a$. This concludes that the fourth constraint is satisfied for $\b s^o$ and $\b y'^o$.
\item The condition $\sum_{p \in \mathcal{P}_{\a}} s^o_{\a,p} \leq k$ is satisfied due to the assumption that we have $k$ winners in all auctions. Let $\bu'$ be the supporting buyer in auction $\a$. For a sub-profile $(\bu_1, r_1, \bu_2, r_2) \in \mathcal{P}_{\a}$ we have $s^o_{\bu, r} =1$ iff  $\bu_2=\bu'$, $\bu_1$ is a winner in auction $\a$, and $r_1$ and $r_2$ are respectively the reserve prices assigned to buyers $\bu_1$ and $\bu_2$. Given that we have exactly $k$ winner and that each buyer has a unique reserve price then, $\sum_{p \in \mathcal{P}_{\a}} s^o_{\a,p} \leq k$.
\item For this constraint, we need to show $\sum_{r\in \res} x^o_{\bu,r} = 1$. This is  true since for a reserve $r$ we have $x^o_{\bu,r} = 1$ if reserve $r$ is assigned to buyer $\bu$, and in $\opt$ there is exactly one reserve price assigned to each buyer.
\item The last constraint is simply satisfied since $s^o_{\a,p}$ is either zero or one.
\end{enumerate}
\end{proof}

\section{Useful Facts about Bernoulli Random Variables} \label{section:usefulstuff}
\begin{lemma}~\label{lemmamath1}
If $Y\sim \text{Binomial}(n, p)$, then for any $m\leq n$ we have $\Pr(Y\geq m) = G(p),$ where $$G(p) = \frac{n!}{(m-1)!(n-m)!} \int_{1-p}^1 t^{n-m} (1-t)^{m-1}dt.$$
\end{lemma}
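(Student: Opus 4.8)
The plan is to establish this classical identity — the tail of a binomial equals a (scaled) incomplete beta integral — by a one-variable calculus argument: show that both sides, viewed as functions of $p$ on $[0,1]$, have the same derivative and agree at $p=0$. Throughout I assume $1\le m\le n$, which is implicit since the right-hand side contains $(m-1)!$.

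First I would set $f(p):=\Pr(Y\ge m)=\sum_{k=m}^{n}\binom{n}{k}p^{k}(1-p)^{n-k}$ and differentiate term by term. Using the identities $k\binom{n}{k}=n\binom{n-1}{k-1}$ and $(n-k)\binom{n}{k}=n\binom{n-1}{k}$, the derivative of the $k$-th summand equals $a_k-a_{k+1}$, where $a_k:=n\binom{n-1}{k-1}p^{k-1}(1-p)^{n-k}$. Hence the sum telescopes: $f'(p)=a_m-a_{n+1}=a_m$, since $\binom{n-1}{n}=0$ forces $a_{n+1}$ to vanish. This yields $f'(p)=n\binom{n-1}{m-1}p^{m-1}(1-p)^{n-m}=\frac{n!}{(m-1)!\,(n-m)!}\,p^{m-1}(1-p)^{n-m}$.

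Next I would differentiate $G(p)=\frac{n!}{(m-1)!\,(n-m)!}\int_{1-p}^{1}t^{n-m}(1-t)^{m-1}\,dt$. Since only the lower limit depends on $p$, the fundamental theorem of calculus gives $\frac{d}{dp}\int_{1-p}^{1}h(t)\,dt=h(1-p)$, so $G'(p)=\frac{n!}{(m-1)!\,(n-m)!}(1-p)^{n-m}\bigl(1-(1-p)\bigr)^{m-1}=\frac{n!}{(m-1)!\,(n-m)!}\,p^{m-1}(1-p)^{n-m}$, which is exactly $f'(p)$. Finally, $f(0)=0$ because $m\ge 1$, and $G(0)=\frac{n!}{(m-1)!\,(n-m)!}\int_{1}^{1}(\cdots)\,dt=0$; two antiderivatives of the same function that agree at a point agree everywhere, so $f\equiv G$ on $[0,1]$.

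There is no genuine obstacle here; this is a standard fact (the binomial tail is a regularized incomplete beta function). The only point needing a word of care is the $k=n$ summand in the derivative computation, where the formal expression $(1-p)^{n-1-n}=(1-p)^{-1}$ appears — this is harmless because it is multiplied by $(n-k)=0$, so the summand contributes nothing (equivalently $a_{n+1}\equiv 0$). As a cross-check one may argue probabilistically instead: with $U_1,\dots,U_n$ i.i.d.\ uniform on $[0,1]$ and $Y=\#\{i:U_i\le p\}\sim\mathrm{Binomial}(n,p)$, one has $\{Y\ge m\}=\{U_{(m)}\le p\}$ for the $m$-th order statistic $U_{(m)}$, whose density is $\frac{n!}{(m-1)!\,(n-m)!}u^{m-1}(1-u)^{n-m}$; substituting $t=1-u$ in $\Pr(U_{(m)}\le p)=\int_0^p\frac{n!}{(m-1)!\,(n-m)!}u^{m-1}(1-u)^{n-m}\,du$ produces $G(p)$ directly.
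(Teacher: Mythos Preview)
Your proof is correct and follows essentially the same approach as the paper: differentiate both $\Pr(Y\ge m)$ and $G(p)$ with respect to $p$, obtain the common value $\frac{n!}{(m-1)!(n-m)!}p^{m-1}(1-p)^{n-m}$ via a telescoping sum on one side and the fundamental theorem of calculus on the other, and check agreement at $p=0$. Your additional order-statistics cross-check is a nice bonus not present in the paper.
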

\begin{proof}
Let us define $H(p):=\Pr(Y\geq m).$ We have $$H(p)= \sum_{j=m}^n { n \choose j} p^j (1-p)^{n-j}.$$ By taking  derivative of this function we get:
\begin{align*}
	H'(p)&=\sum_{j=m}^n \binom nj jp^{j-1}(1-p)^{n-j}
	-\sum_{j=m}^n \binom nj (n-j)p^j(1-p)^{n-j-1} \\
	&=n\sum_{j=m}^n \binom{n-1}{j-1} p^{j-1}(1-p)^{n-j}
	-n\sum_{j=m}^{n-1} \binom{n-1}j p^j(1-p)^{n-j-1} \\
	&=n\sum_{i=m-1}^{n-1} \binom{n-1}i p^i(1-p)^{n-1-i}
	-n\sum_{i=m}^{n-1} \binom{n-1}i p^i(1-q)^{n-1-i} \\
	&=n\binom{n-1}{m-1} p^{m-1}(1-p)^{n-m} = G'(p).
\end{align*}
 The proof is completed as we also have $G(0)= H(0)=0.$ 
\end{proof}

\begin{lemma} \label{lemma:mathover}
For any $\alpha >1$ and $m>0$, $\Pr[X\geq m]$ is minimized subject to $X \sim \text{Binomial}(n, m\alpha/n)$ when $n\rightarrow \infty$.
\end{lemma}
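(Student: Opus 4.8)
The plan is to identify the $n\to\infty$ value of $\Pr[X\ge m]$ explicitly as a Poisson tail and then to show that $g(n):=\Pr[\text{Binomial}(n,m\alpha/n)\ge m]$ is non‑increasing in the integer parameter $n$ on its range of definition $n\ge m\alpha$ (where $m\alpha/n\le 1$); together these give that the minimum over $n$ is attained in the limit. Write $\lambda:=m\alpha$, so $\lambda>m$ since $\alpha>1$. For the limit I would plug $p=\lambda/n$ into Lemma~\ref{lemmamath1} and substitute $t=1-u/n$, obtaining
\[
g(n)=\frac1{(m-1)!}\Bigl(\prod_{i=0}^{m-1}\bigl(1-\tfrac in\bigr)\Bigr)\int_0^{\lambda}u^{m-1}\bigl(1-u/n\bigr)^{n-m}\,du ,
\]
and then let $n\to\infty$: for $n\ge 2\lambda$ the integrand is bounded by $2^m u^{m-1}e^{-u}$, so dominated convergence together with $(1-u/n)^{n-m}\to e^{-u}$ and $\prod_{i}(1-i/n)\to1$ yields $g(n)\to\frac1{(m-1)!}\int_0^{\lambda}u^{m-1}e^{-u}\,du=G(m-1,\lambda)=\Pr[\text{Pois}(\lambda)\ge m]$. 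So it remains to prove $g(n)\ge g(n+1)$ for every valid $n$.

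For the monotonicity I would compare the pmfs $B_n(k)$ of $\text{Binomial}(n,\lambda/n)$ and $B_{n+1}(k)$ of $\text{Binomial}(n+1,\lambda/(n+1))$ — crucially these have the \emph{same} mean $\lambda$. A one‑line computation of consecutive ratios gives
\[
\frac{B_n(k+1)/B_{n+1}(k+1)}{B_n(k)/B_{n+1}(k)}=\frac{n-k}{\,n+1-k\,}\cdot\frac{\,n+1-\lambda\,}{\,n-\lambda\,},
\]
which exceeds $1$ exactly when $k<\lambda$; hence $B_n(\cdot)/B_{n+1}(\cdot)$ is unimodal in $k$, so $\{k:B_n(k)\ge B_{n+1}(k)\}$ is an interval $[a,b]$, and moreover $a\ge1$ because $B_n(0)=(1-\lambda/n)^n<(1-\lambda/(n+1))^{n+1}=B_{n+1}(0)$ by the classical monotonicity of $(1-\lambda/n)^n$. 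Thus $d(k):=B_n(k)-B_{n+1}(k)$ changes sign exactly twice, in the pattern $-,+,-$, while $\sum_k d(k)=0$ and $\sum_k k\,d(k)=0$. The remaining step is to convert this ``two sign changes plus two matching moments'' structure into the tail inequality $\sum_{j\ge k_0}d(j)\ge0$ for every $k_0\le\lambda$, and then to apply it at $k_0=m$ — which is legitimate precisely because $m\le\lambda$, i.e. $\alpha\ge1$ — giving $\Pr[\text{Binomial}(n,\lambda/n)\ge m]\ge\Pr[\text{Binomial}(n+1,\lambda/(n+1))\ge m]$, i.e. $g(n)\ge g(n+1)$.

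I expect this last conversion to be the main obstacle. The natural route is to track $S(k_0):=\sum_{j\ge k_0}d(j)$ and $T(k_0):=\sum_{j\ge k_0}S(j)=\sum_{j\ge k_0}(j-k_0+1)d(j)$: the two moment identities give $T(0)=T(1)=0$, the sign pattern of $d$ forces $S$ to increase, then decrease, then increase (with $S(0)=S(\infty)=0$), hence forces $T$ to be a unimodal ``valley'' and therefore $T(k_0)\le0$ for all $k_0\ge1$; then $S(k_0)=T(k_0)-T(k_0+1)\ge0$ on the entire decreasing stretch of $T$, and the delicate point is to verify that this stretch reaches $k_0=\lambda$ — equivalently, that the more concentrated of the two laws does not shed mass above any level $\le\lambda$, which is where the matching means must be used in an essential way. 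An alternative that sidesteps this discrete argument is to work directly from Lemma~\ref{lemmamath1}: using the recurrences for $\Pr[\text{Binomial}(n,p)\ge m]$, the desired $g(n)\ge g(n+1)$ is equivalent to
\[
\int_{\lambda/(n+1)}^{\lambda/n} t^{m-1}(1-t)^{n-m}\,dt\;\ge\;\frac{(\lambda/(n+1))^m\,(1-\lambda/(n+1))^{n+1-m}}{n+1-m},
\]
and iterating integration by parts on the left‑hand integral reduces it to a finite sum of boundary terms, each a close variant of the $m=1$ statement ``$(1-\lambda/n)^n$ is increasing in $n$''; this route is more mechanical but avoids locating the sign‑change crossover.
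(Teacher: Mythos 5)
Your approach is genuinely different from the paper's. The paper works directly with the integral representation of Lemma~\ref{lemmamath1}: it treats the (rescaled) difference $D_{n,\alpha}$ of the two tail integrals as a function of $\alpha$, differentiates in $\alpha$, shows the derivative changes sign at most once from $-$ to $+$ on $(1,n/m)$, and hence only needs to check the endpoints $\alpha=0$ and $\alpha=n/m$, where the beta integrals evaluate to closed form. Your route instead compares the pmfs of $\text{Binomial}(n,\lambda/n)$ and $\text{Binomial}(n+1,\lambda/(n+1))$ combinatorially. Your limit identification is correct: after substituting $t=1-u/n$ in the integral of Lemma~\ref{lemmamath1}, dominated convergence gives the incomplete-gamma form $\tfrac1{(m-1)!}\int_0^{\lambda}u^{m-1}e^{-u}\,du$, which is exactly $\Pr[\text{Pois}(\lambda)\ge m]$. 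Your consecutive-ratio computation is also correct: $\tfrac{B_n(k+1)/B_{n+1}(k+1)}{B_n(k)/B_{n+1}(k)}-1$ has the same sign as $\lambda-k$, so $B_n/B_{n+1}$ is unimodal with peak near $\lambda$, hence $d=B_n-B_{n+1}$ has sign pattern $-,+,-$; and $B_n(0)<B_{n+1}(0)$ does follow from $(1-\lambda/n)^n$ being increasing.

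However, there is a genuine gap at exactly the place you flag. Your machinery establishes: (i) $S(k_0)=\sum_{j\ge k_0}d(j)$ rises from $S(0)=0$, falls, then rises back to $0$, hence crosses zero exactly once (say at $c$, somewhere in $(a,b+1]$); (ii) consequently $T(k_0)=\sum_{j\ge k_0}S(j)$ is a single valley with $T(0)=T(1)=0$, hence $T\le0$ for all $k_0\ge1$. But the conclusion you need, $S(m)\ge0$ for every integer $m\le\lambda$, is equivalent to $c\ge\lambda$, and neither (i) nor (ii) locates $c$ relative to $\lambda$. The statement ``$T\le0$ everywhere'' (equivalently, a convex-order / stop-loss inequality, $\E[(X_n-k_0+1)^+]\le\E[(X_{n+1}-k_0+1)^+]$) is weaker than what you need; writing $S(k_0)=T(k_0)-T(k_0+1)$ just says $S\ge0$ on the decreasing stretch of $T$, and the decreasing stretch of $T$ ends at $c$ by definition, so the argument is circular. ``Two sign changes of $d$ plus two matching moments'' is not enough on its own: that data is symmetric under reflection of the index, so it cannot single out the half-line $k_0\le\lambda$ — some additional input (e.g.\ the second-moment gap, or the explicit location of the ratio's peak at $\lambda$, or a direct comparison at $k_0=\lfloor\lambda\rfloor$) must be used, and you haven't shown how. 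Your alternative suggestion (integration by parts on the beta integral from Lemma~\ref{lemmamath1}) is closer in spirit to the paper's calculus argument and is plausible, but it is only sketched and not carried out, so it does not close the gap either.
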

\begin{proof}
Let us define $X_n \sim \text{Binomial}(n, m\alpha/n)$ and $X_{n+1} \sim \text{Binomial}(n, m\alpha/(n+1))$
We need to show that $\Pr[X_n\geq m] \leq \Pr[X_{n+1}\geq m]$. decreasing function of $n$.
Using Lemma~\ref{lemmamath1}, we have

 $\Pr[X_n\geq m] = \frac{n!}{(m-1)!(n-m)!}G(n)$ where 
$$G(n, \alpha) = \frac{n!}{(m-1)!(n-m)!} \int_{1-m\alpha/n}^1 t^{n-m} (1-t)^{m-1}dt.$$
We have $$\frac{ \Pr[X_{n+1}\geq m]}{\Pr[X_n\geq m]} = \frac{(n-m+1)G(n+1)}{(n+1)G(n, \alpha)}.$$ Define $D_{n, \alpha} = (n-m+1)G(n+1)- (n+1)G(n).$ To complete the proof it suffices to show  $D_{n, \alpha} \leq 1.$

\newcommand{\al}{\alpha}
\newcommand{\pd}[2]{\frac{\partial #1}{\partial #2}}
\newcommand{\De}{D}

\begin{align} \nonumber
	&\pd{\De_{n,\al}}\al \, \al (m\al)^{-m} (n+1)^{m-1} \left(1-\frac{m\al}{n}\right)^{m-n}\\ \nonumber
&	= \De_{n,\al;1}:=
\left(1-\frac{m\al}{n}\right)^{m-n} \left(1-\frac{m\al}{n+1}\right)^{n-m+1}-\frac{n-m+1}{n}\,\left(\frac{n+1}{n}\right)^{m-1}.  
\end{align} 
Moreover, for any $\al\in(1,n/m)$, 
\begin{align}\nonumber
	&\pd{\De_{n,\al;1}}\al
	=\frac{(\al-1) m^2  n^{n-m} (n+1-m\al)^{n-m} }{(n+1)^{n-m+1} (n-m\al)^{n-m+1}} >0. 
\end{align}
As a result, sign of $\De_{n,\al;1}$ can only change from $-$ to $+$ as $\al$ increases from $1$ to $n/m$. So, $\pd{\De_{n,\al}}\al$ has the same sign pattern. To get $D_{n, \alpha} \leq 1$, it suffices to show that $\De_{n,0}\le0$ and $\De_{n,n/m}\le0$. 

Since for $\alpha =0$, we have $G(n,0)=0$ for all $n$, we obviously have $\De_{n,0}=0\le0$. We conclude the proof by noting the following.
\begin{align} \nonumber
	& \De_{n,n/m}=(n+1)G(n+1,n/m)-(n-m+1)G(n,n/m)\\ \nonumber
 & \le(n+1)G(n+1,(n+1)/m)-(n-m+1)G(n,n/m)\\ \nonumber
&=1\Big/\binom n{m-1}-1\Big/\binom n{m-1}=0.
\end{align}
\end{proof}

\begin{lemma}\label{lemma:extreme}

Let $\b p = (p_0,  \dots, p_n) \in [0,1]^n$ be an extreme point of function $F_m (\b p)$ defined below subject to $\sum_{i\in [n]} p_i=t$ for a fixed $t$. $$F_{m}(\b p) = \Pr\Big[\sum_{i \in [n]} x_i > m\Big],$$ where $x_1, \dots, x_n$ are independent Bernoulli random variables with $\E[x_i]=p_i$ for any $i\in [n]$. 
 The following holds for any $i, j\in [n]$. If $p_i\notin \{0, 1\}$ and $p_j \notin \{0, 1\}$, then $p_i = p_j$. 
\end{lemma}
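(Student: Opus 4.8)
I read ``extreme point of $F_m$'' as a point at which $F_m$ attains a local extremum on the feasible set $\{\b p\in[0,1]^n:\sum_i p_i=t\}$ (in the sequel it is used as a minimizer). The argument is a two‑sided exchange perturbation. Fix $i\neq j$ with $p_i,p_j\in(0,1)$. For $|\epsilon|$ small, the point $\b p^{\,\epsilon}$ obtained from $\b p$ by setting $p_i\mapsto p_i+\epsilon$ and $p_j\mapsto p_j-\epsilon$ is still feasible: the coordinate sum is preserved, and both perturbed coordinates remain in $[0,1]$ precisely because $p_i,p_j$ are strictly interior. So I would study $F_m(\b p^{\,\epsilon})$ as a function of $\epsilon$ near $0$.

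The key computation is to condition on $Z:=\sum_{l\neq i,j}x_l$, which is unaffected by the perturbation. Writing $a:=\Pr[Z>m]$, $b:=\Pr[Z>m-1]$, $c:=\Pr[Z>m-2]$, a routine case analysis over $(x_i,x_j)\in\{0,1\}^2$ gives
\[
F_m(\b p)=a+(p_i+p_j)(b-a)+p_ip_j\,\kappa,\qquad \kappa:=a-2b+c=\Pr[Z=m-1]-\Pr[Z=m].
\]
Since $p_i+p_j$ is invariant under the perturbation while $(p_i+\epsilon)(p_j-\epsilon)=p_ip_j+\epsilon(p_j-p_i)-\epsilon^2$, this yields the clean identity
\[
F_m(\b p^{\,\epsilon})-F_m(\b p)=\kappa\bigl(\epsilon(p_j-p_i)-\epsilon^2\bigr).
\]
If $\kappa\neq 0$, the right-hand side is smooth in $\epsilon$ with derivative $\kappa(p_j-p_i)$ at $\epsilon=0$; since $\b p$ is a local extremum and both signs of $\epsilon$ are admissible, this derivative must vanish, forcing $p_i=p_j$. (Equivalently, via the Lagrange condition: at an interior extremum the partials $\partial F_m/\partial p_i=\Pr[\sum_{l\neq i}x_l=m]$ must coincide for all interior $i$, and expanding $\sum_{l\neq i}x_l=Z+x_j$ reduces equality of these to $(p_j-p_i)\kappa=0$.)

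The main obstacle is the degenerate case $\kappa=0$, i.e.\ $\Pr[Z=m-1]=\Pr[Z=m]$: there $F_m$ is constant along the entire exchange segment through $\b p$, so the literal equality $p_i=p_j$ need not hold for that particular extreme point. I would circumvent this by fixing attention on an extreme point with the \emph{minimal} number of coordinates in $(0,1)$: if two of its interior coordinates were unequal, then $\kappa=0$ by the dichotomy above, and sliding along the flat segment until one coordinate hits $\{0,1\}$ would produce an equally good point with strictly fewer interior coordinates, a contradiction. Apart from this degeneracy, the only things to be careful about are deriving the conditioning formula for $F_m$ correctly and noting that strict interiority of $p_i$ and $p_j$ is exactly what makes the perturbation two‑sided.
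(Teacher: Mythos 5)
Your argument is essentially the paper's own proof: both condition on $Z:=\sum_{l\neq i,j}x_l$ (the paper's $X'$), express $F_m$ as a function of how the fixed sum $p_i+p_j$ is split between the two coordinates, differentiate to find the derivative proportional to $(p_j-p_i)\kappa$ with $\kappa=\Pr[Z=m-1]-\Pr[Z=m]$, conclude that at an extremum either $p_i=p_j$ or $F_m$ is flat along the exchange segment, and in the flat case slide one coordinate to the boundary using an extremal choice among extreme points (the paper maximizes the smallest nonzero coordinate, you minimize the number of interior coordinates; both work). Two places where your write-up is actually tighter than the paper's: the paper's displayed $\partial G/\partial d=2d\big(\Pr[X'=m]+\Pr[X'=m-1]\big)$ has a sign typo, the correct coefficient being $\Pr[X'=m]-\Pr[X'=m-1]$, i.e.\ your $\kappa$, so the consequent line $\Pr[X'=m]+\Pr[X'=m-1]=0$ should read $\Pr[X'=m]=\Pr[X'=m-1]$; and your explicit observation that the literal universally-quantified claim can fail when $\kappa=0$ and must be read as producing \emph{some} structured extreme point is a clarification the paper glosses over (its contradiction scaffolding implicitly does the same thing, and that weaker existence form is all that Lemma~\ref{lemma:oiu3obfj} uses).
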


\begin{proof}
We use proof by contradiction. If there does not exist such an extreme point, then let $\b p$ be an arbitrary extreme point with maximum $u_{\b p}$ defined as below. $$u_{\b p}=  \min_{i\in U}(p_i)$$ where $U= \{i: p_i\neq 0\}$. W.l.o.g, let $p_i = u_{\b p}$ and pick a $j\in [n]$ where $p_i < p_j$.
 To obtain a contradiction, we show that if $\b p$ is an extreme point, then it is possible to modify $p_i$ and $p_j$ without changing other elements of $\b p$ in a way that the values of $p_i+p_j$ and $F_m (\b p)$ are unchanged but $p_i$ is increased. This gives us a contradiction since by repeating this process one can increase  $u_{\b p}$.
 
  Let $X=\sum_{l\in n} x_l$ and $X' = X-x_i- x_j$. We have $$F_m (\b p) = \Pr[X'>m]+\Pr[X'=m]\Pr[x_i+x_j>0] + \Pr[X'=m-1]\Pr[x_i+x_j=2].$$
Define $d=(p_j - p_i)/2$ and $s= (p_i + p_j)/2$. We have $\Pr[x_i+x_j=2] = (s+d)(s-d) = s^2-d^2$ and $\Pr[x_i+x_j>0] = d^2+2s-s^2$. Let $$G(s, d) = \Pr[X'=m] (d^2+2s-s^2) + \Pr[X'=m-1](s^2-d^2).$$ 
Therefore, 
$$F_m (\b p) =\Pr[X'>m]+ G(s, d).$$
Note that $\frac{\partial G}{\partial d} = 2d(\Pr[X'=m] + \Pr[X'=m-1])$. By the assumption that $\b p$ is an extreme point, $p_i\notin \{0, 1\}$ and $p_j \notin \{0, 1\}$ we obtain that $\Pr[X'=m] + \Pr[X'=m-1] = 0$. This gives us the freedom to change values of $p_i$ and $p_j$ and set $p_i=p_j =s$ as it does not change the value of $F_m (\b p)$. Thus, we obtain a contradiction and the proof is completed.

\end{proof}

\begin{fact}\label{fact:thefact}
Given $n$ iid Bernoulli random variables $x_1, \dots, x_n$ with $\E[x_i] = p$, if $n\rightarrow \infty$, then for any $0\leq j\leq n$ we have $$\Pr\left[\sum_{i=1}^n x_i = j\right] = \frac{e^{-np}(np)^j}{j!}.$$
\begin{proof}
This is based on the relation between Poisson and Binomial distribution when 	$n\rightarrow \infty$.
\end{proof}

\end{fact}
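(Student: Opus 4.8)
The statement is the classical Poisson limit theorem, so the plan is first to pin down the limiting regime and then run the standard computation. The hypothesis should be read with $p = p_n$ varying so that $np$ stays equal to some fixed $\lambda \ge 0$ (equivalently, along any sequence with $np_n \to \lambda$); only in this regime does a nontrivial limit exist, and it is exactly the regime in which the fact gets used, namely to pass from $\mathrm{Binomial}$ tail probabilities to the Poisson quantity $G(x,\lambda)$ of \eqref{eq:closedformq}.

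With $\lambda := np$ held fixed, $S_n := \sum_{i=1}^n x_i \sim \mathrm{Binomial}(n, \lambda/n)$, so for a fixed integer $j \ge 0$ and all $n \ge j$,
$$\Pr[S_n = j] = \binom{n}{j}\Big(\frac{\lambda}{n}\Big)^{j}\Big(1 - \frac{\lambda}{n}\Big)^{n-j} = \frac{\lambda^j}{j!}\cdot\frac{n(n-1)\cdots(n-j+1)}{n^j}\cdot\Big(1 - \frac{\lambda}{n}\Big)^{n}\cdot\Big(1 - \frac{\lambda}{n}\Big)^{-j}.$$
The plan is then to let $n \to \infty$ factor by factor: $\tfrac{n(n-1)\cdots(n-j+1)}{n^j}=\prod_{i=0}^{j-1}(1 - i/n) \to 1$, the factor $(1 - \lambda/n)^{-j} \to 1$ since $j$ is fixed, and $(1 - \lambda/n)^n \to e^{-\lambda}$ by $n\log(1 - \lambda/n) = -\lambda + O(1/n)$. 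Multiplying the three limits yields $\Pr[S_n = j] \to \frac{\lambda^j}{j!}e^{-\lambda} = \frac{(np)^j e^{-np}}{j!}$, which is the claim. (An alternative, giving an explicit rate, is a coupling / Le Cam argument: the total variation distance between $S_n$ and $\mathrm{Pois}(\lambda)$ is at most $\sum_{i} p_i^2 = \lambda^2/n \to 0$, which immediately gives convergence of every point mass.)

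The only real subtlety — and the step I would flag — is the interpretation of ``$n \to \infty$'': one must not read it with $p$ literally fixed, because then $np \to \infty$ and $\Pr[S_n = j] \to 0$ for every $j$, so the displayed formula would be false. I would therefore phrase the fact with the scaling $np = \lambda$ fixed; under that reading everything above is routine, and the same three-factor computation also covers the general sequence version ($np_n \to \lambda$) verbatim, replacing $\lambda/n$ by $p_n$ and absorbing the resulting $o(1)$ terms.
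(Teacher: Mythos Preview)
Your proposal is correct and follows the same idea as the paper: both invoke the classical Poisson limit theorem for the binomial distribution. The paper's own proof is a single sentence citing this relation, whereas you supply the standard three-factor computation (and the Le~Cam alternative) and, importantly, make explicit the scaling assumption that $np$ is held fixed as $n\to\infty$, which the paper leaves implicit.
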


\noindent \textbf{Lemma~\ref{lemma:oiu3obfj}}. (restated)
\lemmaoiu

\begin{proof}
Let $x_1 \dots x_n$ denote a set of independent Bernoulli random that minimize $\Pr[\sum_{i=1}^n x_i > m]$ subject to $\sum_{i=1}^n x_i = \E[X]$.
By Lemma~\ref{lemma:extreme}, we know that any two variables $x_i$ and $x_j$ that are not deterministically zero or one are identical. Let $I = \{i\in[n]: \E[x_i]=1\}$. Moreover, w.l.o.g., assume none of the variables are deterministically zero. We have $\Pr\left[\sum_{i=1}^n x_i > m\right]  = \Pr[\sum_{i\notin M} x_i > m-|I|].$ Further by Lemma~\ref{lemma:mathover}, $\Pr[\sum_{i=1}^n x_i > m]$ is minimized when $n \rightarrow \infty$, which using Fact~\ref{fact:thefact}, leads to $$\Pr\left[\sum_{i\notin M} x_i \leq m - |I|\right] \leq \sum_{j=0}^{m-|I|} \frac{(\mu-|I|)^j e^{-(\mu-|I|)}}{j!}.$$ Recall that we have \begin{equation} \nonumber G(x, \lambda)=1-\sum_{i=0}^{x} \frac{\lambda^i e^{-\lambda}}{i!}.\end{equation}
Note that if $|I|> m$, then $\Pr[\sum_{i=1}^n x_i > m] = 1$, thus by considering all possible values of $0\leq |I| \leq m$ we get  $$\Pr\left[\sum_{i=0}^n x_i \geq m \right] \geq \min_{0\leq i\leq m}\left(1-\sum_{i=0}^{m-i} \frac{(\mu-i)^j e^{-(\mu-i)}}{j!}\right) = \min_{0\leq i \leq m} G(m-i, \mu-i).$$
\end{proof}

\noindent \textbf{Lemma~\ref{lemma:07432}.} (restated)
\lemmazerosever

\begin{proof}
Given a $\theta\in [0, 1]$ and an arbitrary $m\geq 2$, let $\b x = (x_1, \dots x_n)$ be a vector of independent Bernoulli random variables with means $\mu_1, \dots, \mu_n$ summing to $m\theta$ and let $ \b y = (y_1, \dots y_n)$ be a vector of independent Bernoulli random variables with expectations $\frac{m-1}{m}\mu_1,  \dots, \frac{m-1}{m}\mu_n$ summing up to $(m-1)\theta$. We will prove that for any such $\b x$ and $\b y$ we have \begin{equation} \label{eq:math1}\E\left[\frac{\min(\sum_{i=1}^n x_i, m)}{m}\right] \geq \E\left[\frac{\min(\sum_{i=1}^n y_i, m-1)}{m-1}\right]. \end{equation}
This shows that for a given $m$ and $\theta$, $$\min_{\b p \in M_{m , \theta}} H(m, \b p)$$ is an increasing function of $m$ and as a result for any $m$ we have  $$\min_{\b p \in M_{2,\theta}} H(2, \b p) \leq  \min_{\b p \in M_{m , \theta}} H(m, \b p).$$
To achieve this, for any $i$, we couple $x_i$ with $y_i$ so that 
$\E[y_i| x_i=0] = 0$ and $\E[y_i| x_i=1] = \frac{m-1}{m},$ and show that \begin{equation}\label{eq:math2}\E\left[K(\frac{m-1}{m}x_i, \dots, \frac{m-1}{m}x_n)\right] \geq \E \left[K(y_1, \dots, y_n)\right],\end{equation} where $K(z_1, \dots, z_n) = \min(\sum_{i=1}^n \frac{z_i}{m-1}, 1).$ Note that the left hand side of Equation~\ref{eq:math2}, is equal to that of  Equation~\ref{eq:math1} and similarly the right hand sides are equal too, thus proving the correctness of Equation~\ref{eq:math2} would complete the proof. Let $\bar{\b{x}} = $ be an arbitrary realization of the vector of random variables $\b x$. We show that for any value of $\bar{\b{x}}$ we have 
\begin{equation} \label{eq:math3} \E\left[K(\frac{m-1}{m}x_i, \dots, \frac{m-1}{m}x_n) |\b x = \bar{\b x} \right] \geq \E\left [K(y_1, \dots, y_n)|\b x = \bar{\b x}\right].\end{equation}
We know that if at least $m$ elements of $\bar{\b x}$ are equal to one, then the left hand side is equal to one as well. Also, function $K(.)$ is upper bounded by one thus, if $\sum_{i=1}^n \bar{x}_i > m-1$, the Equation~\ref{eq:math3} holds. Otherwise, if at most $m-1$ elements of $\bar{\b x}$ are equal to one then we have 
 \begin{equation} \nonumber \E\left[K(\frac{m-1}{m}x_i, \dots, \frac{m-1}{m}x_n) |\b x = \bar{\b x},  \sum_{i=1}^n \bar{x}_i \leq m-1 \right] = \min\left(\sum_{i=1}^n \frac{\bar{x}_i}{m}, 1\right) =  \frac{\sum_{i=1}^n\bar{x}_i}{m}. \end{equation}
  Recall that for any $y_i$ we have $\Pr[y_i|x_i=0] =0$, which results in $$\Pr\left[\sum_{i=1}^n y_i \leq m-1 | \b x = \bar{\b x} ,  \sum_{i=1}^n \bar{x}_i \leq m-1\right] = 1.$$ 
 This implies that  \begin{equation} \nonumber \E\left[K(y_i, \dots, y_n) |\b x = \bar{\b x},  \sum_{i=1}^n \bar{x}_i \leq m-1 \right] = \E\left[\min\left(\sum_{i=1}^n \frac{{y}_i}{m-1}, 1\right)\right] =  \E\left[\sum_{i=1}^n \frac{{y}_i}{m-1} \right] =  \frac{\sum_{i=1}^n\bar{x}_i}{m}, \end{equation} which completes the proof as we showed that Equation~\ref{eq:math3} holds for all possible realizations of $\b{x}$.

\end{proof}

\noindent \textbf{Claim~\ref{lem:Pt}}. (restated)
 \lemPt

\begin{proof}
Let $X = \sum_{i\in [n]} x_i$. We have $$\E[\min(X , 2)] = \Pr[X=1] + 2\Pr[X\geq 2] = 2 - \Pr[X=1] - 2 \Pr[X=0] $$
It is easy to verify that this function is minimized when variables $x_1, \dots, x_n$ are iid and $n\rightarrow \infty$ in which case, $X$ is a Poisson random variable with $\lambda = 2\theta$. For $X\sim \text{Poisson}(2\theta)$ we have $$\Pr[X=1] + 2 \Pr[X=0] = 2e^{-2\theta} + 2\theta^{-2\theta} = (2+2\theta)e^{-2\theta},$$
which results in $$\frac{\E[\min(X , 2)]}{2} = 1-(1+\theta)e^{-2\theta}.$$
\end{proof}

\begin{lemma} \label{lem:trick}
Given a real number $\alpha > 1.05$, an integer $m \geq 2000$, and a set of independent Bernoulli random variables $x_i, \dots, x_n$ with $\E[X] \geq \alpha m$ we have 
$$\Pr[X \geq m+1 ] \geq 0.9,$$
where $X= \sum_{i=1}^n x_i$.
\end{lemma}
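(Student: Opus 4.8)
The plan is to bound the complementary probability $\Pr[X \le m]$ by a multiplicative Chernoff lower-tail inequality and check that it falls below $1/10$. Write $\mu := \E[X]$, so $\mu \ge \alpha m$ and hence $m \le \mu/\alpha$. Then $\{X \le m\} \subseteq \{X \le \mu/\alpha\}$, so
\[
\Pr[X \le m] \;\le\; \Pr[X \le \mu/\alpha] \;=\; \Pr\bigl[X \le (1-\delta)\mu\bigr], \qquad \delta := 1 - \tfrac{1}{\alpha} \in (0,1).
\]
Since the $x_i$ are independent and $[0,1]$-valued, I would apply the standard bound $\Pr[X \le (1-\delta)\mu] \le \exp(-\delta^2\mu/2)$ and then use $\mu \ge \alpha m$ to get $\Pr[X \le m] \le \exp(-\delta^2\alpha m/2)$. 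It then suffices to verify $\delta^2 \alpha m/2 \ge \ln 10$ for all $\alpha > 1.05$ and all integers $m \ge 2000$.

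The remaining step is a short monotonicity argument. Put $g(\alpha) := \delta^2\alpha = (1-\tfrac1\alpha)^2\alpha = \alpha - 2 + \tfrac1\alpha$; then $g'(\alpha) = 1 - 1/\alpha^2 > 0$ for $\alpha > 1$, so $g$ is increasing, and $g(\alpha)\,m/2$ is also increasing in $m$. Hence the expression is minimized as $\alpha \downarrow 1.05$ with $m = 2000$, where $g(1.05) = 1.05 - 2 + \tfrac{20}{21} > 0.00238$, so $\delta^2\alpha m/2 > 0.00238 \cdot 1000 = 2.38 > \ln 10 \approx 2.303$. This gives $\Pr[X \le m] < e^{-2.38} < 0.1$, i.e.\ $\Pr[X \ge m+1] > 0.9$, as claimed.

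The main obstacle here is numerical rather than conceptual: the Chernoff estimate yields only about $e^{-2.38} \approx 0.092$, barely under the target $0.1$, so one must invoke the tight $\exp(-\delta^2\mu/2)$ form of the lower-tail bound (a cruder variant would not close the gap) and carry out the reduction to the single extremal pair $(\alpha, m) = (1.05, 2000)$ via the monotonicity of $g$ explicitly. Everything else is routine.
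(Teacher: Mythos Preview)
Your proof is correct and follows essentially the same approach as the paper: apply a multiplicative Chernoff lower-tail bound and verify the numerics at the extremal values $\alpha=1.05$, $m=2000$. The only cosmetic differences are that the paper uses the tighter form $\bigl(e^{-\delta}/(1-\delta)^{1-\delta}\bigr)^{\mu}$ of Chernoff (and an intermediate step rewriting $1.05m$ as $1.049(m+1)$), whereas you use the simpler $e^{-\delta^{2}\mu/2}$ form together with an explicit monotonicity argument in $\alpha$ and $m$; both routes close the same narrow numerical gap.
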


\begin{proof}
Note that for any $m \geq 2000$, 	we have $1.05m < 1.049(m+1)$; therefore, given that $\alpha \geq 1.05$ and $m\geq 2000$ we obtain $\E[X] \geq 1.049 (m+1).$ By Chernoff bound, for any $\delta>0$, we have $$\Pr\left(X < (1-\delta) \E[X]\right) < \left(\frac{e^{-\delta}}{(1-\delta)^{1-\delta}}\right)^{E[X]}.$$
In our case, we are interested in giving an upper bound for $\Pr[X < m+1]$, which is

$$\Pr\left[X < m+1\right] \leq \Pr\left[X< \frac{1}{1.049} \E[X]\right] < \left(\frac{e^{-0.047}}{(1-0.047)^{1-0.047}}\right)^{2000*1.05} < 0.1.$$  
This gives us $\Pr[X \geq m+1] > 0.9$ and completes the proof.
\end{proof}

\bibliographystyle{alpha}
\bibliography{ref}
	
\end{document}